\newcommand{\m}[1]{\mathsf{#1}}
\newcommand{\mc}[1]{\mathcal{#1}}
\renewcommand{\vec}[1]{\overline{#1}}
\renewcommand{\phi}{\varphi}
\renewcommand{\theta}{\vartheta}
\newcommand{\seqz}[2][n]{{#2_0},\dots,{#2_{#1}}}
\newcommand{\Var}{\mathcal Var}
\newcommand{\Const}{L}
\newcommand{\tool}{\texttt{ada}\xspace}
\renewcommand{\AA}{\mc A} 
\newcommand{\BB}{\mc B} 
\newcommand{\CC}{\mc C} 
\newcommand{\NN}{\mc N} 
\newcommand{\KK}{\mc K} 
\newcommand{\HH}{\mc H} 
\newcommand{\NFAfor}[1][\psi]{{\NN}_{#1}}
\newcommand{\inn}{\,{\in}\,} 
\newcommand{\eqn}{\,{=}\,} 
\newcommand{\modelsLTL}{\models_{\KK}}
\newcommand{\qd}{\mathit{qd}}
\newcommand{\NFApsi}[1][\psi]{{\NN}_{#1}}
\newcommand{\last}{\lambda} 
\newcommand{\inquotes}[1]{\ensuremath{\text{\textgravedbl}\!{#1}\!\text{\textacutedbl}}}
\newcommand{\domino}[4]{\text{\tikz[baseline=-0.5ex]{\node[scale=.6, inner sep=0pt]{$%
\left[\begin{array}{@{\,}l@{=}l@{\,}}{#1}&{#2}\\{#3}&{#4}\\\end{array}\right]$}}}}
\newcommand{\ints}{\mathit{int}}
\newcommand{\rats}{\mathit{rat}}
\newcommand{\reals}{\mathit{real}}
\newcommand{\MC}{\textup{MC}} 
\newcommand{\E}{\mathsf{E}\,}
\newcommand{\A}{\mathsf{A}\,}
\newcommand{\X}{\mathsf{X}\,}
\newcommand{\Xw}{\mathsf{Y}\,}
\newcommand{\G}{\mathsf{G}\,}
\newcommand{\F}{\mathsf{F}\,}
\newcommand{\U}{\mathrel{\mathsf{U}}}
\newcommand{\CTLsf}{CTL$^*_f$\xspace}
\newcommand{\mydds}{DDSA\xspace}
\newcommand{\LTLconf}[1][\CC]{\smash{\textup{LTL}_f^\BB}\xspace}
\newcommand{\CTL}{CTL$^*_f$\xspace }
\newcommand{\binit}{b_{\mathit{I}}} 
\newcommand{\bdummy}[1][b]{\underline{#1}} 
\newcommand{\alphainit}{\alpha_{\mathit{I}}} 
\newcommand{\goto}[1]{\mathrel{\raisebox{-2pt}{$\xrightarrow{#1}$}}}
\newcommand{\FRuns}{\mathit{FRuns}}
\newcommand{\trans}[1]{\Delta_{#1}} 
\newcommand{\update}{\mathit{update}} 
\newcommand{\guard}{\mathit{guard}} 
\newcommand{\vseq}[1][\theta]{\overline{#1}}
\newcommand{\hist}{h} 
\newcommand{\tonode}[1][{}]{\mathit{to}} 
\newcommand{\formulas}[1][V]{{\CC(#1)}} 
\newcommand{\leqn}{\,{\leq}\,} 
\newcommand{\geqn}{\,{\geq}\,} 
\newcommand{\pcnode}[3]{{#1}\nodepart{two}{#2}\nodepart{three}{#3}}
\newcommand{\Confs}{\mathcal{K}_\BB} 
\newcommand{\Conf}{K} 
\newcommand{\chS}{\mathit{checkState}} 
\newcommand{\chP}{\mathit{checkPath}} 
\newcommand{\toLTLconf}{\mathit{toLTL}_\KK} 
\newcommand{\lemref}[1]{Lem.~\ref{lem:#1}}
\newcommand{\defref}[1]{Def.~\ref{def:#1}}
\newcommand{\defsref}[2]{Defs.~\ref{def:#1} and~\ref{def:#2}}
\newcommand{\corref}[1]{Cor.~\ref{cor:#1}}
\newcommand{\secref}[1]{Sec.~\ref{sec:#1}}
\newcommand{\appref}[1]{App.~\ref{app:#1}}
\newcommand{\thmref}[1]{Thm.~\ref{thm:#1}}
\newcommand{\exaref}[1]{Ex.~\ref{exa:#1}}
\newcommand{\figref}[1]{Fig.~\ref{fig:#1}}
\renewcommand{\eqref}[1]{(\ref{eq:#1})}
\newenvironment{numberedlemma}[1]
  {\innernumberedlemma}
  {\endinnernumberedlemma}
\newenvironment{numberedtheorem}[1]
  {\innernumberedtheorem}
  {\endinnernumberedtheorem}
\newcommand{\casebox}[1]{\parbox{25mm}{\textbf{case} #1:}}
\algnewcommand\algorithmicswitch{\textbf{switch}}
\algnewcommand\algorithmiccase{\textbf{case}}
\algnewcommand\algorithmicassert{\texttt{assert}}
\algnewcommand\Assert[1]{\State \algorithmicassert(#1)}%
\tikzstyle{place}=[draw, circle, inner sep=1.5pt, line width=.7pt, scale=.6, minimum width=5mm]
\tikzstyle{trans}=[draw, rectangle, inner sep=1.5pt, line width=.7pt, scale=.6, minimum width=4mm, minimum height=4mm]
\tikzstyle{state}=[draw, circle, inner sep=1.5pt, line width=.7pt, scale=.6]
\tikzstyle{edge}=[draw, ->, line width=.5pt]
\tikzstyle{action}=[scale=.6]
\tikzstyle{caption}=[scale=.9]
\begin{document}
\title{CTL$^*$ model checking for data-aware dynamic systems with arithmetic%
\thanks{This work is partially supported by the UNIBZ projects DaCoMan, QUEST, SMART-APP, VERBA, and WineId.}
}
%
%
\author{Paolo Felli \and Marco Montali \and Sarah Winkler}
\authorrunning{P. Felli, M. Montali, S. Winkler}
%
\institute{Free University of Bolzano-Bozen, Bolzano, Italy
\email{\{pfelli,montali,winkler\}@inf.unibz.it}}
\maketitle              
\begin{abstract}
The analysis of complex dynamic systems is a core research topic in 
formal methods and AI, and combined modelling of systems with data has gained increasing importance
in applications such as business process management.
In addition, process mining techniques are nowadays used to automatically mine process models from event data, often without correctness guarantees.
Thus verification techniques for linear and
branching time properties are needed to ensure desired behavior.

Here we consider data-aware dynamic systems with arithmetic (DDSAs), which constitute a concise but expressive formalism of transition systems with linear arithmetic guards.
We present a CTL$^*$ model checking procedure for DDSAs that relies on 
a finite-state abstraction by means of a set of formulas that capture variable configurations.
Linear-time verification was shown to be decidable in specific classes of DDSAs where
the constraint language or the control flow are suitably confined. 
We investigate several of these restrictions for the case of CTL$^*$, with both positive and negative results:
CTL$^*$ verification is proven decidable for monotonicity and integer periodicity constraint 
systems, but undecidable for feedback free and bounded lookback systems.
To demonstrate the feasibility of our approach, we implemented it in the SMT-based prototype \tool, showing that many practical business process models can be effectively analyzed.
\keywords{verification \and CTL$^*$ \and counter systems \and arithmetic constraints \and SMT.}
\end{abstract}

\section{Introduction}
\label{sec:intro}

The study of complex dynamic systems is a core research topic in AI, with a long tradition in formal methods. It finds application in a variety of domains, such as notably business process management (BPM), where studying the interplay between control-flow and data has gained momentum~\cite{Reichert12,CGM13,CDMP18,DHLV18}. Processes are increasingly mined by automatic techniques~\cite{Aalst16,BaralG15} 
that lack any correctness guarantees, making verification even more important to ensure the desired behavior.
However, the presence of data pushes verification to the verge of undecidability due to an infinite state space. 
This is aggravated by the use of arithmetic, in spite of its importance for practical applications \cite{DHLV18}.
Indeed, model checking of transition systems operating on numeric data variables with arithmetic constraints is known to be undecidable, as it is easy to model a two-counter machine.

In this work, we focus on the concise but expressive framework of data-aware dynamic systems with arithmetic (DDSAs) \cite{LFM20,ada}, also known as counter systems~\cite{IbarraS99,ComonJ98,DD07}.
Several classes of DDSAs have been isolated where specific verification tasks are decidable, notably reachability~\cite{IbarraS99,FinkelL02,ComonJ98,BGI09} and 
linear-time model checking~\cite{DD07,DemriG08,DDV12,LFM20,ada}.
Far fewer results are known about the case of branching time, with the exception of flat counter systems where loops may not be nested \cite{DemriFGD10}, and
gap-order constraint systems where constraints are restricted to the form $x\,{-}\,y\:{\geq}\:2$~\cite{BP14,MT16}.
However, many processes in BPM and beyond fall into neither of these two classes, as illustrated by the example below.

\begin{example}
\label{exa:road fines}
The following DDSA $\BB$ models a management process for road fines by the Italian police~\cite{MannhardtLRA16}. It maintains seven so-called \emph{case data} variables (i.e., variables local to each process instance, called ``case'' in the BPM literature):
$a$ (amount), $t$ (total amount), $d$ (dismissal code), $p$ (points deducted),
$e$ (expenses), and time durations $\mathit{ds}$, $\mathit{dp}$, $\mathit{dj}$.
The process starts by creating a case, upon which the offender is notified within 90 days, i.e., 2160h (\textsf{send fine}).
If the offender pays a sufficient amount $t$, the process terminates via silent actions $\tau_1$, $\tau_2$, or $\tau_3$.
For the less happy paths, the \textsf{credit collection} action is triggered 
if the payment was insufficient; while
\textsf{appeal to judge} and \textsf{appeal to prefecture} reflect filed protests by the offender, which again need to respect certain time constraints.
\\
%
\resizebox{\columnwidth}{!}{
\centering
\begin{tikzpicture}[node distance=49mm,>=stealth', yscale=.95]
\node[state] (p1) {$\m p_1$}; 
\node[state, right of=p1, xshift=-13mm] (p2) {$\m p_2$}; 
\node[state, right of=p2, xshift=-3mm] (p3) {$\m p_3$}; 
\node[state, right of=p3, xshift=-7mm] (p4) {$\m p_4$}; 
\node[state, below of=p2, yshift=9mm] (end) {$\m {end}$}; 
\node[state, right of=p4, xshift=-3mm] (p5) {$\m p_5$}; 
\node[state, below of=p5,yshift=9mm] (p6) {$\m p_6$}; 
\node[state, left of=p6] (p7) {$\m p_7$}; 
\node[state, left of=p7] (p8) {$\m p_8$}; 
\draw[edge] (p1) -- 
  node[above, action] {\textsf{create fine}}
  node[below, action] {$a^w, t^w, d^w, p^w \geqn 0$}
  (p2);
\draw[edge] (p2) to[loop above, looseness=9]
  node[above, yshift=2.5mm,action] {\textsf{payment}}
  node[above, action] {$t^w \geqn 0$}
  (p2);
\draw[edge] (p2) -- 
  node[above, action] {\textsf{send fine}}
  node[below, action] {$0 \leqn \mathit{ds}^w \leqn 2160 \wedge e^w \geqn 0$}
  (p3);
\draw[edge] (p2) -- 
  node[left, action, near end] {\textsf{$\tau_1$}}
  node[left, action, near end, yshift=-4mm] {$d\,{\neq}\,0 \vee (p \eqn 0 \wedge t^r \geqn a^r)$}
  (end);
\draw[edge] (p3) to[loop above, looseness=9]
  node[above, yshift=2.5mm,action] {\textsf{payment}}
  node[above, action] {$t^w \geqn 0$}
  (p3);
\draw[edge] (p3) -- 
  node[above, action] {\textsf{insert notification}}
  (p4);
\draw[edge, rounded corners] (p3) -- ($(p3) + (-.7,-.7)$) --
  node[left, action, above] {\textsf{$\tau_2$}}
  node[left, action, below] {$t^r \geqn a^r\,{+}\,e^r$}
  ($(p3) + (-2.5,-.7)$) --(end);
\draw[edge] (p4) to[loop above, looseness=11, out=170, in=130]
  node[above, yshift=2.5mm,action] {\textsf{payment}}
  node[above, action] {$t^w \geqn 0$}
  (p4);
\draw[edge] (p4) to[loop right, looseness=11, out=110, in=70]
  node[right, yshift=2.5mm,action] {\textsf{add penalty}}
  node[right, action, xshift=1mm] {$a^w \geqn 0$}
  (p4);
\draw[edge, rounded corners] (p4) -- ($(p4) + (.2,-.3)$) --
  node[above, action] {\textsf{appeal to judge}}
  node[below, action] {$0 \leqn \mathit{dj}^w \leqn 1440 \wedge d^w \geqn 0$}
  ($(p5) + (-.2,-.3)$) -- (p5);
\draw[edge, rounded corners] (p4) -- ($(p4) + (-.5,-1.2)$) --
  node[above, action] {\textsf{credit collection}}
  node[below, action] {$t^r\,{<}\,a^r\,{+}\,e^r$}
  ($(p4) + (-2.6,-1.2)$) -- (end);
\draw[edge, rounded corners] (p4) -- ($(p4) + (-.8,-.5)$) --
  node[above, action] {\textsf{$\tau_3$}}
  node[below, action] {$t^r \geqn a^r\,{+}\,e^r$}
  ($(p4) + (-2.6,-.5)$) --(end);
\draw[edge, rounded corners] (p5) -- ($(p5) + (-.2,.3)$) --
  node[above, action] {\textsf{$\tau_5$}}
  node[below, action] {$d^r \eqn 0$}
  ($(p4) + (.2,.3)$) -- (p4);
\draw[edge, rounded corners] (p4) -- ($(p4) + (.4,-1.3)$) --
  node[above, action] {\textsf{appeal to prefecture}}
  node[below, action] {$0 \leqn \mathit{dp}^w \leqn 1440$}
  ($(p4) + (2.7,-1.3)$) -- (p6);
\draw[edge] (p6) to
  node[left, action, above] {\textsf{send to prefecture}}
  node[left, action, below] {$d^w \geqn 0$}
  (p7);
\draw[edge] (p7) to
  node[left, action, above] {\textsf{result prefecture}}
  node[left, action, below] {$d^r \eqn 0$}
  (p8);
\draw[edge, rounded corners] (p7) -- ($(p7) + (-.2,-.3)$) --
  node[left, action, above, near end] {\textsf{$\tau_6$}}
  node[left, action, below, near end] {$d^r \eqn 1$}
  ($(end) + (.4,-.3)$) -- (end);
\draw[edge, rounded corners] (p5) -- ($(p5) + (.3,0)$) -- ($(p5) + (.3,-3.2)$) --
  node[left, action, above] {\textsf{$\tau_4$}}
  node[left, action, below] {$d^r \eqn 2$}
  ($(p5) + (-7.7,-3.2)$)  -- (end);
\draw[edge, rounded corners] (p8) -- ($(p8) + (.5,.5)$) --
  node[left, action, above] {\textsf{notify}}
  ($(p8) + (2.5, .5)$) -- (p4);
\end{tikzpicture}
}
This model was generated from real-life logs
by automatic process mining techniques paired with domain knowledge \cite{MannhardtLRA16}, but without any correctness guarantee.
For instance, \emph{data-aware soundness}~\cite{FLM19,BatoulisHW17} requires that the process can always reach a final state from any reachable configuration, expressed by the branching-time property $\A\G \E\F \m{end}$.
This property is false here, as $\BB$ can get stuck in state
$\m p_7$ if $d\,{>}\,1$.
In addition, process-specific linear-time properties are needed, e.g., that a \textsf{send fine}
event is always followed by a sufficient payment (i.e., $\langle\mathsf{send\:fine}\rangle\top \to \F\langle\mathsf{payment}\rangle (t \geq a)$,
where $\langle \alpha \rangle$ is the next operator via action $\alpha$).
\end{example}

This example highlights how both linear-time and branching-time verification are needed. 
In this paper, we present a CTL$^*$ model checking algorithm for DDSAs, adopting a finite-trace semantics (\CTLsf)~\cite{MPRS20} to reflect the nature of processes as in \exaref{road fines}. More precisely, our approach can synthesize \emph{conditions} on the initial variable assignment such that a given property holds.
We then derive an abstract decidability criterion which is satisfied by
two practical DDSA classes that restrict the constraint language to
(a) monotonicity constraints~\cite{DD07,FLM19}, i.e., variable-to-variable or variable-to-constant comparisons over $\mathbb Q$ or $\mathbb R$, and
(b) integer periodicity constraints~\cite{DemriG08,Demri06}, i.e., variable-to-constant  and restricted variable-to-variable comparisons with modulo operators.
On the other hand, the restrictions known as \emph{feedback-freedom}~\cite{DDV12} and the more general \emph{bounded lookback}~\cite{ada} restrict the control flow of DDSAs such that
LTL$_f$ verification is decidable, but we show here that \CTLsf remains undecidable.

\noindent
In summary, we make the following contributions:
\begin{compactenum}
\item
We present a \CTLsf model checking algorithm for DDSAs; 
\item
As an abstract decidability criterion for our verification problem, we prove a termination condition for this algorithm (\corref{termination}); 
\item This result is used to show that \CTLsf verification is decidable for 
monotonicity constraint and integer periodicity constraint systems; 
\item The cases of feedback-free and bounded-lookback systems are undecidable; 
\item
We implemented our approach in the prototype \tool using SMT solvers as backends
and tested it on a range of business processes from the literature. 
\end{compactenum}
\smallskip

\noindent
The paper is structured as follows:
The rest of this section compiles related work.
In \secref{background} we recall preliminaries about DDSAs and CTL$^*_f$. \secref{ltl} is dedicated to LTL verification with \emph{configuration maps}, which is used by our model checking procedure in \secref{modelchecking}. After giving an abstract termination criterion, \secref{criteria}
presents decidability results for concrete DDSA classes.
We describe our implementation in \secref{implementation}. 
Complete proofs and experiments can be found in the appendix.

\paragraph{Related work.}
Verification of transition systems with arith\-metic constraints, also called counter systems,
has been studied in many areas including 
formal methods, database theory, and BPM.
Reachability was proven decidable for a variety of classes, e.g., reversal-bounded counter machines~\cite{IbarraS99}, finite linear~\cite{FinkelL02}, flat~\cite{ComonJ98}, and gap-order constraint (GC) systems~\cite{BGI09}.
Considerable work has also been dedicated to linear-time verification:
LTL model checking is decidable for monotonicity constraint (MC) systems~\cite{DD07}, even comparing variables multiple steps apart. 
DDSAs with MCs are also considered in \cite{FLM19} from the perspective of
 LTL with a finite-run semantics (LTL$_f$ \cite{dGV13}), giving an explicit procedure to compute finite, faithful abstractions.
Linear-time verification is also decidable for integer periodicity constraint systems, also with past time operators~\cite{Demri06,DemriG08}; and feedback-free systems, for an enriched constraint language that can refer to a read-only database~\cite{DDV12}.
Decidability of LTL$_f$ was also shown for systems with the 
abstract \emph{finite summary} property~\cite{ada}, which includes MC, GC, and systems with $k$-bounded lookback, the latter being a generalization of feedback freedom.

Branching-time verification was less studied:
Decidability of CTL$^*$ was proven for flat counter systems with Presburger-definable loop iteration~\cite{DemriFGD10}, even in NP~\cite{DemriDS18}. These results are orthogonal to ours:  we do not demand flatness, but our approach does not cover their results.
Moreover, it was shown that CTL$^*$ verification is decidable for pushdown systems, which can model counter systems with a single integer variable~\cite{FinkelWW97}.
For integer relational automata (IRA), i.e., systems with constraints $x\,{\geq}\,y$ or $x\,{>}\,y$ and domain $\mathbb Z$, CTL model checking is undecidable while the existential and universal fragments of CTL$^*$ remain decidable~\cite{Cerans94}.
For GC systems, which extend IRAs to constraints of the form $x-y \geq k$, the existential fragment of CTL$^*$ is decidable
while the universal one is not~\cite{BP14}.
A similar di\-cho\-tomy holds for the EF and EG fragments of CTL~\cite{MT16}.
A subclass of IRAs were considered in~\cite{CKL16,BG06}, allowing only periodicity and monotonicity constraints. While satisfiability of CTL$^*$ was proven decidable,
model checking is not (as already shown in~\cite{Cerans94}), though it is decidable for properties in the fragment CEF$^+$, an extension of the EF fragment~\cite{BG06}.
In contrast, rather than restricting temporal operators, we show decidability of model checking under an abstract property of the DDSA and the verified property.
This abstract property can be guaranteed by suitably constraining
the constraint class in the system, or the control flow.
More closely related is work by Gascon~\cite{Gascon09}, who shows decidability of CTL$^*$ model checking for counter systems that admit a \emph{nice symbolic valuation abstraction}, an abstract property which includes MC and integer periodicity constraint (IPC) systems.
The relationship between our decidability criterion and the property defined by Gascon will need further investigation.
Another difference is that we here adopt a finite-path semantics
for CTL$^*$ as e.g. considered in~\cite{SRM18}, since for the analysis of real-world processes such as business processes it is sufficient to consider finite traces.
%
On a high level, our method follows a common approach to CTL$^*$: the verification property is processed bottom-up,
and we compute solutions for each subproperty. These are then used to formulate an equivalent linear-time verification problem~\cite[p.429]{BK08}.
For the latter, we can partially rely on earlier work~\cite{ada}.

\section{Background}
\label{sec:background}

We start by defining the set of constraints over expressions of sort $\ints$, $\rats$, or $\reals$,
with associated domains $dom(\ints) = \mathbb Z$, $dom(\rats) = \mathbb Q$, and $dom(\reals) = \mathbb R$.
\begin{definition}
\label{def:constraint}
For a given set of sorted variables $V$, expressions $e_s$ of sort $s$ and atoms $a$ are defined as follows: \\
\begin{tabular}{r@{\,}l@{\quad}r@{\,}l}
$e_s$&:=\:$v_s\:\mid\:k_s\:\mid\:e_s\,{+}\,e_s\:\mid\:e_s\,{-}\,e_s\:$ &
$a$&:=\:$e_s\,{=}\,e_s\:\mid\:e_s\,{<}\,e_s\:\mid\:e_s\,{\leq}\,e_s\:\mid\:e_{\ints} \,{\equiv_n}\,e_{\ints}$ 
\end{tabular}\\
where $k_s\,{\in}\,dom(s)$, $v_s\,{\in}\,V$ has sort $s$, and $\equiv_n$ denotes equality modulo some $n\,{\in}\,\mathbb N$. A \emph{constraint} is then a quantifier-free boolean expression over atoms $a$.
\end{definition}
\noindent
The set of all constraints built from atoms over variables $V$ is denoted by $\CC(V)$.
For instance, $x \neq 1$, $x < y\,{-}\,z$, and $x\,{-}\,y = 2 \wedge y \neq 1$ are valid constraints
independent of the sort of $\{x, y, z\}$,
while $u \equiv_3 v + 1$ is a constraint for integer variables $u$ and $v$. 
We write $\Var(\phi)$ for the set of variables in a formula $\phi$.
For an assignment $\alpha$ with domain $V$ that maps variables to values in their domain,
and a formula $\phi$ we write $\alpha \models \phi$ if $\alpha$ satisfies $\phi$.

We are thus in the realm of SMT with linear arithmetic, which is decidable and admits \emph{quantifier elimination}~\cite{Presburger29}:
if $\phi$ is a formula in $\CC(X \cup \{y\})$,
thus having free variables $X \cup \{y\}$,
there is a quantifier-free $\phi'$ with free
variables $X$ that is equivalent to $\exists y. \phi$, 
i.e., $\phi'\,{\equiv}\,\exists y. \phi$,
where $\equiv$ denotes logical equivalence.
%

\subsection{Data-aware Dynamic Systems with Arithmetic}

From now on, $V$ will be a fixed, finite set of variables.
We consider two disjoint, marked copies of $V$,  $V^r = \{v^r \mid v\in V\}$ and $V^w = \{v^w \mid v\in V\}$, called the \emph{read} and \emph{write} variables.
They will refer to the variable values before and after a transition, respectively.
We also write $\vec V$ for a vector that contains the variables $V$ in an arbitrary but fixed order, and $\vec V^r$ and $\vec V^w$ for the vectors that order $V^r$ and $V^w$ in the same way.

\begin{definition}
\label{def:DDS}
A \emph{DDSA} $\BB=\langle B, \binit, \AA, T, B_F, V, \alphainit, \guard\rangle$ is
a labeled transition system where
\begin{inparaenum}[(i)]
\item $B$ is a finite set of \emph{control states}, with $\binit\inn B$ the initial one;
\item $\AA$ is a set of \emph{actions};
\item $T \subseteq B \times \AA \times B$ is a \emph{transition relation};
\item $B_F \subseteq B$ are \emph{final states};
\item $V$ is the set of \emph{process variables};
\item $\alphainit$ the \emph{initial variable assignment};
\item $\guard\colon \AA \mapsto \CC(V^r \cup V^w)$ specifies the \emph{executability constraints}. 
\end{inparaenum}
\end{definition}

\begin{example}
\label{exa:ddsas}
We consider the following DDSAs
$\BB$, $\BB_{\mathit{bl}}$, and $\BB_{\mathit{ipc}}$, where $x,y$ have domain $\mathbb Q$ and $u$, $v$, $s$ have domain $\mathbb Z$.
Initial and final states have incoming arrows and double borders, respectively;
$\alphainit$ is not fixed for now.\\[.5ex]
\resizebox{\textwidth}{!}{
\begin{tikzpicture}[node distance=32mm]
\node[state] (1) {$\m b_1$};
\node[state, right of=1,xshift=-5mm] (2) {$\m b_2$};
\node[state, right of=2, double] (3) {$\m b_ 3$};
\draw[edge] ($(1) + (-.4,0)$) -- (1);
\draw[edge] (1) -- node[above,action]{$\m a_1\colon[y^w > 0]$} (2);
\draw[edge] (2) to[loop left, out=120,in=60, looseness=6] node[action, xshift=-1mm,yshift=1mm]{$\m a_2\colon[x^w > y^r]$} (2);
\draw[edge] (2) -- node[above,action]{$\m a_3\colon[x^r = y^r]$} (3);
\begin{scope}[shift={(45mm,-0mm)}, node distance=27mm]
\node[state] (1)  {$\m b_ 1$};
\node[state, right of=1, xshift=-5mm] (2) {$\m b_2$};
\node[state, right of=2, xshift=2mm, double] (3) {$\m b_3$};
\draw[edge] ($(1) + (-.4,0)$) -- (1);
\draw[edge] (1) to node[above,action] {$[s^w = u^r]$} (2);
\draw[edge] (2) to node[above,action]{$[s^w = s^r + v^r]$} (3);
\draw[edge, rounded corners] (3) -- ($(3) - (0,.5)$)
  -- node[above,action, pos=0.55]{$[u^w = 0 \wedge v^w = 0]$}  ($(1) - (0,.5)$)  -- (1);
\draw[edge] (1) to[loop above, out=120,in=60, looseness=6] node[action, yshift=-1mm]{$[u^w > 0]$} (1);
\draw[edge] (2) to[loop above, out=120,in=60, looseness=6] node[action, yshift=-1mm]{$[v^w > 0]$} (2);
\end{scope}
\begin{scope}[shift={(85mm,-0mm)}, node distance=40mm]
\node[state] (1)  {$\m b_1$};
\node[state, right of=1, double] (2) {$\m b_2$};
\draw[edge] ($(1) + (-.4,0)$) -- (1);
\draw[edge] (1) to node[above,action] {$\m a_3\colon [v^r \eqn u^r \wedge u^r\,{>}\,9]$} (2);
\draw[edge] (1) to[loop above, out=120,in=60, looseness=6] node[action, yshift=-1mm]{$\m a_1\colon [u^w \equiv_7 v^r]$} (1);
\draw[edge] (1) to[loop below, out=-120,in=-60, looseness=6] node[action,right, xshift=3mm, yshift=2mm]{$\m a_2\colon [v^w \equiv_2 u^r]$} (1);
\end{scope}
\end{tikzpicture}
}
\end{example}
Also the system in \exaref{road fines} represents a DDSA.
If state $b$ admits a transition to $b'$ via action $a$, namely $(b, a, b')\in \Delta$, this is denoted by $b \goto{a} b'$.
A \emph{configuration} of $\BB$ is a pair $(b, \alpha)$ where $b\inn B$
and $\alpha$ is an assignment with domain $V$.
A \emph{guard assignment} is a function $\beta\colon V^r \cup V^w \mapsto D$.
For an action $a$, let $write(a) = \Var(\guard(a)) \cap V^w$.
As defined next, an action $a$ transforms a configuration $(b, \alpha)$ into a new configuration $(b', \alpha')$ by updating the assignment $\alpha$ according to the action guard, which can at the same time evaluate
conditions on the current values of variables and write new values:

\begin{definition}
A \mydds $\BB\,{=}\,\langle B, \binit, \AA, T, B_F, V, \alphainit, \guard\rangle$
\emph{admits a step} from configuration $(b, \alpha)$ to 
$(b', \alpha')$ via action $a$,
denoted $(b, \alpha) \goto{a} (b', \alpha')$,
if $b \goto{a} b'$,
$\alpha'(v) = \alpha(v)$ for all $v \in V\setminus write(a)$, and
the guard assignment $\beta$ given by
$\beta(v^r) = \alpha(v)$ and
$\beta(v^w) = \alpha'(v)$ for all $v \in V$
satisfies $\beta \models \guard(a)$.
\end{definition}

\noindent
For instance, for $\BB$ in \exaref{ddsas} and initial assignment 
$\alphainit(x) = \alphainit(y) = 0$, the initial configuration admits a step
$\smash{(\m b_1, \domino{x}{0}{y}{0}) \goto{\m{a}_1}
(\m b_{2}, \domino{x}{0}{y}{3})}$ with $\beta(x^r) = \beta(x^w) = \beta(y^r) = 0$ and
$\beta(y^w) = 3$.

A \emph{run} $\rho$ of a DDSA $\BB$ of length $n$ from configuration $(b, \alpha)$ is a sequence of steps
$\rho\colon(b, \alpha) = (b_0, \alpha_0) 
\goto{a_1} (b_1, \alpha_1)
\goto{a_2} \dots
\goto{a_n} (b_n, \alpha_n)$.
We also associate with $\rho$ the \emph{symbolic run} 
$\smash{\sigma\colon b_0 \goto{a_1} b_1 \goto{a_2} \dots \goto{a_n} b_n}$
where state and action sequences are recorded without assignments, and say that 
$\sigma$ is the \emph{abstraction} of $\rho$ (or, $\sigma$ \emph{abstracts} $\rho$). For some $m<n$, 
$\sigma|_{m}$ denotes the prefix of $\sigma$ that has $m$ steps.

\subsection{History Constraints}

In this section, we fix a
\mydds $\BB = \langle B, \binit, \AA, T, B_F, V, \alphainit, guard\rangle$.
We aim to build an abstraction of $\BB$ that covers the (potentially infinite)
set of configurations by finitely many states of the form $(b,\varphi)$,
where $b \inn B$ is a control state and $\varphi$  a formula
that expresses conditions on the process variables $V$.
A state $(b,\varphi)$ will thus represent all configurations $(b,\alpha)$ s.t. $\alpha \models \varphi$.
To mimic steps on the abstract level, we define below the $\update$ function
to express how such a formula $\phi$ is modified by executing an action.
First, let the \emph{transition formula} of action $a$
be $\trans{a}(\vec V^r, \vec V^w)\:{=}\:
\guard(a) \wedge \bigwedge_{v\in V\setminus\mathit{write}(a)} v^{w}\,{=}\,v^{r}$. 
Intuitively, this states conditions on variables \emph{before and after} executing $a$: $\guard(a)$ must be true and the values of all variables that are not written are propagated by inertia. 
As $\trans{a}$ has free variables $\vec V^r$ and 
$\vec V^w$, we write $\trans{a}(\vec X, \vec Y)$ for the formula
obtained from $\trans a$ by replacing $\vec V^r$ by $\vec X$ and $\vec V^w$ by $\vec Y$.

\begin{definition}
\label{def:update}
For a formula $\phi$ with free variables $V$ and action $a$, 
$\update(\phi, a) = \exists \vec U. \phi(\vec U) \wedge \Delta_a(\vec U, \vec V)$,
where $U$ is a set of variables that do not occur in $\phi$.
\end{definition}

Our approach generates an abstraction using formulas of a special shape called \emph{history constraints} \cite{ada}, obtained by iterated $\update$ operations in combination with a sequence of \emph{verification constraints} $\vec \theta$.
The latter will later be taken from the transition labels of an automaton for the verified property.
For now it is enough to consider $\vseq$ an arbitrary sequence of constraints 
with free variables $V$. Its prefix of length $k$ is denoted by $\vseq|_k$. 
We need a fixed set of placeholder variables $V_0$ that are disjoint from $V$,
and assume an injective variable renaming $\nu\colon V\,{\mapsto}\,V_0$.
Let $\phi_\nu$ be the formula $\phi_\nu = \bigwedge_{v\in V} v \eqn \nu(v)$.

\begin{definition}\label{def:history constraint}
For a symbolic run $\sigma\colon b_0 \goto{a_1} b_1 \goto{a_2} \dots \goto{a_n} b_n$,
and verification constraint sequence $\vseq= \langle \theta_0,\dots, \theta_n\rangle$,
the \emph{history constraint} $\hist(\sigma, \vseq)$ 
is given by
$\hist(\sigma, \vseq) \eqn \phi_\nu \wedge \theta_0$ if $n\,{=}\,0$, and 
$\hist(\sigma, \vseq) \eqn \update(\hist(\sigma|_{n-1}, \vseq|_{n-1}), a_{n}) \wedge \theta_n$ if $n\,{>}\,0$.
\end{definition}

Thus, history constraints are formulas with free variables $V \cup V_0$.
Satisfying assignments for history constraints are closely related to assignments in runs:%
\footnote{
\lemref{abstraction} is a slight variation of \cite[Lem. 3.5]{ada}:
\defref{history constraint} differs from history constraints in~\cite{ada} in that the initial assignment is not fixed. We provide a proof
in \appref{proofs}.}

\begin{lemma}
\label{lem:abstraction}
For a symbolic run $\sigma\colon b_0 \goto{a_1} b_1 \goto{a_2} \dots \goto{a_n} b_n$ and $\vseq= \langle \theta_0,\dots, \theta_n\rangle$,
$\smash[t]{\hist(\sigma,\vseq)}$
is satisfied by assignment $\alpha$ with domain $V\,{\cup}\,V_0$ iff 
$\sigma$ abstracts a run
$\smash{\rho \colon (b_0, \alpha_0) \goto{a_1} \dots \goto{a_n} (b_n, \alpha_n)}$
such that 
\begin{inparaenum}[\it (i)]
\item $\alpha_0(v) = \alpha(\nu(v))$, and
\item $\alpha_n(v) = \alpha(v)$ for all $v\in V$, and
\item $\alpha_i \models \theta_i$ for all $i$, $0\leq i \leq n$.
\end{inparaenum}
\end{lemma}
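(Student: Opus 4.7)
The plan is to proceed by induction on the length $n$ of the symbolic run $\sigma$, exploiting the recursive structure of \defref{history constraint} and the definition of $\update$ from \defref{update}.

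For the base case $n\eqn 0$, we have $\hist(\sigma,\vseq) \eqn \phi_\nu \wedge \theta_0$, and the only configuration of the run is $(b_0, \alpha_0)$. An assignment $\alpha$ on $V\cup V_0$ satisfies $\phi_\nu$ exactly when $\alpha(v) \eqn \alpha(\nu(v))$ for every $v \inn V$, so by setting $\alpha_0(v) \eqn \alpha(v)$ one immediately gets (i) and (ii), while (iii) reduces to $\alpha \models \theta_0$. The converse direction is symmetric: from a run $(b_0,\alpha_0)$ satisfying (iii), define $\alpha(v) \eqn \alpha_0(v)$ and $\alpha(\nu(v)) \eqn \alpha_0(v)$.

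For the inductive step, suppose the statement holds for $\sigma|_{n-1}$ and $\vseq|_{n-1}$. By \defref{history constraint} and \defref{update},
\[
\hist(\sigma,\vseq) \;\equiv\; \exists \vec U.\, \hist(\sigma|_{n-1},\vseq|_{n-1})(\vec U) \wedge \Delta_{a_n}(\vec U, \vec V) \wedge \theta_n,
\]
where in $\hist(\sigma|_{n-1},\vseq|_{n-1})(\vec U)$ the free variables $\vec V$ are substituted by $\vec U$, leaving $V_0$ untouched. For the ``only if'' direction, a satisfying assignment $\alpha$ for $\hist(\sigma,\vseq)$ yields values $\vec u$ for $\vec U$. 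Define $\alpha_{n-1}(v) \eqn u_v$ and let $\alpha'$ be the assignment on $V\cup V_0$ with $\alpha'(v)\eqn \alpha_{n-1}(v)$ and $\alpha'(\nu(v))\eqn \alpha(\nu(v))$; then $\alpha' \models \hist(\sigma|_{n-1},\vseq|_{n-1})$. By the induction hypothesis, $\sigma|_{n-1}$ abstracts a run ending in $(b_{n-1},\alpha_{n-1})$ and satisfying (i)--(iii) up to step $n{-}1$. The conjunct $\Delta_{a_n}(\vec U, \vec V)$ is by construction precisely the step condition of a \mydds, so the guard assignment $\beta(v^r)\eqn\alpha_{n-1}(v)$, $\beta(v^w)\eqn\alpha(v)$ satisfies $\guard(a_n)$ and leaves all non-written variables unchanged; hence $(b_{n-1},\alpha_{n-1}) \goto{a_n} (b_n,\alpha)$. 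Setting $\alpha_n \eqn \alpha$ and using $\alpha \models \theta_n$ closes (i)--(iii).

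The ``if'' direction is obtained by reversing this construction: given a run $\rho$ satisfying (i)--(iii), apply the induction hypothesis to the prefix ending in $(b_{n-1},\alpha_{n-1})$ to obtain an assignment $\alpha'$ on $V\cup V_0$ satisfying $\hist(\sigma|_{n-1},\vseq|_{n-1})$ with $\alpha'(v)\eqn\alpha_{n-1}(v)$ and $\alpha'(\nu(v))\eqn\alpha_0(v)$. Define $\alpha(v)\eqn\alpha_n(v)$ for $v\inn V$ and $\alpha(\nu(v))\eqn\alpha_0(v)$, and witness the existential by $\vec u$ with $u_v \eqn \alpha_{n-1}(v)$; the step relation in $\rho$ gives $\Delta_{a_n}(\vec U,\vec V)$, and (iii) at step $n$ gives $\alpha \models \theta_n$.

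The routine bits are straightforward; the main bookkeeping obstacle is keeping the three disjoint name spaces $V$, $V^r{\cup}V^w$, and $V_0$ aligned when performing the substitutions in $\update$ and the existential elimination, in particular ensuring that the placeholder variables $V_0$ are preserved across the inductive step (they appear only inside $\phi_\nu$ in the base case and are never bound by any $\exists \vec U$), so that the initial-assignment clause (i) propagates correctly to the full run.
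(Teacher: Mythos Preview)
Your proposal is correct and follows essentially the same approach as the paper's proof: induction on the length $n$ of $\sigma$, unfolding \defref{history constraint} and \defref{update} in the inductive step, and using the witnesses for $\vec U$ to reconstruct (or, conversely, to supply) the intermediate assignment $\alpha_{n-1}$. The paper merely organizes the two directions in the opposite order and spells out the guard-assignment computation slightly more explicitly; your additional remark about keeping $V_0$ untouched across the $\exists \vec U$ quantification is exactly the point that makes clause~(i) propagate.
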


\subsection{CTL$_f^*$}

For a DDSA $\BB$ as above,
we consider the following verification properties:
\begin{definition}
\label{def:ctlstar}
CTL$^*_f$ state formulas $\chi$ and path formulas $\psi$ are defined by the following grammar, for constraints $c \inn \CC(V)$ and control states $b\inn B$:
\begin{align*}
\chi &:= \top \mid c \mid b \mid 
\chi \wedge \chi \mid \neg \chi \mid  \E \psi &
\psi &:= \chi \mid \psi \wedge \psi \mid \neg \psi \mid \X \psi  \mid \G \psi \mid \psi \U \psi
\end{align*}
\end{definition}
We use the usual abbreviations $\F \psi = \top \U \psi$, $\chi_1 \vee\chi_2 = \neg(\neg \chi_1 \wedge \neg \chi_2)$, and $\A \psi = \neg \E \neg \psi$.
To simplify the presentation, we do not explicitly treat next state operators $\langle a\rangle$ via a specific action $a$, as used in \exaref{road fines}, 
though this would be possible (cf. \cite{ada}).
However, such an operator can be encoded by adding a fresh data variable $x$ to $V$,
the conjunct $x^w\eqn 1$ to $\guard(a)$, and $x^w\eqn 0$ to all other guards, and
replacing $\langle a\rangle \psi$ in the verification property
by $\X (\psi\wedge x=1)$.

The maximal number of nested path quantifiers in a formula $\psi$ is called the \emph{quantifier depth} of $\psi$, denoted by $\qd(\psi)$.
We adopt a finite path semantics for CTL$^*$~\cite{MPRS20}:
For a control state $b\in B$ and a state assignment $\alpha$, let $\FRuns(b,\alpha)$ be
the set of \emph{final runs}
$\smash{
\rho \colon (b,\alpha) = (b_0,\alpha_0) \goto{a_1} \dots \goto{a_n}
(b_n,\alpha_n)
}$
such that $b_n \in F$ is a final state.
The $i$-th configuration $(b_i, \alpha_i)$ in $\rho$ is denoted by $\rho_i$.

\begin{definition}
\label{def:semantics}
The semantics of CTL$^*_f$ is inductively defined as follows.
For a DDSA $\BB$ with configuration $(b,\alpha)$,
state formulas $\chi$,\:$\chi'$, and path formulas $\psi$,\:$\psi'$:

\noindent
\begin{tabular}{@{~}l@{ ~ }l}
$(b,\alpha) \models \top$ & \\
$(b,\alpha)\models c$  & iff $\alpha\models c$\\
$(b,\alpha) \models b'$ & iff $b = b'$\\
$(b,\alpha) \models \chi \wedge \chi'$ & iff 
$(b,\alpha)\models \chi$ and $(b,\alpha)\models \chi'$\\
$(b,\alpha) \models \neg \chi$ & iff $(b,\alpha) \not\models \chi$\\
$(b,\alpha) \models \E\psi$ & iff  $\exists \rho \in \FRuns(b,\alpha)$
such that $\rho \models \psi$
\end{tabular}
\smallskip

\noindent
where $\rho \models \psi$ iff $\rho,0 \models \psi$ holds, 
and for a run $\rho$ of length $n$ and all $i$, $0\,{\leq}\,i\,{\leq}\,n$:

\noindent
\begin{tabular}{@{ ~ }l@{ ~ }l}
$\rho,i \models \chi$ & iff $\rho_i \models \chi$ \\
$\rho, i \models \neg \psi$ &iff $\rho, i \not\models \psi$ \\
$\rho, i \models \psi \wedge \psi'$ &iff $\rho, i \models \psi$ and $\rho, i \models \psi'$ \\
$\rho,i \models \X \psi$ & iff $i < n$ and $\rho,i+1 \models \psi$\\
$\rho,i \models \G \psi$ & iff for all $j$, $i \leq j \leq n$, it holds that $\rho,j\models {\psi}$\\
$\rho,i\models {\psi}\U{\psi'}$ & iff $\exists k$ with $i+k \leq n$
such that $\rho,i+k\models \psi'$ \\
& and for all $j$, $0 \leq j < k$, it holds that $\rho,i+j\models {\psi}$.
\end{tabular}
\end{definition}

Instead of simply checking whether the initial configuration of a DDSA $\BB$ satisfies a \CTLsf property
$\chi$, we try to determine, for every state $b \in B$, which constraints on variables need to hold in order to satisfy $\chi$. As the number of configurations $(b,\alpha)$ of a DDSA $\BB$ is usually infinite, configuration sets cannot be enumerated explicitly. Instead, we represent a set of configurations as a \emph{configuration map} $K\colon B \mapsto \formulas$
that associates with every control state $b \in B$ a formula $K(b) \in \CC(V)$, 
representing all configurations $(b,\alpha)$ such that $\alpha \models K(b)$. 
Our aim is thus to compute a \emph{solution} $K$ to the following problem:

\begin{definition}[Verification problem]
\label{def:vp}
For a \mydds $\BB$ and state formula $\chi$, is there a configuration map $K$
such that $(b,\alpha) \models \chi$ iff $\alpha \models K(b)$, for all $b\inn B$?
\end{definition}

\noindent
We call the verification problem given by  $\BB$ and $\chi$ \emph{solvable} if a solution $K$ exists and can be effectively computed. 
For instance, for $\BB$ from \exaref{ddsas} and
$\chi_1 = \A \G (x\,{\geq}\,2)$, a solution is given by
$K = \{ \m b_1 \mapsto \bot,\ \m b_2 \mapsto x\,{\geq}\,2\wedge y\,{\geq}\,2,\ \m b_3 \mapsto x \geq 2\}$.
For $\chi_2 \models \E \X (\A \G (x \geq 2))$, a solution is
$K' = \{ \m b_1 \mapsto x\,{\geq}\,2,\ \m b_2 \mapsto y\,{\geq}\,2 ,\ \m b_3 \mapsto \bot\}$.
As $\m b_1$ is the initial state, $\BB$ satisfies $\chi_2$ with every initial assignment that sets $\alphainit(x) \geq 2$.
Note that a solution $K$ to the verification problem for $\BB$ and $\chi$ in particular allows to determine whether $(\binit,\alphainit) \models \chi$ holds, by testing $\alphainit \models K(\binit)$, so that $(\binit,\alphainit) \models \chi$ is decidable for $\BB$.

\section{LTL with Configuration Maps}
\label{sec:ltl}

Following a common approach to CTL$^*$ verification, our technique
processes the property $\chi$ bottom-up, computing solutions for each subformula $\E\psi$,
before solving a linear-time model checking problem $\chi'$ in which the solutions to subformulas appear as atoms.
Given our representation of sets of configurations,
we use LTL formulas where atoms are configuration maps, and
denote this specification language by $\LTLconf$.
For a given DDSA $\BB$, it is formally defined as follows:
\[
\psi\ :=\ 
K \mid
\psi \wedge \psi \mid \neg \psi \mid 
\X \psi \mid 
\G \psi \mid 
\psi \U \psi
\]
where $K\in \Confs$, for $\Confs$ is the set of configuration maps for $\BB$.
We again use a finite-trace semantics~\cite{dGV13}:
\begin{definition}
\label{def:ltl:semantics}
A run $\rho$ of length $n$ \emph{satisfies} an $\LTLconf$ formula $\psi$, denoted 
$\rho \modelsLTL \psi$, iff $\rho,0 \modelsLTL \psi$ holds, 
where for all $i$, $0 \leq i \leq n$:

\noindent
\begin{tabular}{@{}l@{~}l@{}}
$\rho,i \modelsLTL K$ & iff $\rho_i = (b,\alpha)$ and $\alpha\models K(b)$;\\
$\rho,i \modelsLTL \psi \wedge \psi'$ & iff 
$\rho,i \modelsLTL \psi$ and $\rho,i \modelsLTL \psi'$;\\
$\rho,i \modelsLTL \neg \psi$ & iff 
$\rho,i \not\modelsLTL \psi$;\\
$\rho,i \modelsLTL \X\psi$ & iff $i<n$ and 
 $\rho,i{+}1 \modelsLTL \psi$;\\
$\rho,i \modelsLTL \G\psi$ & iff 
$\rho,i \modelsLTL \psi$ and ($i=n$ or
$\rho,i{+}1\modelsLTL \G\psi$);\\
$\rho,i \modelsLTL \psi \U \psi'$ & iff 
$\rho,i \modelsLTL \psi'$ or ($i\,{<}\,n$ and
$\rho,i \modelsLTL \psi$ and
$\rho,i{+}1\modelsLTL \psi \U \psi'$).
\end{tabular}
\end{definition}

Our approach to $\LTLconf$ verification proceeds along the lines of the LTL$_f$
procedure from~\cite{ada}, with the difference that simple constraint atoms are
 replaced by configuration maps.
In order to express the requirements on a run of a DDSA $\BB$ to satisfy an $\LTLconf$ formula $\chi$,
we use a nondeterministic automaton (NFA) $\NFApsi = (Q, \Sigma, \varrho, q_0, Q_F)$, where
the states $Q$ are a set of subformulas of $\psi$,
$\Sigma\,{=}\, 2^{\Confs}$ is the alphabet,
$\varrho$ is the transition relation,
$q_0 \in Q$ is the initial state, and
$Q_F\subseteq Q$ is the set of final states.
The construction of $\NFAfor[\psi]$ is standard~\cite{GMM14,ada}, 
treating configuration maps for the time being as propositions;
but for completeness it is described in \appref{nfa}.
For instance, for a configuration map $K$, $\psi = \F K$ corresponds to the NFA
\smash{\raisebox{-1mm}{\begin{tikzpicture}[node distance=20mm]
 \node[state, minimum width=6mm] (0) {$\psi$};
 \node[state, right of=0, double, minimum width=6mm] (1) {$\top$};
 \draw[edge] (0) -- node[action, above] {$K$} (1);
\draw[edge] ($(0) + (-.4,-.2)$) -- (0);
\draw[->] (0) to[loop left, in=170, out=190, looseness=8] (0);
\draw[->] (1) to[loop left, in=-20, out=20, looseness=8] (1);
\end{tikzpicture}}}
and $\psi' = \X K$ to
\smash{\raisebox{-1mm}{\begin{tikzpicture}[node distance=15mm]
 \node[state, minimum width=6mm] (0) {$\psi'$};
 \node[state, right of=0, minimum width=6mm] (1) {$K$};
 \node[state, right of=1, double, minimum width=6mm] (2) {$\top$};
\draw[edge] ($(0) + (-.4,0)$) -- (0);
 \draw[edge] (0) -- (1);
 \draw[edge] (1) -- node[action, above] {$K$} (2);
\draw[->] (2) to[loop right, looseness=6]  (2);
\end{tikzpicture}}}.
(For simplicity, edges labels $\{K\}$ are shown as $K$,
and edge labels $\emptyset$ are omitted.)

For $w_i\in \Sigma$, i.e., $w_i$ is a set of configuration maps, 
$w_i(b)$ denotes the formula $\bigwedge_{K\in w}K(b)$.
Moreover, for $w = \seqz[n]{w}\in \Sigma^*$ and a symbolic run $\sigma\colon b_0 \goto{a_1} b_1 \goto{a_2} \dots \goto{a_n} b_n$, let
$w\otimes \sigma$ denote the sequence of formulas $\langle w_0(b_0), \dots, w_{n}(b_n)\rangle$, 
i.e., the component-wise application of $w$ to the control states of $\sigma$.
A word $\seqz{w}\in \Sigma^*$ is \emph{consistent} with a run 
$(b_0, \alpha_0) 
\goto{a_1} (b_1, \alpha_1)
\goto{a_2} \dots
\goto{a_n} (b_n, \alpha_n)$
if $\alpha_i \models w_i(b_i)$ for all $i$, $0\,{\leq}\,i\,{\leq}\,n$. 
The key correctness property of $\NFAfor[\psi]$ is the following (cf.~\cite[Lem. 4.4]{ada}, and see \appref{nfa} for the proof adapted to $\LTLconf$):

\begin{lemma}
\label{lem:nfa}
$\NFAfor[\psi]$ accepts a word that is consistent with a run $\rho$ iff 
$\rho \modelsLTL \psi$.
\end{lemma}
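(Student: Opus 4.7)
The plan is to prove the equivalence by induction on the structure of $\psi$, mirroring the standard correctness argument for LTL$_f$-to-NFA translations as in [GMM14, ada]. The only novelty relative to [ada, Lem. 4.4] is that the atoms of $\psi$ are configuration maps $K\inn\Confs$ rather than plain constraints over $V$, so the proof should reduce almost entirely to checking the atomic case and then invoking the propositional argument verbatim.

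For the base case I would show that, for a letter $w_i\,{\subseteq}\,\Confs$ and a run position $\rho_i\eqn(b_i,\alpha_i)$, consistency unfolds to $\alpha_i \models \bigwedge_{K\inn w_i} K(b_i)$, which by \defref{ltl:semantics} is exactly the condition that $\rho,i \modelsLTL K$ for every $K\inn w_i$. Thus configuration-map atoms behave under the consistency relation exactly the way propositional atoms behave under a valuation, so a letter faithfully records which atomic subformulas of $\psi$ hold at that position of $\rho$.

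For the inductive step I would argue the two directions separately. For the ``only if'' direction, given an accepting NFA run $q_0 \goto{w_0} q_1 \goto{w_1} \cdots \goto{w_n} q_{n+1}$ whose word is consistent with $\rho$, I would prove by induction on $n-i$ that the state $q_i$ encodes a path formula satisfied at $\rho,i$, using the transition rules of the construction recalled in \appref{nfa} to unfold $\U$, $\G$, and $\X$; taking $i\eqn 0$ then yields $\rho \modelsLTL \psi$. For the ``if'' direction, I would canonically build a word $w_0 \ldots w_n$ where $w_i$ collects exactly those atomic configuration-map subformulas of $\psi$ satisfied at $\rho_i$; consistency with $\rho$ is then immediate, and an accepting NFA run is produced by greedily choosing, at each step, a successor state whose encoded obligation is discharged by the suffix $\rho_i,\dots,\rho_n$.

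The main obstacle is bookkeeping rather than conceptual: the construction in \appref{nfa} encodes NFA states as obligations about what must still hold, and the correspondence between transitions and the inductive unfolding of $\U$, $\G$, and $\X$ must be spelled out precisely, with particular care at the last run position, where $\X$-obligations must already be discharged and $\G\psi$ collapses to $\psi$. Since these finite-trace subtleties are handled exactly as in [ada, Lem. 4.4] once configuration maps are treated propositionally, the full result follows by reusing that proof with the atomic case above as the only genuine modification.
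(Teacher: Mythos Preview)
Your proposal is correct in substance and lands on exactly the right reduction: once configuration-map atoms are shown to behave propositionally under the consistency relation, the argument is that of \cite[Lem.~4.4]{ada} verbatim. The paper's own proof takes the same view but organizes it differently. Rather than a single induction on $\psi$, it factors the argument through three auxiliary lemmas about the transition function $\delta$ used to build $\NFApsi$: a one-step correctness lemma for $\delta$ (this is where the structural induction on $\psi$ actually lives), a lemma lifting this to $\delta^*$ by induction on run position (your $n{-}i$ induction), and a separate lemma handling the $\last$/$\neg\last$ markers that $\delta$ uses internally but that are stripped when forming $\NFApsi$, including the extra final state $q_e$ to which $\last$-labelled edges to $\inquotes{\top}$ are redirected. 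Your phrase ``particular care at the last run position'' is precisely this third lemma, and it is the one place where your sketch is thin: the translation between $\last$-annotated words accepted by $\delta^*$ and plain words accepted by $\NFApsi$ needs an explicit argument that stripping and re-inserting $\last$ preserves both well-formedness and consistency. Your outer framing ``induction on the structure of $\psi$'' is therefore slightly misleading, since $\NFApsi$ is not built compositionally from sub-NFAs; the structural induction sits inside the one-step $\delta$ lemma, and the top-level argument is over run positions. None of this is a gap, just a different decomposition: the paper's layered version isolates the finite-trace bookkeeping cleanly, while your collapsed version would be shorter but would have to interleave the $\last$-handling with the position induction.
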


\paragraph{Product Construction.}
As a next step in our verification procedure, given a control state $b$ of $\BB$, we aim to find (a symbolic representation of) all configurations $(b,\alpha)$ that satisfy an $\LTLconf$ formula $\psi$.
To that end, we combine $\NFAfor[\psi]$ with $\BB$ to a cross-product automaton $\smash{\NN^\psi_{\BB,b}}$.
For technical reasons, when performing the product construction, the steps in $\BB$ need to be shifted by one with respect to the steps in $\NFAfor[\psi]$. 
Hence, given $b\inn B$, let $\BB_b$ be the \mydds obtained from $\BB$ by adding
a dummy initial state $\smash{\bdummy}$, so that $\BB_b$ has state set $\smash{B' = B\cup\{\bdummy\}}$ and 
transition relation $\smash{T' = T \cup \{(\bdummy, a_0, b)\}}$ for a fresh action $a_0$ with $\guard(a_0) = \top$.

\begin{definition}
\label{def:product construction}
The \emph{product automaton} $\smash{\NN^\psi_{\BB,b}}$ is defined
for an $\LTLconf$ formula $\psi$, a DDSA $\BB$, and a control state $b\in B$.
Let  $\BB_b = \langle B', \bdummy, \AA, T', B_F, V, \alphainit, \guard\rangle$ and $\NFAfor[\psi]$ as above. Then $\smash{\NN^\psi_{\BB,b}=(P, R, p_0, P_F)}$ is as follows:
\begin{compactitem}[$\bullet$]
\item
$P \subseteq B' \times Q \times \CC(V\cup V_0)$, i.e., 
states in $P$ are triples $(b, q, \varphi)$ such that 
\item
the initial state is $p_0=(\bdummy, q_0, \phi_\nu)$; 
\item
if $b \goto{a} b'$ in $T'$, $q \goto{w} q'$ in $\NFAfor[\psi]$, 
and $\update(\varphi, a) \wedge w(b')$ is satisfiable,
there is a transition  $(b,q, \varphi) \goto{a,w} (b',q', \varphi')$ in $R$ 
such that $\varphi' \equiv \update(\varphi, a) \wedge w(b')$;
\item
$(b',q', \varphi')$ is in the set of final states $P_F \subseteq P$ iff $b'\in B_F$, and $q'\in Q_F$.
\end{compactitem}
\end{definition}

\begin{example}
\label{exa:pc}
Consider the DDSA $\BB$ from \exaref{ddsas}, and let
$K = \{ \m b_1 \mapsto \bot,\ \m b_2 \mapsto x\,{\geq}\,2 \wedge y\,{\geq}\,2,\ \m b_3 \mapsto x\,{\geq}\,2\}$. The property $\psi = \X K$ is captured by the NFA 
\smash{\raisebox{-1mm}{\begin{tikzpicture}[node distance=15mm]
 \node[state, minimum width=6mm] (0) {$\psi$};
 \node[state, right of=0, minimum width=6mm] (1) {$K$};
 \node[state, right of=1, double, minimum width=6mm] (2) {$\top$};
\draw[edge] ($(0) + (-.4,0)$) -- (0);
 \draw[edge] (0) -- (1);
 \draw[edge] (1) -- node[action, above] {$K$} (2);
\draw[->] (2) to[loop right, looseness=6]  (2);
\end{tikzpicture}}}.
The product automata $\smash{\NN^\psi_{\BB,\m b_1}}$ and $\smash{\NN^\psi_{\BB,\m b_2}}$ are as follows:

\smallskip
\resizebox{\columnwidth}{!}{
\begin{tikzpicture}[node distance=10mm,>=stealth']
\tikzstyle{node} = [draw,rectangle split, rectangle split parts=3,rectangle split horizontal, rectangle split draw splits=true, inner sep=3pt, scale=.7, rounded corners]
\tikzstyle{goto} = [->]
\tikzstyle{action}=[scale=.6, black]
\tikzstyle{constr}=[scale=.5, black]
\tikzstyle{accepting state} = [fill=red!80!black!15]
\node[node] (0)  {\pcnode{$\underline{\m b}$}{$\psi$}{$x\eqn x_0 \wedge y \eqn y_0$}};
\node[node, below of = 0] (1)
  {\pcnode{$\m b_1$}{$K$}{$x\eqn x_0 \wedge y \eqn y_0 $}};
\node[node, below of = 1] (2)
  {\pcnode{$\m b_2$}{$\top$}{$x\eqn x_0 \wedge x\,{\geq}\,2\wedge y\,{\geq}\,2$}};
\node[node, below of = 2, xshift=-22mm, accepting state] (3a)
  {\pcnode{$\m b_3$}{$\top$}{$x\eqn x_0 \eqn y \wedge x_0\,{\geq}\,2$}};
\node[node, below of = 2, xshift=22mm] (3b)  
  {\pcnode{$\m b_2$}{$\top$}{$x\,{\geq}\,y \wedge y\,{\geq}\,2 \wedge x_0\,{\geq}\,2$}};
\node[node, below of = 3b, accepting state] (4)
  {\pcnode{$\m b_3$}{$\top$}{$x\eqn y \wedge y\,{\geq}\,2\wedge x_0\,{\geq}\,2$}};
\draw[edge] ($(0) + (-1.6,0)$) -- (0);
\draw[edge] (0) -- node[action, left] {$\m a_0$} (1);
\draw[edge] (1) -- node[action, right, yshift=0mm] {$K$} node[action, left] {$\m a_1$} (2);
\draw[edge] (2) -- node[action, left, xshift=-2mm] {$\m a_3$} (3a);
\draw[edge] (2) -- node[action, left, xshift=-2mm] {$\m a_2$} (3b);
\draw[edge] (3b) -- node[action, left] {$\m a_3$} (4);
\draw[->] (3b) to[loop, min distance=4mm, out=20, in=50, looseness=5] node[action, above] {$\m a_2$} (3b);
\end{tikzpicture}
\begin{tikzpicture}[node distance=10mm,>=stealth']
\tikzstyle{node} = [draw,rectangle split, rectangle split parts=3,rectangle split horizontal, rectangle split draw splits=true, inner sep=3pt, scale=.7, rounded corners]
\tikzstyle{goto} = [->]
\tikzstyle{action}=[scale=.6, black]
\tikzstyle{constr}=[scale=.5, black]
\tikzstyle{accepting state} = [fill=red!80!black!15]
\node[node] (0)  {\pcnode{$\underline {\m b}$}{$\psi$}{$x\eqn x_0 \wedge y \eqn y_0$}};
\node[node, below of = 0] (1)
  {\pcnode{$\m b_2$}{$K$}{$x\eqn x_0 \wedge y \eqn y_0 $}};
\node[node, below of = 1, xshift=-24mm, accepting state] (2a)
  {\pcnode{$\m b_3$}{$\top$}{$x\eqn x_0 \eqn y \eqn y_0 \wedge y_0\,{\geq}\,2$}};
\node[node, below of = 1, xshift=24mm] (2b)  
  {\pcnode{$\m b_2$}{$\top$}{$y \eqn y_0\wedge x\,{\geq}\,y \wedge y\,{\geq}\,2$}};
\node[node, below of = 2b, accepting state] (3)
  {\pcnode{$\m b_3$}{$\top$}{$x \eqn y \eqn y_0 \wedge y_0\,{\geq}\,2$}};
\draw[edge] ($(0) + (0,.4)$) -- (0);
\draw[edge] (0) -- node[action, left] {$\m a_0$} (1);
\draw[edge] (1) -- node[action, right, xshift=4mm] {$K$} node[action, left, xshift=-4mm] {$\m a_3$} (2a);
\draw[edge] (1) -- node[action, right, xshift=4mm] {$K$} node[action, left, xshift=-4mm] {$\m a_2$} (2b);
\draw[edge] (2b) -- node[action, left] {$\m a_3$} (3);
\draw[->] (2b) to[loop, min distance=4mm, out=20, in=50, looseness=5] node[action, above] {$\m a_2$} (2b);
\end{tikzpicture}
}
where the shaded nodes are final. 
The formulas in nodes were obtained by applying quantifier
elimination to the formulas built using $\update$ according to
\defref{product construction}.
$\smash{\NN^\psi_{\BB,\m b_3}}$ consists only of the dummy transition and has no final states. 
\end{example}

\defref{product construction} need not terminate if infinitely many non-equivalent formulas occur in the construction. In \secref{modelchecking} we will identify a criterion that guarantees termination. Beforehand, we state the key correctness property, which lifts~\cite[Thm. 4.7]{ada} to LTL with configuration maps. Its proof is similar to the respective result in \cite{ada}, but we provide it in the appendix for completeness.

\begin{theorem}
\label{thm:model:checking}
Let $\psi \inn \LTLconf$ and $b\inn B$ such that there is a finite product automaton
$\smash{\NN_{\BB,b}^\psi}$.
Then there is a final run $\rho \colon (b,\alpha_0) \to^* (b_F, \alpha_F)$ of $\BB$ 
such that $\rho \modelsLTL \psi$, iff
$\smash{\NN_{\BB,b}^\psi}$ has a final state $(b_F, q_F, \phi)$ for some $q_F$ and $\phi$
such that $\phi$ is satisfied by assignment $\gamma$ with
$\gamma(\vec {V_0}) \eqn \alpha_0(\vec V)$ and $\gamma(\vec V) \eqn \alpha_F(\vec V)$.
\end{theorem}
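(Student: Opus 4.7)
The plan is to prove the theorem by establishing that accepting paths in $\smash{\NN_{\BB,b}^\psi}$ correspond, up to the logical equivalence tolerated in \defref{product construction}, to pairs $(\rho, w)$ consisting of a final run $\rho$ of $\BB$ from $(b, \alpha_0)$ together with a word $w$ accepted by $\NFAfor[\psi]$ that is consistent with $\rho$. Combined with \lemref{nfa}, which relates the existence of such pairs to $\rho \modelsLTL \psi$, and with \lemref{abstraction}, which relates history constraints to concrete runs, this yields the claim.

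For the forward direction, assume $\rho \colon (b,\alpha_0) \goto{a_1} \dots \goto{a_n} (b_F, \alpha_F)$ with $\rho \modelsLTL \psi$. By \lemref{nfa}, there is a word $w = \langle w_0, \dots, w_n\rangle$ consistent with $\rho$ and accepted by $\NFAfor[\psi]$ via a run $q_0 \goto{w_0} q_1 \goto{w_1} \dots \goto{w_n} q_F$. I combine this NFA run with $\rho$, prefixed by the dummy step $\bdummy \goto{a_0} b$, to obtain a candidate path in $\smash{\NN_{\BB,b}^\psi}$ through states $(b'_i, q_i, \phi_i)$, where $b'_0 = \bdummy$, $b'_{i+1}$ is the $i$-th control state of $\rho$, and the $\phi_i$ are computed as in \defref{product construction}. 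I show by induction on $i$ that (i) each satisfiability side-condition of \defref{product construction} holds, since the existence of $\rho$ combined with the consistency of $w$ supplies an explicit witness via \lemref{abstraction}, and (ii) $\phi_i$ is equivalent to the history constraint $\hist(\sigma|_i, w \otimes \sigma|_i)$, where $\sigma$ abstracts the combined symbolic run of $\BB_b$. A final application of \lemref{abstraction} to $\rho$ and $w \otimes \sigma$ then shows that the assignment $\gamma$ with $\gamma(\nu(\vec V)) = \alpha_0(\vec V)$ and $\gamma(\vec V) = \alpha_F(\vec V)$ satisfies the terminal $\phi_n$; since $b_F \in B_F$ and $q_F \in Q_F$, the corresponding state is indeed final.

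For the backward direction, an accepting path in $\smash{\NN_{\BB,b}^\psi}$ projects to a symbolic run $\sigma$ of $\BB_b$ (dropping the dummy prefix) and to an accepting run of $\NFAfor[\psi]$ on some word $w$. By the same induction, the formula $\phi$ at the final product state is equivalent to $\hist(\sigma, w \otimes \sigma)$. A satisfying assignment $\gamma$ of $\phi$ then, via \lemref{abstraction}, yields a concrete run $\rho \colon (b, \alpha_0) \to^* (b_F, \alpha_F)$ abstracting $\sigma$, with $\alpha_0$ and $\alpha_F$ read off from $\gamma$, and with $\alpha_i \models w_i(b_i)$ at each position. The latter means $w$ is consistent with $\rho$, so \lemref{nfa} delivers $\rho \modelsLTL \psi$.

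The main obstacle will be the careful bookkeeping of (i) the one-step shift induced by the dummy initial state $\bdummy$ aligning with the initial NFA state $q_0$, and (ii) the inductive verification that each $\phi_i$ produced by \defref{product construction} -- which is only required to be logically equivalent to $\update(\phi_{i-1}, a) \wedge w_{i-1}(b'_i)$, typically obtained via quantifier elimination -- remains equivalent to the corresponding history constraint. Once this correspondence is in place, the satisfiability side-condition in \defref{product construction} introduces no loss, since it prunes only transitions that no concrete run could realize, so the iff is preserved in both directions.
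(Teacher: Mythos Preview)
Your proposal is correct and follows essentially the same approach as the paper: both directions combine \lemref{nfa}, \lemref{abstraction}, and an inductive argument showing that the formula component of each product-automaton node is equivalent to the corresponding history constraint. The only structural difference is that the paper factors this inductive correspondence into a separate lemma (\lemref{pc}), whereas you inline it; your handling of the dummy-state shift and the satisfiability side-condition matches the paper's treatment.
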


Thus, witnesses for $\psi$ correspond to paths to final states in the product automaton:
e.g., in $\smash{\NN^\psi_{\BB,\m b_1}}$ in \exaref{pc} the formula in the left final node is satisfied by $\gamma(x_0) = \gamma(x) = \gamma(y) = 3$
and $\gamma(y_0) = 0$.
For $\alpha_0$ and $\alpha_2$ such that $\alpha_0(\vec V) = \gamma(\vec V_0) = \{x\mapsto 3, y\mapsto 0\}$ and
$\alpha_2(\vec V) = \gamma(\vec V)= \{x\mapsto 3, y\mapsto 3\}$ there is a witness run for $\psi$ from 
$(\m b_1, \alpha_0)$ to $(\m b_1, \alpha_2)$, e.g.,
$\smash{(\m b_1, \domino{x}{3}{y}{0}) \goto{\m{a}_1}
(\m b_{2}, \domino{x}{3}{y}{3})\goto{\m{a}_3}
(\m b_{3}, \domino{x}{3}{y}{3})}$.

\section{Model Checking Procedure}
\label{sec:modelchecking}

We use the results of the previous section to define a model checking procedure for \CTL formulas, shown in \figref{algs}.
First, we explain the tasks achieved by the three
mutually recursive functions.

$\bullet~$\textit{$\chS(\chi)$} returns a configuration map representing the set of configurations that satisfy a state formula $\chi$.
In the base cases, it returns a function that checks the respective condition, for boolean operators we recurse on the arguments, and 
for a formula $\E \psi$ we proceed to the $\chP$ procedure.

$\bullet~$\textit{$\chP(\psi)$} returns a configuration map $K$ that
represents all configurations which admit a path that satisfies 
$\psi$.
First, $\toLTLconf$ is used to obtain an equivalent $\LTLconf$ formula $\psi'$ (which entails the computation of solutions for all subproperties $\E\eta$).
Then solution $K$ is constructed as follows:
For every control state $b$, we build the product automaton $\smash{\NN^{\psi'}_{\BB,b}}$, and collect the set $\Phi_F$ of formulas in final states. Every $\phi \in \Phi_F$ 
encodes runs from $b$ to a final state of $\BB$ that satisfy $\psi'$.
The variables $\vec {V_0}$ and $\vec V$ in $\phi$ act as placeholders
for the initial and the final values of the runs, respectively.
By $\phi(\vec V, \vec U)$ we rename variables to
use instead $\vec V$ at the start and $\vec U$ at the end,
we quantify existentially over $\vec U$ (as the final valuation is irrelevant),
and take the disjunction over all $\phi \in \Phi_F$. 
The resulting formula $\phi'$ encodes all final runs from $b$ that satisfy $\psi'$, so we set $K(b):=\phi'$.

$\bullet~$\textit{$\toLTLconf(\psi)$} computes an $\LTLconf$ formula equivalent to a path formula $\psi$. To this end, it performs two kinds of replacements in $\psi$:
(a) $\top$, $b\inn B$, and constraints $c$ are represented as configuration maps; and
(b) subformulas $\E\eta$ are replaced by their solutions $\Conf_{\m E\eta}$, which are computed by a recursive call to $\chP$.

\begin{figure}[ht]
\small
\begin{algorithmic}[1]
\Procedure{$\chS$}{$\chi$}
\Switch{$\chi$}
\State \textbf{case} $\top$, $b\in B$, or $c\in \mathcal C$:
  \textbf{return} $K_\chi$ 
\State \casebox{$\chi_1 \wedge \chi_2$}
  \textbf{return} $\chS(\chi_1) \wedge \chS(\chi_2)$ 
\State \casebox{$ \neg \chi$}
  \textbf{return} $\neg \chS(\chi)$ 
\State \casebox{$ \E\psi$}
  \textbf{return} $\chP(\psi)$
\EndSwitch
\EndProcedure
\end{algorithmic}
\begin{algorithmic}[1]
\Procedure{$\chP$}{$\psi$}
\State{$\psi':=\toLTLconf(\psi)$}
\For{$b\in B$}
\State{$(P, R, p_0, P_F) := \NN^{\psi'}_{\BB,b}$}
\Comment{product automaton for $\psi'$, $\BB$, and $b$}
\State{$\Phi := \{ \phi \mid (b_F,q_F,\phi) \in P_F\}$}
\Comment{collect formulas in final states}
\State{$\Conf(b) := \bigvee_{\phi \in \Phi} \exists \vec U. \phi(\vec V,\vec U)$}
\EndFor
\State \textbf{return} $K$
\EndProcedure
\end{algorithmic}
\begin{algorithmic}[1]
\Procedure{$\toLTLconf$}{$\psi$}
\Switch{$\psi$}
\State \textbf{case} $\top$, $b\in B$, or $c\in \mathcal C$:
  \textbf{return} $K_\psi$ 
\State \casebox{$\psi_1 \wedge \psi_2$}
  \textbf{return} $\toLTLconf(\psi_1) \wedge \toLTLconf(\psi_2)$ 
\State \casebox{$ \neg \psi$}
  \textbf{return} $\neg \toLTLconf(\psi)$
\State \casebox{$ \E\psi$}
  \textbf{return} $\chP(\psi)$
\State \casebox{$\X\psi$}
  \textbf{return} $\X\:\toLTLconf(\psi)$
\State \casebox{$\G\psi$}
  \textbf{return} $\G\:\toLTLconf(\psi)$
\State \casebox{$\psi_1 \U \psi_2$}
  \textbf{return} $\toLTLconf(\psi_1) \U \toLTLconf(\psi_2)$
\EndSwitch
\EndProcedure
\end{algorithmic}
\caption{Model checking procedure.\label{fig:algs}}
\end{figure}

To represent the base cases of formulas as configuration maps in \figref{algs}, we define
$K_\top :=(\lambda \_.\:\top)$,
$K_b := (\lambda b'.\:b \eqn b'\:?\:\top : \bot)$ for all $b\inn B$,
and
$K_c := (\lambda \_.\:c)$ for constraints $c$.
We also write 
$\neg K$ for $(\lambda b. \neg K(b))$ and 
$K \wedge K'$ for $(\lambda b. K(b) \wedge K'(b))$.
The next example illustrates the approach.

\begin{example}
\label{exa:mc}
Consider $\chi \eqn \E \X (\A \G (x\,{\geq}\,2))$ and the DDSA $\BB$ in \exaref{ddsas}.
To get a solution $K_1$ to 
$\chS(\chi) = \chP(\psi_1)$ for $\psi_1 = \X (\A \G (x\,{\geq}\,2))$,
we first compute an equivalent $\LTLconf$ formula $\psi_1' = \X \Conf_2$,
where $\Conf_2$ is a solution to $\A \G (x\,{\geq}\,2) \equiv \neg \E \F (x\,{<}\,2)$.
To this end, we run $\chP(\psi_2)$ for $\psi_2 = \F (x\,{<}\,2)$,
which is represented in $\LTLconf$ as $\psi_2' = \F (\Conf_{x<2})$
with NFA 
\smash{\raisebox{-1mm}{\begin{tikzpicture}[node distance=20mm]
 \node[state, minimum width=6mm] (0) {$\psi_1'$};
 \node[state, right of=0, double, minimum width=6mm] (1) {$\top$};
 \draw[edge] (0) -- node[action, above] {$K_{x < 2}$} (1);
\draw[edge] ($(0) + (-.4,-.2)$) -- (0);
\draw[->] (0) to[loop left, in=170, out=190, looseness=8] (0);
\draw[->] (1) to[loop left, in=-20, out=20, looseness=8] (1);
\end{tikzpicture}}}.
Next, $\chP$ builds $\smash{\NN^{\psi_2'}_{\BB,b}}$ for all  states $b$.
For instance, for $\m b_2$ we get:
\\
\centerline{
\begin{tikzpicture}[node distance=10mm,>=stealth',rounded corners=2mm]
\tikzstyle{node} = [draw,rectangle split, rectangle split parts=3,rectangle split horizontal, rectangle split draw splits=true, inner sep=3pt, scale=.7, rounded corners]
\tikzstyle{goto} = [->]
\tikzstyle{action}=[scale=.6, black]
\tikzstyle{constr}=[scale=.5, black]
\tikzstyle{accepting state} = [fill=red!80!black!15]
\node[node] (0) {\pcnode{$\m b_0$}{$\psi_2'$}{$x\eqn x_0 \wedge y \eqn y_0$}};
\node[node, below of = 0, xshift=-26mm] (1a)
  {\pcnode{$\m b_2$}{$\psi_2'$}{$x\eqn x_0 \wedge y \eqn y_0 $}};
\node[node, below of = 0, xshift=26mm] (1b)
  {\pcnode{$\m b_2$}{$\top$}{$x\eqn x_0 \wedge y \eqn y_0 \wedge x\,{<}\,2$}};
\node[node, below of = 1a] (2a)
  {\pcnode{$\m b_2$}{$\top$}{$y \eqn y_0 \wedge 2\,{>}\,x\,{\geq}\,y$}};
\node[node, below of = 1b, accepting state] (2b)  
  {\pcnode{$\m b_3$}{$\top$}{$x\eqn x_0 \eqn y_0 \eqn y \wedge x_0\,{<}\,2$}};
\node[node, below of = 2a] (3a)  
  {\pcnode{$\m b_2$}{$\top$}{$y_0 \eqn y \wedge 2\,{>}\,y \wedge x\,{\geq}\,y$}};
\node[node, below of = 1a, xshift=-45mm] (2c)  
  {\pcnode{$\m b_2$}{$\psi_1'$}{$x\,{\geq}\,y \eqn y_0$}};
\node[node, below of = 2b] (3b)
  {\pcnode{$\m b_2$}{$\top$}{$y \eqn y_0 \wedge x\,{\geq}\,y \wedge x_0\,{<}\,2$}};
\node[node, below of = 3a, accepting state] (4a)
  {\pcnode{$\m b_3$}{$\top$}{$x\eqn y \eqn y_0 \wedge y\,{<}\,2$}};
\node[node, below of = 3b, accepting state] (4b)
  {\pcnode{$\m b_3$}{$\top$}{$x\eqn y \eqn y_0 \wedge x_0\,{<}\,2$}};
\draw[edge] ($(0) + (-1.8,0)$) -- (0);
\draw[edge] (0) -- (1a);
\draw[edge] (0) -- node[action, above, very near end, xshift=3mm, yshift=-1mm] {$\Conf_{x<2}$} (1b);
\draw[edge] (1a) -- (2a);
\draw[edge] (1b) -- (2b);
\draw[edge,dashed] (1a) -- ($(1a.west) + (-1,0)$);
\draw[edge,dashed] (2c) -- ($(2c) + (0,-1)$);
\draw[edge] (1a) -- node[action, above, xshift=-2mm, yshift=1mm] {$\Conf_{x<2}$} (2b);
\draw[edge] (2a) --  (3a);
\draw[edge] (2b) -- (3b);
\draw[edge, bend right=10] (1a.west) to (2c);
\draw[edge] (1b.east) -| ($(1b) + (2.4,-1)$) |- (3b.east);
\draw[->] (2c) to[loop left, min distance=5mm, out=177, in=183, looseness=8] (2c);
\draw[edge] (2c) --node[action, above, yshift=0mm] {$\Conf_{x<2}$} (2a);
\draw[edge, bend right] (2c.-15) to node[action, left, near end, yshift=0mm] {$\Conf_{x<2}$} (4a.175);
\draw[edge] (3a) -- (4a);
\draw[edge] (3b) -- (4b);
\draw[->] (3b) to[loop left, min distance=4mm, out=177, in=183, looseness=4] (3b);
\draw[->] (3a) to[loop, min distance=4mm, out=177, in=183, looseness=4] (3a);
\node[right of = 2b, scale=.7, red!80!black, xshift=13mm] {$\phi_1$};
\node[right of = 4a, scale=.7, red!80!black, xshift=9mm] {$\phi_2$};
\node[right of = 4b, scale=.7, red!80!black, xshift=9mm] {$\phi_3$};
\end{tikzpicture}}\\
\noindent
where dashed arrows indicate transitions to
non-final sink states.
For $\vec U = \langle \hat x, \hat y\rangle$, and the formulas $\phi_1$, $\phi_2$, and $\phi_3$
in final nodes, we compute\\[.5ex]
\centerline{
$\begin{array}{rll}
\exists \vec U.\: \phi_1(\vec V, \vec U) &= \exists \hat x\,\hat y.\ 
 \hat x \eqn x \eqn \hat y \eqn y\wedge x < 2 &\equiv x < 2 \\
\exists \vec U.\: \phi_2(\vec V, \vec U) &= \exists \hat x\,\hat y.\ 
  \hat x \eqn \hat y \eqn y \wedge \hat y\,{<}\, 2 &\equiv y < 2 \\
\exists \vec U.\: \phi_3(\vec V, \vec U) &= \exists \hat x\,\hat y.\ 
  \hat x \eqn \hat y \eqn y \wedge x\,{<}\,2 &\equiv x < 2
\end{array}$}\\
so that $K_3:=\chP(\psi_2)$ sets
$K_3(\m b_2) = \bigvee_{i=1}^3 \exists \vec U.\: \phi_i(\vec V, \vec U) \equiv 
x\,{<}\,2 \vee y\,{<}\,2$.
For reasons of space, the constructions for $\m b_1$ and $\m b_3$
are shown in \exaref{mc:continued} in \appref{examples};
we obtain $K_3(\m b_1) = \top$ and $K_3(\m b_3) = x < 2$.
By negation, the solution $K_2$ to $\A \G (x \geq 2)$ is
$K_2 = \neg K_3 = \{ \m b_1 \mapsto \bot,\ \m b_2 \mapsto x \geq 2 \wedge y \geq 2,\ \m b_3 \mapsto x \geq 2\}$.
Now we can proceed with $\chP(\psi_1)$.
The NFA and product automata for $\psi_1' = \X \Conf_2$ are as shown in \exaref{pc} and
in a similar way as above we obtain
the solution $K_1$ for $\E \X \A \G (x \geq 2)$ as
$K_1 = \{ \m b_1 \mapsto x\,{\geq}\,2,\ \m b_2 \mapsto y\,{\geq}\,2 ,\ \m b_3 \mapsto \bot\}$.
Thus, $\BB$ satisfies the property for any initial assignment $\alphainit$  with $\alphainit(x) \geq 2$.
\end{example}

\noindent
Next we prove correctness of $\chS(\chi)$ under the condition that
it is defined, i.e., all required product automata are finite.
First we state our main result, 
but before giving its proof we show helpful properties of $\toLTLconf$ and $\chP$.

\begin{theorem}
\label{thm:main}
 For every configuration $(b,\alpha)$ of the DDSA $\BB$ and every  state property $\chi$,
 if $\chS(\chi)$ is defined then
$(b,\alpha) \models \chi$ iff $\alpha \models \chS(\chi)(b)$.
\end{theorem}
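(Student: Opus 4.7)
The plan is to prove the theorem by simultaneous induction on the structure of the formula, bundling together three statements:
\emph{(A)} for every state formula $\chi$, $(b,\alpha)\models\chi$ iff $\alpha\models\chS(\chi)(b)$;
\emph{(B)} for every path formula $\psi$, any run $\rho$ and index $i$, $\rho,i\models\psi$ iff $\rho,i\modelsLTL\toLTLconf(\psi)$;
\emph{(C)} for every path formula $\psi$, $\alpha\models\chP(\psi)(b)$ iff there is a final run $\rho\in\FRuns(b,\alpha)$ with $\rho\models\psi$, i.e., iff $(b,\alpha)\models\E\psi$.
The mutual calls make sense for induction because $\chS$ only invokes $\chP$ on strictly smaller subformulas, $\chP(\psi)$ invokes $\toLTLconf(\psi)$, and $\toLTLconf(\psi)$ only recurses into $\chP$ on strictly smaller subformulas $\E\eta$. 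Hence an induction on the number of symbols in the formula (or equivalently on quantifier depth combined with an inner structural induction on the path formula in \emph{(B)}) is well-founded.

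For \emph{(A)}, the base cases $\top$, $b$, and $c$ are immediate from the definitions of $K_\top$, $K_b$, $K_c$; the boolean cases follow directly from the induction hypothesis together with the pointwise definitions of $\neg K$ and $K\wedge K'$; and the case $\E\psi$ is exactly the claim of \emph{(C)} applied to $\psi$. For \emph{(B)}, I proceed by structural induction on $\psi$: for the propositional base cases ($\top$, $b\in B$, $c\in \CC(V)$), the semantics of $\rho,i\modelsLTL K_\psi$ per \defref{ltl:semantics} literally restates $\rho,i\models\psi$ per \defref{semantics}; the boolean and temporal cases $\neg,\wedge,\X,\G,\U$ are a one-line unfolding of the $\LTLconf$ semantics and the IH; and the crucial case $\E\eta$ — where $\toLTLconf$ returns the configuration map $\chP(\eta)$ — holds because $\rho,i\modelsLTL \chP(\eta)$ is, by \defref{ltl:semantics}, the statement that $\rho_i=(b,\alpha)$ with $\alpha\models\chP(\eta)(b)$, which by the IH of \emph{(C)} is equivalent to $\rho_i\models\E\eta$, i.e., to $\rho,i\models\E\eta$.

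The bulk of the work, and the main obstacle, lies in \emph{(C)}: showing that $\chP(\psi)$ correctly captures the configurations from which an $\E\psi$-witnessing final run exists. Let $\psi'=\toLTLconf(\psi)$; by the IH for \emph{(B)}, for every final run $\rho$ we have $\rho\models\psi$ iff $\rho\modelsLTL\psi'$. Thus $(b,\alpha)\models\E\psi$ is equivalent to the existence of a final run $\rho\colon(b,\alpha)\to^*(b_F,\alpha_F)$ with $\rho\modelsLTL\psi'$. Applying \thmref{model:checking} to $\psi'$, this is in turn equivalent to $\NN_{\BB,b}^{\psi'}$ having a final state $(b_F,q_F,\phi)$ and an assignment $\gamma$ satisfying $\phi$ with $\gamma(\vec{V_0})=\alpha(\vec V)$ and $\gamma(\vec V)=\alpha_F(\vec V)$. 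After the variable renaming $\phi(\vec V,\vec U)$ (so that $\vec V$ plays the role of the initial and $\vec U$ the role of the final valuation) and existential quantification over $\vec U$, this existence claim collapses exactly to $\alpha\models\exists\vec U.\,\phi(\vec V,\vec U)$. Taking the disjunction over all $\phi\in\Phi$ (the formulas in final states of $\NN^{\psi'}_{\BB,b}$) matches the definition of $\chP(\psi)(b)$ in \figref{algs}, giving $\alpha\models\chP(\psi)(b)$ iff $(b,\alpha)\models\E\psi$.

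The main difficulty, and where care is needed, is precisely in this last step: correctly tracking how the free variables $V_0$ and $V$ of the history-style formula $\phi$ in a final state play the roles of initial and final assignments, so that the substitution $\phi(\vec V,\vec U)$ followed by $\exists\vec U$ projects out the final valuation and leaves a constraint on the initial one. This hinges on \thmref{model:checking} (which is itself supported by \lemref{abstraction}) and on the hypothesis that $\chS(\chi)$ is defined, i.e., that all required product automata $\NN^{\psi'}_{\BB,b}$ are finite so that $\Phi$ is a finite disjunction and quantifier elimination applies. Under this hypothesis, combining the three clauses yields \thmref{main}.
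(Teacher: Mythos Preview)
Your proposal is correct and takes essentially the same approach as the paper: your statements \emph{(A)}, \emph{(B)}, \emph{(C)} are precisely the theorem, \lemref{toLTL}, and the property~($\star$)/\lemref{chP}, respectively, and your layered induction (quantifier depth outside, structural induction on $\psi$ for \emph{(B)} inside) is exactly how the paper organizes the argument. The only cosmetic difference is that you bundle everything into a single simultaneous induction, whereas the paper states \lemref{toLTL} and \lemref{chP} as separate lemmas before combining them; your formulation of \emph{(B)} for all indices $i$ is in fact slightly cleaner than the paper's $i=0$ version for handling the temporal cases.
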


\begin{lemma}
\label{lem:toLTL}
Let $\psi$ be a path formula with $\qd(\psi) \eqn k$.
Suppose that for all confi\-gurations $(b,\alpha)$ and path formulas $\psi'$ with $\qd(\psi')\,{<}\,k$, there is a $\rho' \in \FRuns(b,\alpha)$ with $\rho' \models \psi'$ iff $\alpha \models \chP(\psi')(b)$. Then
$\rho \models \psi$ iff $\rho \modelsLTL \toLTLconf(\psi)$.
\end{lemma}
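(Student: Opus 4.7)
The plan is to proceed by structural induction on the path formula $\psi$, using the hypothesis of the lemma for the base case $\psi = \E\eta$ and the inner inductive hypothesis for all other constructors. The statement to prove by induction is: for every finite run $\rho$ of $\BB$ and every position $i$ with $0 \leq i \leq n$ (where $n$ is the length of $\rho$), $\rho, i \models \psi$ iff $\rho, i \modelsLTL \toLTLconf(\psi)$. Specializing to $i=0$ yields the claim.

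For the base cases $\psi \in \{\top, b, c\}$, $\toLTLconf$ returns the configuration map $K_\psi$ defined right after Figure~\ref{fig:algs}, and the required equivalence is immediate by unfolding the two semantics: e.g., for $\psi = c$, we have $\rho, i \models c$ iff $\alpha_i \models c$ iff $\alpha_i \models K_c(b_i)$ iff $\rho, i \modelsLTL K_c$; the cases of $\top$ and $b$ are analogous using $K_\top$ and $K_b$.

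The critical base case is $\psi = \E\eta$, which is a state formula and hence also a path formula of quantifier depth exactly $k$, containing the subformula $\eta$ of quantifier depth $< k$. Here $\toLTLconf(\E\eta) = \chP(\eta)$ is a configuration map $K$. By \defref{semantics} of CTL$_f^*$, $\rho, i \models \E\eta$ iff $(b_i, \alpha_i) \models \E\eta$ iff there exists $\rho' \in \FRuns(b_i, \alpha_i)$ with $\rho' \models \eta$. Since $\qd(\eta) < k$, the hypothesis of the lemma applies and this is equivalent to $\alpha_i \models \chP(\eta)(b_i) = K(b_i)$, which by \defref{ltl:semantics} is $\rho, i \modelsLTL K = \toLTLconf(\E\eta)$.

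The inductive cases for the boolean and temporal connectives are mechanical: $\toLTLconf$ commutes with $\wedge$, $\neg$, $\X$, $\G$, $\U$, and the semantic clauses for these operators in \defref{semantics} and \defref{ltl:semantics} are syntactically identical once we apply the inner induction hypothesis to the immediate subformulas (noting that $\qd$ of a subformula is bounded by $\qd(\psi) = k$, so the outer hypothesis still applies whenever a nested $\E$ is reached). The only subtlety is to check that the finite-trace clauses for $\X$, $\G$, and $\U$ agree between the two semantics; indeed, in both cases $\X$ requires $i < n$ and $\G$, $\U$ are bounded by $n$, so the correspondence is direct. No step is particularly hard; the main care needed is to handle $\E\eta$ uniformly whether it occurs at the top level or nested under temporal/boolean operators, which is ensured by induction on $\psi$ rather than on $k$.
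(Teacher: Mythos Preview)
Your proposal is correct and follows essentially the same approach as the paper: structural induction on $\psi$, using the lemma's hypothesis to discharge the $\E\eta$ case and observing that the clauses of \defref{semantics} and \defref{ltl:semantics} coincide for the remaining connectives. Your explicit strengthening of the induction statement to all positions $i$ (rather than only $i=0$) is a sensible sharpening that the paper's proof leaves implicit; it is exactly what is needed to make the temporal cases $\X$, $\G$, $\U$ go through cleanly.
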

\begin{proof}[sketch]
By induction on $\psi$.
The base cases are by the definitions of $K_\top$, $K_b$, and $K_c$.
In the induction step, if $\psi \eqn \E \psi'$ then $\rho \models \psi$ iff
$\exists\rho' \inn \FRuns(b_0, \alpha_0)$ with $\rho' \models \psi'$, for $\rho_0 \eqn (b_0, \alpha_0)$. As 
$\qd(\psi')\,{<}\,\qd(\psi)$, this holds by assumption iff
$\alpha_0 \models \chP(\psi')(b_0)$. This is equivalent to $\rho \modelsLTL \toLTLconf(\psi) = \chP(\psi')$.
All other cases are by the induction hypothesis and \defsref{semantics}{ltl:semantics}.
\end{proof}

\begin{lemma}
\label{lem:chP}
If $\psi' = \toLTLconf(\psi)$ such that for all runs $\rho$ it is
$\rho \models \psi$ iff $\rho \modelsLTL \psi'$,
there is a run $\rho \inn \FRuns(b,\alpha)$ with $\rho \models \psi$
 iff $\alpha \models \chP(\psi)(b)$.
\end{lemma}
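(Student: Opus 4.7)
The plan is to unfold the definition of $\chP(\psi)(b)$ and translate the statement into a claim about the product automaton, which is then discharged by \thmref{model:checking}. The main bookkeeping step is to make the substitution $\phi(\vec V, \vec U)$ match the variable separation used in that theorem.

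First I would recall that by construction $\chP(\psi)(b) = \bigvee_{\phi \in \Phi} \exists \vec U.\,\phi(\vec V, \vec U)$, where $\Phi = \{\phi \mid (b_F, q_F, \phi) \in P_F\}$ is the set of formulas occurring in final states of $\NFApsi[\psi']$. By \defref{product construction}, each such $\phi$ is a formula over $V \cup V_0$, with $V_0$ representing initial values and $V$ representing the values after the run. The substitution $\phi(\vec V, \vec U)$ thus replaces $V_0$ by $V$ and $V$ by the fresh vector $U$. So $\alpha \models \chP(\psi)(b)$ holds iff there exist some $\phi \in \Phi$ and an assignment $\gamma$ over $V \cup V_0$ with $\gamma(\vec{V_0}) = \alpha(\vec V)$, such that $\gamma \models \phi$ (taking $\gamma(\vec V)$ to be the witness for the existentially quantified $\vec U$).

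Next, I apply \thmref{model:checking} to $\psi' = \toLTLconf(\psi)$, $b$, and the initial assignment $\alpha$. It states exactly that the existence of such a pair $(\phi,\gamma)$ with $\gamma(\vec{V_0}) = \alpha(\vec V)$ is equivalent to the existence of a final run $\rho \colon (b,\alpha) \to^* (b_F,\alpha_F)$ of $\BB$ with $\rho \modelsLTL \psi'$ (here $\alpha_0 = \alpha$ since the initial assignment component of $\gamma$ coincides with $\alpha$, and $\alpha_F(\vec V) = \gamma(\vec V)$). Equivalently, there is some $\rho \in \FRuns(b,\alpha)$ with $\rho \modelsLTL \psi'$.

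Finally, by the hypothesis of the lemma, $\rho \modelsLTL \psi'$ coincides with $\rho \models \psi$ for every run $\rho$, so the chain of equivalences yields
\[
\alpha \models \chP(\psi)(b) \iff \exists \rho \in \FRuns(b,\alpha).\ \rho \models \psi,
\]
as required. The only real subtlety is making sure the quantifier over $\vec U$ and the renaming $\phi(\vec V,\vec U)$ are interpreted so that $\vec V$ plays the role of the initial values (matching $\vec{V_0}$ in \thmref{model:checking}); once this alignment is fixed, the two directions follow directly from \thmref{model:checking} and the assumed equivalence between $\models$ and $\modelsLTL$.
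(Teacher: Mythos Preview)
Your proposal is correct and follows essentially the same approach as the paper: unfold the definition of $\chP(\psi)(b)$, translate satisfaction of the resulting disjunction into the existence of a final state $(b_F,q_F,\phi)$ and an assignment $\gamma$ with $\gamma(\vec{V_0})=\alpha(\vec V)$ that satisfies $\phi$, invoke \thmref{model:checking}, and then use the assumed equivalence $\rho \models \psi \Leftrightarrow \rho \modelsLTL \psi'$. The only cosmetic difference is that the paper presents the two implications separately whereas you chain the equivalences; one small slip is that you write $\NFApsi[\psi']$ where the product automaton $\NN^{\psi'}_{\BB,b}$ is meant.
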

\begin{proof}
($\Longrightarrow$)
Suppose there is a run $\rho \inn \FRuns(b,\alpha)$ with $\rho \models \psi$,
so $\rho$ is of the form $(b,\alpha)\to^* (b_F,\alpha_F)$ for some $b_F \in B_F$.
By assumption, this implies $\rho \modelsLTL \psi'$, so that
by \thmref{model:checking},
$\smash{\NN_{\BB,b}^{\psi'}}$ has a final state $(b_F, q_F, \phi)$
where $\phi$ is satisfied by an assignment $\gamma$ with domain $V\cup V_0$ such that
$\gamma(\vec {V_0}) \eqn \alpha(\vec V)$ and $\gamma(\vec V) \eqn \alpha_F(\vec V)$.
By definition, $\chP(\psi)(b)$ contains a disjunct $\exists \vec U.\:\phi(\vec V, \vec U)$. As $\gamma$ satisfies $\phi$ and $\gamma(\vec {V_0}) \eqn \alpha(\vec V)$,
$\alpha \models \chP(\psi)(b)$.
($\Longleftarrow$) If $\alpha \models \chP(\psi)(b)$,
by definition of $\chP$ there is a formula $\phi$
such that $\alpha \models \exists \vec U.\:\phi(\vec V, \vec U)$
and $\phi$ occurs in a final state $(b_F, q_F, \phi)$ of $\smash{\NN_{\BB,b}^{\psi'}}$. 
Hence there is an assignment $\gamma$ with domain $V\cup V_0$ and 
$\gamma(\vec {V_0}) \eqn \alpha(\vec V)$ such that $\gamma \models \phi$.
By \thmref{model:checking}, there is a run $\rho\colon(b, \alpha) \to^* (b_F, \alpha_F)$ such that $\rho \modelsLTL \psi'$.
By the assumption, we have $\rho \models \psi$.
\qed
\end{proof}

\noindent
At this point the main theorem can be proven:

\begin{proof}[of \thmref{main}]
We first show ($\star$): for any path formula $\psi$,
there is a run $\rho \in \FRuns(b,\alpha)$ with $\rho \models \psi$ iff $\alpha \models \chP(\psi)(b)$.
The proof is by induction on $\qd(\psi)$.
If $\psi$ contains no path quantifiers, \lemref{toLTL} implies that
$\rho \models \psi$ iff $\rho \modelsLTL \toLTLconf(\psi)$ for all runs $\rho$, so ($\star$)
follows from \lemref{chP}.
In the induction step, we conclude from \lemref{toLTL}, using the induction hypothesis of ($\star$) as assumption, that $\rho \models \psi$ iff $\rho \modelsLTL \toLTLconf(\psi)$ for all runs $\rho$. Again, ($\star$) follows from \lemref{chP}.

The theorem is then shown by induction on $\chi$:
The base cases $\top$, $b'\inn B$, $c\inn \CC$ are easy to check, and for 
properties of the form $\neg \chi'$ and $\chi_1 \wedge \chi_2$ the claim
follows from the induction hypothesis and the definitions.
Finally, for $\chi = \E \psi$, $(b,\alpha) \models \chi$ iff there is a run $\rho \in \FRuns(b,\alpha)$ such that $\rho \models \psi$. By ($\star$) this is the case iff 
$\alpha \models \chP(\psi)(b) = \chS(\chi)(b)$.
\qed
\end{proof}

\paragraph{Termination}
We next show that the formulas generated in our procedure all
have a particular shape, to obtain an abstract termination result.
For a set of formulas $\Phi \subseteq \formulas$  and a symbolic run 
$\sigma$, let a history constraint $\hist(\sigma,\vseq)$ be \emph{over basis $\Phi$}
if $\vseq= \langle \theta_0,\dots, \theta_n\rangle$ and 
for all $i$, $1\,{\leq}\,i\,{\leq}\,n$, there is a subset $T_i \subseteq \Phi$
s.t. $\theta_i = \bigwedge T_i$.
Moreover, for a set of formulas $\Phi$, let
$\Phi^\pm = \Phi \cup \{ \neg \phi \mid \phi \in \Phi\}$.

\begin{definition}
For a DDSA $\BB$, a constraint set $\CC$ over free variables $V$, and $k\,{\geq}\, 0$, the formula sets $\Phi_k$ are inductively defined by
$\Phi_0 = \CC \cup \{\top, \bot\}$ and
\[\Phi_{k+1} = 
\{ {\bigvee}_{\phi \in H} \;\; \exists \vec U.\ \phi(\vec V, \vec U)\mid H\,{\subseteq}\, \HH_k\}\]
where $\HH_k$ is the set  of all history constraints of $\BB$ with basis $\bigcup_{i\leq k}\Phi_{i}^\pm$.
\end{definition}

\noindent
Note that formulas in $\Phi_{k}$ have free variables $V$, while
those in $\HH_k$ have free variables $V_0 \cup V$.
We next show that these sets correspond to the formulas generated by our procedure,
if all constraints in the verification property are in $\CC$.

\begin{lemma}
\label{lem:formula:sets}
Let $\E\psi$ have quantifier depth $k$, $\psi' = \toLTLconf(\psi)$, and 
$\smash{\NN_{\BB,b}^{\psi'}}$ be a constraint graph constructed in $\chP(\psi)$ for some $b\in B$.
Then, 
\begin{compactenum}[(1)]
\item for all nodes $(b', q, \phi)$ in $\smash{\NN_{\BB,b}^{\psi'}}$ there is some $\phi' \in \HH_k$ such that $\phi \equiv \phi'$, 
\item $\chP(\psi)(b)$ is equivalent to a formula in $\Phi_{k+1}$.
\end{compactenum}
\end{lemma}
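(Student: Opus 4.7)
The plan is to prove~(1) and~(2) together by strong induction on $k = \qd(\E\psi)$, where~(2) follows from~(1) essentially by matching the definition of $\Phi_{k+1}$. A preliminary observation drives the whole argument: by unfolding the recursive definition of $\toLTLconf$, every configuration map appearing as an atom of $\psi' = \toLTLconf(\psi)$ is either (a)~one of $K_\top$, $K_{b'}$, or $K_c$, in which case $K(b'')$ is $\top$, $\bot$, or $c$ and so lies in $\Phi_0 \subseteq \Phi_0^\pm$; or (b)~of the form $\chP(\eta)$ for a subformula $\E\eta$ of $\psi$ with $\qd(\E\eta) = k' < k$, in which case, by the outer induction hypothesis applied to~(2), $\chP(\eta)(b'')$ is equivalent to a formula in $\Phi_{k'+1}\subseteq \bigcup_{i\leq k}\Phi_i^\pm$. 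Hence, for every alphabet letter $w\subseteq \Confs$ of $\NFAfor[\psi']$, the formula $w(b'') = \bigwedge_{K\in w} K(b'')$ is equivalent to a conjunction of elements of $\bigcup_{i\leq k}\Phi_i^\pm$.

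For~(1), I proceed by a secondary induction on the length $n$ of the path in $\NN^{\psi'}_{\BB,b}$ leading to the node $(b', q, \phi)$. In the base case $n = 0$ the node is $(\bdummy, q_0, \phi_\nu)$, and $\phi_\nu$ coincides with $\hist(\sigma_0, \langle \top \rangle)$ for the empty symbolic run $\sigma_0$, which belongs to $\HH_k$ vacuously since the basis condition restricts only the $\theta_i$ with $i \geq 1$. For the inductive step, let $(b_n, q, \phi)$ be a node reached in $n$ steps with $\phi \equiv \hist(\sigma, \vseq)$ for some $\sigma$ of length $n$ and $\vseq$ witnessing membership in $\HH_k$. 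Any successor $(b_{n+1}, q', \phi')$ via an edge labelled $(a, w)$ satisfies $\phi' \equiv \update(\phi, a) \wedge w(b_{n+1})$, which by \defref{history constraint} equals $\hist(\sigma\cdot(a, b_{n+1}),\, \vseq\cdot w(b_{n+1}))$; the appended constraint $w(b_{n+1})$ is equivalent to a conjunction of elements of $\bigcup_{i\leq k}\Phi_i^\pm$ by the preliminary observation, so the extended history constraint remains in $\HH_k$.

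For~(2), once~(1) is established each $\phi \in \Phi$ at a final state of $\NN^{\psi'}_{\BB,b}$ is equivalent to some $\phi' \in \HH_k$, hence $\chP(\psi)(b) = \bigvee_{\phi\in\Phi}\exists \vec U.\,\phi(\vec V, \vec U)$ is equivalent to $\bigvee_{\phi'\in H}\exists \vec U.\,\phi'(\vec V, \vec U)$ for some finite $H \subseteq \HH_k$, which is exactly the shape of an element of $\Phi_{k+1}$.

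The main obstacle I anticipate is the treatment of negated atoms in $\psi'$, which is precisely the reason the basis is defined over $\Phi^\pm$ rather than $\Phi$: a careful reading of the NFA construction in \appref{nfa} is required to confirm that alphabet letters (and the guards derived from them) only introduce expressions of the form $K(b'')$ or $\neg K(b'')$ for atoms $K$ of $\psi'$, so that the ``$\pm$''-closure suffices. Handling the dummy initial step $\bdummy \goto{a_0} b$ in the secondary induction is a minor technicality, resolved by noting that $a_0$ has guard $\top$ and empty write set, so that $\update(\phi_\nu, a_0) \equiv \phi_\nu$.
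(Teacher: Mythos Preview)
Your proposal is correct and follows essentially the same route as the paper: outer induction on $k$, the observation that atoms of $\psi'$ evaluate into $\bigcup_{i\leq k}\Phi_i^\pm$, and deriving~(2) from~(1) by matching the definition of $\Phi_{k+1}$. The only difference is that for~(1) the paper invokes \lemref{pc} to identify node formulas with history constraints $h(\sigma(\pi), w(\pi)\otimes\sigma(\pi))$, whereas you inline that argument via your secondary induction on path length; your version has the minor advantage of transparently covering non-final nodes, which \lemref{pc} as stated does not.
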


\noindent
The statements are proven by induction on $k$, using the results about the product construction (\lemref{pc}).
From part (1) of this lemma and \thmref{main} we thus obtain an abstract criterion for decidability that will become useful in the next section:

\begin{corollary}
\label{cor:termination}
For a DDS $\BB$ as above and a state formula $\chi$, if $\HH_j(b)$ 
is finite up to equivalence for all 
$j\,{<}\,\qd(\chi)$ and $b\inn B$, the verification problem is solvable.
\end{corollary}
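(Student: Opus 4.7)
The plan is to derive the corollary from \lemref{formula:sets}(1) in combination with \thmref{main}, which already guarantees correctness of $\chS(\chi)$ whenever it is defined. The only way $\chS(\chi)$ can fail to be defined is if some product automaton $\NN_{\BB,b}^{\psi'}$ built inside a recursive call to $\chP$ is infinite; so the task reduces to proving termination under the finiteness hypothesis on the sets $\HH_j$.

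I would proceed by structural induction on $\chi$. The base cases ($\top$, $b\inn B$, $c\inn\CC$) and the boolean cases ($\neg\chi'$, $\chi_1\wedge\chi_2$) are immediate from the definition of $\chS$: the former bypass $\chP$ altogether, and the latter reduce to subformulas by the induction hypothesis. The only nontrivial case is $\chi\eqn\E\psi$, where $\chS(\chi)\eqn\chP(\psi)$. This call first invokes $\toLTLconf(\psi)$, which recursively calls $\chP$ on every subformula $\E\eta$ of $\psi$; since each such subformula has strictly smaller quantifier depth, the induction hypothesis applies and these subcomputations terminate.

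The remaining step is to show that the outermost product automaton $\NN_{\BB,b}^{\psi'}$ built in $\chP(\psi)$ is finite for each $b\inn B$. Its nodes are triples $(b',q,\phi)$ with $b'$ from the finite set $B'$ and $q$ from the finite NFA state set $Q$ (which consists of subformulas of $\psi'$), while by \defref{product construction} the formula $\phi$ is maintained modulo equivalence. By \lemref{formula:sets}(1), every $\phi$ occurring in a reachable node is equivalent to some element of $\HH_k$, where $k$ is the quantifier depth of the subformula $\E\psi'$ currently being processed. By the hypothesis of the corollary, $\HH_k$ is finite up to equivalence for all relevant $k$, so only finitely many non-equivalent nodes arise; as edges are indexed by the finite sets of actions of $\BB_b$ and of alphabet letters of $\NFAfor[\psi']$, finiteness of nodes entails finiteness of the entire product automaton. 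Consequently $\chS(\chi)$ is defined, and \thmref{main} immediately identifies $\chS(\chi)$ as a solution.

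The main delicate point is the bookkeeping of quantifier depths of subformulas encountered during the recursion, and in particular aligning them with the range of indices $j$ for which the hypothesis $\HH_j$ finite is invoked. This is routine once one unfolds the recursive interplay of $\chS$, $\chP$, and $\toLTLconf$ together with the inductive definition of $\Phi_k$ and $\HH_k$, but it is the step that most naturally deserves explicit verification.
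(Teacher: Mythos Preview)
Your proposal is correct and follows essentially the same route as the paper: both derive the corollary from \lemref{formula:sets}(1) together with \thmref{main}, the paper in a three-line summary and you via an explicit structural induction on $\chi$. One remark on the index bookkeeping you rightly flag as delicate: the main-text statement of \lemref{formula:sets} indexes by $\qd(\E\psi)$, which for the outermost call $\chi=\E\psi$ would seem to require $\HH_{\qd(\chi)}$; the appendix proof, however, actually establishes the bound with $k=\qd(\psi)$, and it is this version that makes the hypothesis range $j<\qd(\chi)$ sufficient.
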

\begin{proof}
By the assumption about the sets $\HH_j(b)$ for $j\,{<}\,\qd(\chi)$, 
all product automata constructions in
recursive calls $\chP(\psi)$ of $\chS(\chi)$ terminate if logical equivalence of  formulas is checked eagerly.
Thus $\chS(\chi)$ is defined, and by \thmref{main}, it solves the verification problem.
\qed
\end{proof}

The property that all sets $\HH_j(b)$, $j\,{<}\,\qd(\chi)$, are finite might not be decidable itself. However, in the next section we will show means to guarantee this property. Moreover, we remark that finiteness of all $\HH_j(b)$  implies a \emph{finite history set}, a decidability criterion identified for the linear-time case \cite[Def. 3.6]{ada}; but \exaref{bounded:lookback} below illustrates that the requirement on the $\HH_j(b)$'s is strictly stronger.

\section{Decidability of \mydds Classes}
\label{sec:criteria}

We here illustrate restrictions on DDSAs, either on the control flow or on the constraint language, that render our approach a decision procedure for CTL$^*_f$.

\medskip
\noindent
\textbf{Monotonicity constraints} (MCs) restrict constraints (\defref{constraint}) as follows: MCs over variables $V$ and domain $D$ have the form $p \odot q$ where $p,q\in {D\,{\cup}\,V}$
and $\odot$ is one of $=, \neq, \leq, <, \geq$, or $>$.
The domain $D$ may be $\mathbb R$ or $\mathbb Q$.
We call a boolean formula whose atoms are MCs an \emph{MC formula},
a DDSA where all atoms in guards are MCs an \emph{MC-DDSA},
and a \CTLsf property whose constraint atoms are MCs an \emph{MC property}.
For instance, $\BB$ in \exaref{ddsas} is an MC-DDSA.

We exploit a useful quantifier elimination property:
If $\phi$ is an MC formula over a set of constants $\Const$ and variables $V \cup \{x\}$, there is some $\phi' \equiv \exists x.\, \phi$ such that $\phi'$ is a quantifier-free MC formula over $V$ and $\Const$. Such a $\phi'$ can be obtained by writing $\phi$ in disjunctive normal form and applying a Fourier-Motzkin procedure~\cite[Sec. 5.4]{KS16} to each disjunct, which guarantees that all constants in $\phi'$ also occur in $\phi$.

\begin{theorem}\label{thm:mc}
The verification problem is solvable for all combinations of an MC-DDSA $\BB$ and an MC property $\chi$.
\end{theorem}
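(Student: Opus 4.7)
My plan is to invoke Corollary~\ref{cor:termination}: it suffices to prove that $\HH_j(b)$ is finite up to logical equivalence for every $b \inn B$ and every $j\,{<}\,\qd(\chi)$. I will do so by exhibiting a \emph{finite} ambient set of MC formulas which contains, up to equivalence, every formula that ever arises in the inductive construction of the $\Phi_k$ and $\HH_k$.

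First, I collect into a finite set $\Const$ all constants appearing in the guards of $\BB$ and in the atoms of $\chi$. For a finite variable set $W$, let $\mathcal{M}_W$ denote the set of quantifier-free MC formulas over variables in $W$ with constants drawn from $\Const$. Since each MC atom contains at most one constant and $W \cup \Const$ is finite, there are only finitely many distinct MC atoms, hence $\mathcal{M}_W$ has only finitely many equivalence classes under logical equivalence.

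Next, by simultaneous induction on $k$, I will show that every $\varphi \inn \Phi_k$ is equivalent to some formula in $\mathcal{M}_V$, and every $\varphi \inn \HH_k$ is equivalent to some formula in $\mathcal{M}_{V \cup V_0}$. The base $\Phi_0 \eqn \CC \cup \{\top,\bot\}$ sits directly in $\mathcal{M}_V$ because $\chi$ is an MC property. For $\HH_k$ I unfold Definition~\ref{def:history constraint}: $\phi_\nu$ is a conjunction of equalities, each $\theta_i$ is a finite conjunction of formulas in $\bigcup_{i\leq k} \Phi_i^\pm$ (MC by induction hypothesis), and the transition formula $\Delta_a \eqn \guard(a) \wedge \bigwedge_v v^w \eqn v^r$ is itself an MC formula over $\Const$. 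The operation $\update(\varphi,a) \eqn \exists \vec U.\,\varphi(\vec U) \wedge \Delta_a(\vec U,\vec V)$ preserves membership in $\mathcal{M}_{V \cup V_0}$ by the Fourier--Motzkin quantifier elimination recalled just above the theorem statement. The step from $\HH_k$ to $\Phi_{k+1}$ applies the same QE to each disjunct $\exists \vec U.\,\phi(\vec V,\vec U)$.

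Since $V \cup V_0$ is finite, $\mathcal{M}_{V \cup V_0}$ has finitely many equivalence classes, and therefore so does each $\HH_j(b)$; Corollary~\ref{cor:termination} then yields solvability. The main technical obstacle I foresee is verifying that existential elimination never widens $\Const$, so that all inductively generated formulas really stay within the same fixed finite ambient set. For MCs this is essentially automatic: an MC atom contains at most one constant, and Fourier--Motzkin eliminates a variable $y$ by pairing atoms $p \leq y$ and $y \leq q$ into $p \leq q$, where $p,q$ are themselves variables or constants already present in the input. The constant set of the eliminant is thus contained in that of the input, so the induction closes cleanly without enlarging $\Const$.
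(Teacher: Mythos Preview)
Your proposal is correct and follows essentially the same route as the paper: fix a finite constant set $\Const$, define the finite ambient class of quantifier-free MC formulas over $V\cup V_0$ and $\Const$, show by induction that every $\Phi_k$ and $\HH_k$ lands in this class because Fourier--Motzkin quantifier elimination for MCs introduces no new constants, and conclude via Corollary~\ref{cor:termination}. The paper structures the argument slightly differently---it first isolates as an auxiliary claim that any history constraint over an MC basis is itself (equivalent to) an MC formula over $\Const$, then runs the induction on~$j$---but the substance is identical.
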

\begin{proof}
\newcommand{\MCC}{\MC_{\!\Const}}
Let $\chi$ be an MC property, and $\Const$ the finite set of constants in constraints in $\chi$, ${\alpha_0}$, and guards of $\BB$.
Let moreover $\MCC$ be the set of quantifier-free formulas whose atoms are MCs over $V \cup V_0$ and $\Const$, so $\MCC$ is finite up to equivalence.

We show the following property ($\star$): all history constraints $\hist(\sigma,\vec \theta)$ 
over basis $\MCC$ are equivalent to a formula in $\MCC$.
For a symbolic run 
$\sigma\colon b_0 \to^* b_{n-1}\goto{a} b_n$ and a sequence
$\vseq= \langle \theta_0,\dots, \theta_n\rangle$ over $\MCC$, the proof is by induction on $n$.
In the base case, $\hist(\sigma, \vseq) \eqn \phi_\nu \wedge \theta_0$ is in $\MCC$ because
$\phi_\nu$ is a conjunction of equalities between $V\cup V_0$, and $\theta_0 \in \MCC$ by assumption.
In the induction step, $\hist(\sigma, \vseq) \eqn \update(\hist(\sigma|_{n-1}, \vseq|_{n-1}), a_{n}) \wedge \theta_n$. By induction hypothesis, $\hist(\sigma|_{n-1}, \vseq|_{n-1})\equiv \phi$ for some $\phi$ in $\MCC$.
Thus $\hist(\sigma, \vseq) \equiv \exists \vec U. \phi(\vec U) \wedge \Delta_a(\vec U, \vec V) \wedge \theta_n$. As $\BB$ is an $\MC$-DDSA, $\Delta_a(\vec U, \vec V)$ is a
conjunction of MCs over $V\cup U$ and constants $\Const$, and $\theta_n \in \MCC$ by assumption.
By the quantifier elimination property, there exists a quantifier-free MC-formula $\phi'$ over variables $V_0\cup V$ that is equivalent to $\exists \vec U. \phi(\vec U) \wedge \Delta_a(\vec U, \vec V) \wedge \theta_n$, and mentions only constants in $\Const$, so $\phi' \in \MCC$.

For $\CC$ the set of constraints in $\chi,$
we now show that $\HH_j \subseteq \MCC$ for all $j\geq 0$, by induction on $j$.
In the base case ($j\eqn 0$),
the claim follows from ($\star$), as all constraints in $\Phi_0$, i.e., in $\chi$, are in $\MCC$.
For $j\,{>}\,0$, consider first a formula $\widehat \phi \in\Phi_{j}$ for some $b\inn B$.
Then $\widehat \phi$ is of the form
$\widehat \phi = \bigvee_{\phi \in H} \exists \vec U.\ \phi(\vec V, \vec U)$ for some
$H\,{\subseteq}\, \HH_{j-1}$. By the induction hypothesis, $H\subseteq \MCC$,
so by the quantifier elimination property of MC formulas, $\widehat \phi$ is equivalent
to an MC-formula over $V$ and $\Const$ in $\MCC$. As $\HH_j$ s built over basis $\Phi_{j}$, the
claim follows from $(\star)$.
\qed
\end{proof}

Notably, the above quantifier elimination property fails for MCs
over integer variables; indeed, CTL model checking is undecidable
 in this case~\cite[Thm. 4.1]{MT16}.

\medskip
\noindent
\textbf{Integer periodicity constraint} systems confine the constraint language to variable-to-constant comparisons and restricted forms of variable-to-variable comparisons, and are for instance used in calendar formalisms~\cite{Demri06,DemriG08}.
More precisely, \emph{integer periodicity constraint} (IPC) atoms have the form $x = y$, $x \odot d$ for $\odot \in \{=,\neq, <, >\}$, $x \equiv_k y + d$, or $x \equiv_k d$, for variables $x,y$ with domain $\mathbb Z$ and $k,d\in \mathbb N$. 
A boolean formula whose atoms are IPCs is an \emph{IPC formula},
a DDSA whose guards are conjunctions of IPCs an \emph{IPC-DDSA},
and a \CTLsf formula whose constraint atoms are IPCs an \emph{IPC property}.
For instance, $\BB_{\mathit{ipc}}$ in \exaref{ddsas} is an IPC-DDSA.

Using \corref{termination} and a known quantifier elimination property for IPCs \cite[Thm. 2]{Demri06}, one can show that the verification problem is also solvable for
IPC-DDSAs, in a proof that resembles the one of \thmref{mc} (see \appref{proofs}).

\begin{theorem}\label{thm:ipc}
The verification problem is solvable for all combinations of an IPC-DDSA $\BB$ and an IPC-property $\chi$.
\end{theorem}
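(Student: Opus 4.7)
The plan is to mirror the proof of \thmref{mc} step by step, with integer periodicity constraints taking the role of monotonicity constraints. First I would collect the finite set $\Const$ of integer constants occurring in $\chi$, in $\alphainit$, and in the guards of $\BB$, together with the finite set $N$ of moduli appearing in $\equiv_k$ atoms in these same places. Let $\IPC_{\Const,N}$ denote the set of quantifier-free IPC formulas over variables $V\cup V_0$ whose constants come from $\Const$ and whose moduli come from $N$ (possibly after closing under the lcm induced by quantifier elimination). The point is that $\IPC_{\Const,N}$ is finite up to logical equivalence, because there are only finitely many atoms over a fixed finite variable set with bounded constants and moduli.

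The key technical ingredient, substituting for the Fourier--Motzkin argument used in the MC case, is the quantifier elimination property for IPCs proved as \cite[Thm.~2]{Demri06}: for every IPC formula $\phi$ over $V\cup\{x\}$, there is an equivalent quantifier-free IPC formula $\phi'\equiv \exists x.\,\phi$ whose constants and moduli remain within bounds determined by those of $\phi$. The main obstacle is precisely to check that these bounds can be closed under our construction, i.e.\ to choose $\Const$ and $N$ large enough so that repeated applications of $\update$ and $\exists\vec U$ do not escape $\IPC_{\Const,N}$. Since only finitely many guards and constraints are ever involved, only finitely many moduli and constants are ever introduced, and closing $N$ under the lcm with the guard moduli suffices.

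Next I would prove the analogue of property ($\star$) in the MC proof by induction on the length $n$ of a symbolic run $\sigma$: every history constraint $\hist(\sigma,\vseq)$ built over basis $\IPC_{\Const,N}$ is equivalent to a formula in $\IPC_{\Const,N}$. The base case is immediate, because $\phi_\nu$ is a conjunction of equalities $v\eqn\nu(v)$ (IPC atoms) and $\theta_0\in \IPC_{\Const,N}$ by assumption. In the step case, the induction hypothesis gives $\hist(\sigma|_{n-1},\vseq|_{n-1})\equiv \phi\in \IPC_{\Const,N}$; since $\BB$ is an IPC-DDSA, $\Delta_{a_n}(\vec U,\vec V)$ is a conjunction of IPCs, and $\theta_n\in \IPC_{\Const,N}$, so $\exists \vec U.\,\phi(\vec U)\wedge \Delta_{a_n}(\vec U,\vec V)\wedge \theta_n$ is eliminated into an equivalent formula in $\IPC_{\Const,N}$ by Demri's result.

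Finally, by induction on $j$ I would show that $\HH_j \subseteq \IPC_{\Const,N}$ up to equivalence for all $j\geq 0$, exactly as in the MC proof: the base case uses that the constraints of $\chi$ lie in $\IPC_{\Const,N}$ together with ($\star$), and in the step case a formula in $\Phi_{j+1}$ has the shape $\bigvee_{\phi\in H} \exists \vec U.\,\phi(\vec V,\vec U)$ with $H\subseteq \HH_j$, so the induction hypothesis combined with quantifier elimination keeps it inside $\IPC_{\Const,N}$, and ($\star$) extends this to $\HH_{j+1}$. Since $\IPC_{\Const,N}$ is finite up to equivalence, every $\HH_j$ is finite up to equivalence, and \corref{termination} yields solvability of the verification problem. \qed
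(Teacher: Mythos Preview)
Your proposal is correct and follows essentially the same route as the paper: define a finite-up-to-equivalence class of IPC formulas over $V\cup V_0$ with bounded constants and moduli (the paper fixes the modulus set as $\{k_1,\dots,k_m,K\}$ where $K$ is the lcm, which is exactly your closure under lcm), invoke Demri's quantifier-elimination result to show this class is closed under $\update$, then run the two nested inductions (on run length for $(\star)$, and on $j$ for $\HH_j$) verbatim from the MC proof before applying \corref{termination}. The only cosmetic difference is that the paper states explicitly that the \emph{constants} stay within $\Const$ while only the moduli grow to the lcm $K$, whereas you phrase this as ``remain within bounds''; but the structure and key lemma are identical.
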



\noindent
\textbf{Bounded lookback systems}~\cite{ada} restrict the control flow of the DDSA rather than the constraint language, and is a generalization of the earlier criterion of \emph{feedback-freedom}~\cite{DDV12}. 
Intuitively, the property demands that the behavior of a DDSA at any point in time depends only on boundedly many events from the past.
We refer to~\cite[Def. 5.9]{ada} for the formal definition.
Systems that enjoy bounded lookback 
allow for decidable linear-time verification~\cite[Thm. 5.10]{ada}.
However, we next show that this is not the case for branching time.

\begin{example}
\label{exa:bounded:lookback}
We reduce control state reachability of two-counter machines (2CM) to decidability of \CTLsf 
formulas for feedback-free (and hence bounded lookback) systems, inspired by~\cite[Thm. 4.1]{MT16}. 
2CMs have a finite control structure and two counters $x_1$, and $x_2$
that can be incremented, decremented, and tested for 0.
It is undecidable whether a 2CM will ever reach a designated control state $f$~\cite{Minsky67}.
For a 2CM $\mc M$, we build a feedback-free DDSA $\BB\,{=}\,\langle B, \binit, \AA, T, B_F, V, \alphainit, \guard\rangle$ and a \CTLsf property $\chi$ such that $\BB$ satisfies $\chi$ iff $f$ is reachable in $\mc M$.
The set $B$ consists of the control states of $\mc M$, together with an error state $\mathit{e}$ and auxiliary states $b_t$ for transitions $t$ of $\mc M$, such that $B_F = \{f, e\}$.
The set $V$ consists of $x_1$, $x_2$ and auxiliary variables $p_1$, $p_2$, $m_1$, $m_2$. 
Zero-test transitions of $\mc M$ are directly modeled in $\BB$, whereas a step $q \to q'$ that increments $x_i$ by one is modeled as:\\
%
\centerline{
\begin{tikzpicture}[node distance=45mm, rounded corners=2mm]
\tikzstyle{action}=[scale=.8]
\tikzstyle{state}=[draw, circle, inner sep=.5pt, line width=.7pt, scale=.8, minimum width=5mm]
 \node[state] (0) {$q$};
 \node[state, right of = 0] (1) {$b_t$};
 \node[state, right of = 1] (2) {$q'$};
 \node[state, below of = 1, xshift=2cm, yshift=40mm, double] (3) {$e$};
 \draw[edge] (0) -- node[action, above] {$x_i^w \geq 0 \wedge p_i^w = x_i^r$} (1);
 \draw[edge] (1) -- (2);
 \draw[edge] (1) |- node[action, left] {$x_i^r \neq p_i^r + 1$} (3);
\end{tikzpicture}
}\\
The step $q \to b_t$ writes $x_i$, storing its previous value in $p_i$, but if 
the write was not an increment by exactly 1, a step to state $e$ is enabled.
Decrements are modeled similarly.
For $\CC = \emptyset$ and a symbolic run $\sigma$ of $\BB$,
the only possible non-equality edge in $G_{\sigma,\CC}$ is a final step to $e$. Thus, there is no non-equality path between different instants of the same variable, so $\BB$ is feedback-free.
As increments are not exact, $\BB$ overapproximates $\mc M$.
However, $\chi = \E \G(\neg \E \X e)$
asserts existence of a path that never allows for a step to $e$ (i.e., it properly si\-mulates  $\mc M$) but reaches the final state $f$. 
Thus, $\BB$ satisfies $\chi$ iff $f$  
is reachable in $\mc M$.
\end{example}


\section{Implementation}
\label{sec:implementation}

We implemented our approach in the prototype \tool (arithmetic DDS analyzer) in Python;
source code, benchmarks, and a web interface are available (\url{https://ctlstar.adatool.dev}).
The tool takes a CTL$^*$ property $\chi$ together with either a DDSA in JSON format, or a (bounded) Petri net with data (DPN) in PNML format~\cite{pnml} 
as input, in the latter case the system is transformed into a DDSA.
The tool then applies the algorithm in \figref{algs}.
If successful, it outputs the configuration map returned by $\chS(\chi)$,
and it can visualize the product constructions.
To perform SMT checks and quantifier elimination, \tool interfaces CVC5~\cite{DR0BT14} and Z3~\cite{Z3}.
Besides numeric variables, \tool also supports variables of type boolean and string.
In addition to the operations in \defref{ctlstar}, \tool allows next operators $\langle a\rangle$ via an action $a$, which are useful for verification.

We tested \tool on a set of business process models presented as 
Data Petri nets (DPNs) in the literature. As these nets are bounded, they can be transformed into DDSAs. The results are reported in the table below. We indicate
whether the system belongs to a decidable class,
the verified property and whether it is satisfied by the initial configuration,
the verification time, 
the number of SMT checks, and the sizes of both the DDSA $\BB$, and the sum of all product constructions, as numbers of nodes/transitions.
We used CVC5 as SMT solver; times are without visualization, which tends to be time-consuming
for large graphs.
All tests were run on an Intel Core i7 with $4{\times}2.60$GHz and 19GB RAM. 

\begin{center}
\scriptsize
\begin{tabular}{ll@{\,}|c|c|c|r|r|r@{/}l|r@{/}ll}
\multicolumn{2}{c|}{\textbf{process}} & 
\multicolumn{1}{|c}{\textbf{class}} &
\multicolumn{1}{|@{\,}c@{\,}|}{\textbf{property}} & 
\multicolumn{1}{|@{\,}c@{\,}|}{\textbf{sat}} & 
\multicolumn{1}{|c}{\textbf{time}} & 
\multicolumn{1}{|c}{\textbf{checks}} &
\multicolumn{2}{|c}{\textbf{$|\BB|$}} & 
\multicolumn{2}{|c}{\textbf{$\sum |\NN^\psi_{\BB,b}|$}} \\
\hline
(a) & road fines (mined) 
  & MC & no deadlock & no & 7.0s & 8161 &   9&19   &  2052&3067  \\ 
  & &  & $\psi_{a1}$ &  yes & 7.6s & 7655  &  \multicolumn{2}{c|}{ }  & 1987&2906   \\ 
  & &  & $\psi_{a2}$ &  no & 1m12s & 111139  &  \multicolumn{2}{c|}{ }  & 3622&6778   \\ \hline
(b) & road fines (mined) 
  & MC & no deadlock &  yes  & 15m27s & 247563 &   9&19   &  4927&7288  \\ 
  & &  & $\psi_{a1}$ &  yes & 16m7s & 246813  &  \multicolumn{2}{c|}{ } & 4927&7288   \\ \hline 
(c) & road fines (norm) 
  &  & no deadlock & no & 9s & 9179 &   9&19  & 1985&2734  \\
  & &  & $\psi_{a1}$ &  yes & 6.6s & 6382  &  \multicolumn{2}{c|}{ }  & 1597&2167   \\ 
  & &  & $\psi_{c1}$ &  no & 11.5s & 17680  &  \multicolumn{2}{c|}{ }  & 1280&2587   \\ 
  & &  & $\psi_{c2}$ &  no & 10.0s & 15187 &  \multicolumn{2}{c|}{ }  & 1280&2173   \\ 
  & &  & $\psi_{c3}$ &  no & 10.5  & 16000 &  \multicolumn{2}{c|}{ }  & 1280&2240   \\ 
\hline 
(d) & hospital billing 
  & MC,IPC & no deadlock & yes & 20m59s & 1234928 &  17&40  & 23147&38652   \\ 
  & &  & $\psi_{d1}$ & yes & 10m20s & 669379 &  \multicolumn{2}{c|}{ }  & 10654&17415   \\ \hline 
(e) & sepsis (norm) 
  &  & no deadlock & yes & 1m36s & 139 & 301&1630 &  44939&162194 \\  
  & &  & $\psi_{e1}$ & no & 30.1s & 170 &  \multicolumn{2}{c|}{ }  & 22724&81351   \\   
  & &  & $\psi_{e2}$ & yes & 32s & 153 &  \multicolumn{2}{c|}{ }  & 22538&81165   \\ \hline  
(f) & sepsis (mined)
  & MC & no deadlock & yes & 7m24 & 4524 & 301&1630 & 161242&497985 \\ 
  & &  & $\psi_{f1}$ & yes & 3m53s & 5734 &  \multicolumn{2}{c|}{ }  & 74984&237534   \\    \hline  
(g) & board: register 
  &  & no deadlock & yes & 1.4s & 12 & 7&6 & 27&21 \\ \hline  
(h) & board: transfer 
  & MC, IPC & no deadlock & yes & 1.4s & 27      & 7&6 & 51&44 \\ \hline  
(i) & board: discharge 
  &  MC, IPC & no deadlock & yes & 1.5s  & 25 & 6&6 & 67&55 \\ 
  &&& $\psi_{i1}$ & yes & 1.5s & 94 & \multicolumn{2}{c|}{ }  & 91&94 \\
  &&& $\psi_{i2}$ & yes & 1.5s & 27 & \multicolumn{2}{c|}{ }  & 98&102 \\ 
  &&& $\psi_{i3}$ & yes & 1.4s & 56 & \multicolumn{2}{c|}{ }  & 43&43 \\ \hline 
(j) & credit approval 
  &  & no deadlock & yes & 1.7s & 470 & 6&10  & 230&232    \\ 
  &&& $\psi_{j1}$ & yes & 13.2s & 14156 & \multicolumn{2}{c|}{ }  & 645&1324 \\
  &&& $\psi_{j2}$ & no & 3.7s & 3128 & \multicolumn{2}{c|}{ }  & 316&396 \\
  &&& $\psi_{j3}$ & yes & 5.6s & 4748 & \multicolumn{2}{c|}{ }  & 548&655 \\ \hline 
(k) & package handling     
  & MC, IPC & no deadlock & yes & 2.7ss & 1025 & 16&28  & 693&671    \\
  &   
  &  & weak sound ($\tau_1$) & yes & 2.5s & 1079 & \multicolumn{2}{c|}{ }  & 398&382  \\
  &   
  &  & $\psi_{k1}$ & no & 2.6s & 850 & \multicolumn{2}{c|}{ }  & 343&327    \\ 
  &   
  &  & $\psi_{k2}$ & no & 2.4s & 875 & \multicolumn{2}{c|}{ }  & 336&320    \\
\hline
(l) & auction           
  & & no deadlock & no & 10.8s & 1683 & 5&7 & 186&206 \\
  &   
  &  & $\psi_{l1}$ & no & 6.4s & 1180 & \multicolumn{2}{c|}{ }  & 79&87 \\
  && & $\psi_{l2}$ & yes & 26.5s & 4000 & \multicolumn{2}{c|}{ }  & 263&378
\end{tabular}
\end{center}

We briefly comment on the benchmarks:
For all examples we checked the property \emph{no deadlock} that abbreviates $\A \G \E \F \chi_f$,
where $\chi_f$ is a disjunction of all final states. This is one of the two requirements
of the crucial \emph{soundness} property (cf. \exaref{road fines}).
Weak soundness~\cite{BatoulisHW17} that relaxetion that allows dead transitions, but
all firable transitions must lead to final states. We write
\emph{weak sound(a)} for the property
$\E\F (\langle a\rangle \top) \to \A\G (\langle a\rangle \top \to \F \chi_f)$,
stating the requirements for action $a$.
\begin{compactdesc}
\item[(a)-(c)] are versions of the road fine process from \exaref{road fines}.
The DPNs for (a)~\cite[Fig. 12.7]{Mannhardt18} and (b)~\cite[Fig. 13]{LFM18} were mined automatically from logs, while (c) is the normative version \cite[Fig. 7]{MannhardtLRA16} shown in \exaref{road fines}.
While (a) and (c) are unsound (\emph{no deadlock} is violated), this issue
was fixed in version (b).
We can also check whether specific states are deadlock-free, as by
$\psi_{a1} = \A \G (\mathsf{p}_{7} \to \E\F \mathsf{end})$, which actually holds in (a)-(c) 
as $\mathsf{p}_{7}$ is not the problematic state.
Other considered properties are
$\psi_{a2} = \A \G (\mathsf{end} \to \mathit{total}\,{\leq}\, \mathit{amount})$, which states that in the final state it is ensured that the total amount exceeds the fine.
Moreover, $\psi_{c1} = \E \F (\mathit{dS}\,{\geq}\, 2160)$, 
$\psi_{c2} = \E \F (\mathit{dP}\,{\geq}\, 1440)$, and
$\psi_{c3} = \E \F (\mathit{dJ}\,{\geq}\, 1440)$ check whether the time constraints can be violated.
\item[(d)] models a billing process in a hospital \cite[Fig. 15.3]{Mannhardt18}. The tool verifies that it is deadlock-free.
Moreover, $\psi_{d1} = \E \F (\mathsf{p16} \wedge \neg \mathit{isClosed})$ checks whether there exists a run where in the final state $\m{p16}$ the $\mathit{isClosed}$ flag is not set.
\item[(e)] is a normative model for a sepsis triage process in a hospital \cite[Fig. 13.3]{Mannhardt18}, and
(f) is a version of the same process that was mined purely automatically from logs \cite[Fig. 13.6]{Mannhardt18}.
Both versions are deadlock-free.
According to \cite[Sec. 13]{Mannhardt18}, it is assumed that triage happened before antibiotics are administered, i.e.,
$\psi_{e1} =\A \G (\mathsf{sink} \to \mathit{timeTriage} < \mathit{timeAntibiotics})$, 
which is actually not satisfied by (e). However, the desired time limit
$\psi_{e2} =\A \G (\mathsf{sink} \to \mathit{timeTriage +60} \geq \mathit{timeAntibiotics})$ holds.
We can check that variable $\mathsf{lacticAcid}$ is not written until a certain activity happens, i.e.,
$\psi_{f1} =\A (\neg \mathsf{lacticAcid} \U \langle\m{diagnosticLacticAcid}\rangle \top)$ holds.
\item[(g)--(i)] reflect activities in patient logistics of a hospital, based on logs of real-life processes \cite[Fig. 14.3]{Mannhardt18}.
While the \emph{no deadlock} property is satisfied by all initial configurations,
the output of \tool reveals that in case of (h) this need not hold for other initial assignments.
The tool also confirms that if the variable $\mathit{org}_1$ has value 207 in state $\m{p}_2$ then
this value will be maintained, $\psi_{i1} = \A \G ( \m{p_2} \wedge \mathit{org}_1 \eqn 207 \to \A \G \mathit{org}_1 \eqn 207)$.
We also verify that in this process either the $\m{transfer}$ or $\m{history}$ activity happens,
but not both, by $\psi_{i2} = \A ( \E \F \langle\m{transfer}\rangle \top \wedge \E \F \langle\m{history}\rangle\top)$ and $\psi_{i3} = \neg \E (\F \langle\m{transfer}\rangle \top \wedge \F \langle\m{history}\rangle\top)$.
\item[(j)] is a credit approval process \cite[Fig. 3]{LM18}.
It can be verified that a loan is only granted if the application passed the 
customer verification and the decision stages ($\psi_{j1} = \A \G (\langle\m{openLoan}\rangle \top \to \mathit{ver} \wedge \mathit{dec})$); though even if the verification and the decision variables are set, it is not guaranteed that a loan is granted ($\psi_{j2} = \A (\F (\mathit{ver} \wedge \mathit{dec}) \to \F (\langle\m{openLoan}\rangle \top))$), but it is possible
($\psi_{j3} = \A (\F (\mathit{ver} \wedge \mathit{dec}) \to \E \F (\langle\m{openLoan}\rangle \top))$)
\item[(k)] is a package handling routine \cite[Fig. 5]{deLeoniFM21}. The properties $\psi_{k1} = \E \F \langle\m{fetch}\rangle \top$ and
$\psi_{k2} = \E \F \langle\mathit{\tau_6}\rangle \top$ are not satisfied, so the process has dead transitions.
\item[(l)]models an auction process \cite[Ex. 1.1]{ada}, for which \tool reveals a deadlock.
We also check the properties $\psi_{l1} = \E\F(\mathsf{sold} \wedge d\,{>}\,0 \wedge o\,{\leq}\,t)$ and
$\psi_{l2} = \E \F(b = 1 \wedge o\,{>}\,t \wedge \F (\mathsf{sold} \wedge b\,{>}\,1))$ considered in \cite[Ex. 1.1]{ada}.
\end{compactdesc}

\section{Conclusion}
This paper presents a \CTLsf verification technique for DDSAs that is a decision procedure for monotonicity and integer periodicity constraint systems.
To the best of our knowledge, this is the first proof of decidability of \CTLsf for these classes. In contrast, the cases of feedback-free and
bounded lookback systems are shown undecidable.
We implemented our approach in the tool \tool
and showed its usefulness on a range of business processes from the literature.

We see various opportunities to extend this work.
A richer verification language could support past time
operators~\cite{Demri06} and the possibility to compare variables multiple steps apart~\cite{DD07}. Further decidable fragments could be sought using covers~\cite{GM08}, or aiming for
compatibility with locally finite theories~\cite{GNRZ07}.
Moreover, a restricted version of the bounded lookback property could guarantee
decidability of \CTLsf, similarly to the way feedback freedom was strengthened in ~\cite{KoutsosV17}. We conjecture that many of the DPNs used in the experiments
could be in such a class.
The implementation could be improved to avoid the computation of many similar formulas,
thus gaining efficiency.
Finally, the complexity class that our approach implies for \CTLsf in the decidable classes
is yet to be clarified.

\bibliographystyle{splncs04}
\bibliography{references}

\appendix
\section{Proofs}
\label{app:proofs}

\begin{numberedlemma}{\ref{lem:abstraction}}
For a symbolic run $\sigma\colon b_0 \goto{a_1} b_1 \goto{a_2} \dots \goto{a_n} b_n$ and $\vseq= \langle \theta_0,\dots, \theta_n\rangle$,
$\smash[t]{\hist(\sigma,\vseq)}$
is satisfied by assignment $\alpha$ with domain $V\,{\cup}\,V_0$ iff 
$\sigma$ abstracts a run
$\smash{\rho \colon (b_0, \alpha_0) \goto{a_1} \dots \goto{a_n} (b_n, \alpha_n)}$
such that 
\begin{inparaenum}[\it (i)]
\item $\alpha_0(v) = \alpha(\nu(v))$, and
\item $\alpha_n(v) = \alpha(v)$ for all $v\in V$, and
\item $\alpha_i \models \theta_i$ for all $i$, $0\leq i \leq n$.
\end{inparaenum}
\end{numberedlemma}
\begin{proof}
($\Longleftarrow$)
By induction on $n$.
If $n\,{=}\,0$, the assumptions imply $\alpha(v) = \alpha(\nu(v))$ for all $v\in V$, so $\alpha \models \phi_\nu$.
As $\alpha_0$ satisfies $\theta_0$, 
$\alpha$ also satisfies 
$ \phi_{\nu} \wedge \theta_0 = \hist(\sigma, \langle \theta_0\rangle)$, so the claim holds.
For the induction step, let $\sigma$ be a symbolic run 
$\sigma \colon b_0 \goto{*} b_n \goto{a} b_{n+1}$
that abstracts a run
$\rho \colon
(b_0, \alpha_0) 
\goto{*} (b_n, \alpha_n)
\goto{a} (b_{n+1}, \alpha_{n+1})
$, and let
$\vseq\,{=}\, \langle \theta_0,\dots, \theta_n, \theta_{n+1}\rangle$.
Then $\sigma|_n$ also abstracts $\rho|_n$, 
so by the induction hypothesis the assignment $\alpha'$ with domain $V \cup V_0$ given by
$\alpha'(\nu(v)) = \alpha_0(v)$ and $\alpha'(v) = \alpha_n(v)$ for all $v\in V$ 
satisfies $\hist(\sigma|_n, \vseq|_n)$.
By definition of a step,
the guard assignment $\beta$ given by $\beta(v^r) = \alpha_n(v)$ and
$\beta(v^w) = \alpha_{n+1}(v)$ for all $v\in V$
satisfies $\guard(a)$.
For $\varphi := \hist(\sigma|_n, \vseq|_n)$ we thus have
\begin{align*}
\hist(\sigma, \vseq) &= \update(\varphi, a) \wedge \theta_{n+1}\\
&= \exists \vec U.\:\varphi(\vec U) \wedge \trans{a}(\vec U, \vec V) \wedge \theta_{n+1}\\
&= \exists \vec U.\:\varphi(\vec U) \wedge
(\guard(a) \wedge\!\bigwedge_{v\in V \setminus \mathit{write}(a)}\! v^{w}\,{=}\,v^{r})(\vec U, \vec V) \wedge \theta_{n+1}
\end{align*}
As $\alpha' \models \phi$, by the construction of $\beta$
above, it holds that $\alpha_{n+1}$ satisfies the first conjunct of this formula,
using values $\alpha_n(\vec V)$ as
witnesses for the existentially quantified variables $\vec U$.
Since moreover $\alpha_{n+1}\models \theta_{n+1}$ by assumption, it follows that
$\alpha_{n+1}$ satisfies $\hist(\sigma, \vseq)$.

($\Longrightarrow$)
By induction on $n$.
For $n\,{=}\,0$, suppose that $\alpha$ satisfies
$\hist(\sigma,\vseq) = \phi_{\nu} \wedge \theta_0$.
By definition of $\phi_\nu$, this implies $\alpha(v) = \alpha(\nu(v))$ for all $v\in V$.
The empty run $(b_0,\alpha_0)$ thus satisfies the claim, with $\alpha_0(v) = \alpha(v)$ for all $v\in V$.
For the inductive step, let
$\sigma \colon b_0 \goto{*} b_n \goto{a} b_{n+1}$ 
and $\vseq= \langle \theta_0,\dots, \theta_{n+1}\rangle$
satisfy
$\alpha \models \hist(\sigma, \vseq)$. 
Since 
\begin{align*}
\hist(\sigma, \vseq)
&= \update(\hist(\sigma|_n, \vseq|_n), a) \wedge \theta_{n+1} \\
&= \exists \vec U. \hist(\sigma|_n, \vseq|_n)(\vec U) \wedge \trans{a}(\vec U, \vec V) \wedge \theta_{n+1} 
\end{align*}
it must hold that $\alpha \models \theta_{n+1}$ and
there must be an assignment $\gamma$ with domain $U \cup V_0 \cup V$ such that $\gamma(v) = \alpha(v)$ for all $v\in V \cup V_0$, and $\gamma$ satisfies both
$\hist(\sigma|_n, \vseq|_n)(\vec U)$ and $\trans{a}(\vec U, \vec V)$.
We can write $\vec V = \langle v_1, \dots, v_k\rangle$ and 
$\vec U = \langle u_1, \dots, u_k\rangle$ for some $k$.
Let $\alpha'$ be the assignment with domain $V \cup V_0$ such that $\alpha'(v_i) = \gamma(u_i)$ for all $i$, $1\leq i \leq k$, and $\alpha'(v) = \gamma(v)$ for all $v\in V_0$.
Then $\alpha'$ satisfies $\hist(\sigma|_n, \vseq|_n)$.
Therefore, by the induction hypothesis $\sigma|_n$ abstracts a run
$\rho \colon
(b_0, \alpha_0) 
\goto{a_1} (b_1, \alpha_1)
\goto{a_2} \dots
\goto{a_n} (b_n, \alpha_n)$ such that 
$\alpha_i \models \theta_i$ for all $i$, $0\,{\leq}\,i\,{\leq}\,n$.
Let $\beta$ be the guard assignment such that $\beta(v^r) = \alpha'(v)$ and
$\beta(\vec v^w) = \alpha(v)$ for all $v\in V$.
By definition of $\alpha'$, since $\gamma$ satisfies $\trans{a}(\vec U, \vec V)$, $\beta$ satisfies $\trans{a}(\vec V^r, \vec V^w)$ and hence $\beta \models \guard(a)$.
Thus $\rho$ can be extended with a step $(b_n, \alpha_n)\goto{a} (b_{n+1}, \alpha_{n+1})$
such that $\alpha_{n+1}(v) = \alpha(v)$ for all $v\in V$.
Moreover, as $\alpha$ satisfies $\theta_{n+1}$,
$\alpha_i \models \theta_i$ for all $i$, $0\,{\leq}\,i\,{\leq}\,{n+1}$.
This proves the claim.
\qed
\end{proof}

Given a path $\pi$ in $\NN_{\BB,b}^\psi$ of the form
\begin{equation}
\label{eq:pi}
\pi\colon
(\bdummy,q_0,\phi_\nu)
\goto{a_0,w_0} (b_0, q_1,\varphi_1)
\goto{a_1,w_1} (b_1, q_2,\varphi_2)
\goto{*} (b_n, q_{n+1},\varphi_{n+1})
\end{equation}
where the last node is final,
we write $\sigma(\pi)$ for the symbolic run
$\sigma\colon b = b_0 \goto{a_1} b_1 \goto{*} b_n$
(ignoring the initial dummy transition in $\pi$), and
$w(\pi) = \seqz[n]{w}$. 

\begin{lemma}
\label{lem:pc}
Let $\psi \in \LTLconf$ be over $\Phi$.
\begin{compactenum}
\item
If a word $w$ is accepted by $\NN_\psi$ and $\sigma$ is a symbolic run
such that
$\hist(\sigma,w\otimes \sigma)$ is satisfiable, 
there is a path $\pi$ of the form \eqref{pi}
in $\smash{\NN_{\BB,b}^\psi}$ such that $\sigma = \sigma(\pi)$, $w = w(\pi)$, 
and $\varphi_{n+1} \equiv \hist(\sigma,w\otimes \sigma)$.
\item
If $\pi$ is a path of the form \eqref{pi}
in $\NN_{\BB,b}^\psi$ then $w(\pi)$ is accepted by $\NN_\psi$,
$\varphi_{n+1}$ is satisfiable, and
$\varphi_{n+1} \equiv h(\sigma(\pi),w(\pi) \otimes \sigma(\pi))$.
\end{compactenum}
\end{lemma}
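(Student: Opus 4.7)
The plan is to prove both directions by simultaneous induction on the length $n$ of the symbolic run $\sigma$ (equivalently, on the length of the word $w$ minus one, due to the one-step shift introduced by the dummy action $a_0$). At each step, the argument pivots on unrolling \defref{product construction} and \defref{history constraint} side by side and matching their respective conjuncts.

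For part (1), in the base case $n=0$ we have a one-letter word $w = w_0$ accepted by $\NN_\psi$, and $\sigma$ consists of the single state $b$. By definition of the product, $p_0 = (\bdummy, q_0, \phi_\nu)$ and the dummy transition $\bdummy \goto{a_0} b$ gives rise to a product edge $p_0 \goto{a_0, w_0} (b, q_1, \varphi_1)$ precisely when $\update(\phi_\nu, a_0) \wedge w_0(b)$ is satisfiable; since $\guard(a_0) = \top$ and $a_0$ writes nothing, $\update(\phi_\nu, a_0) \equiv \phi_\nu$, and the resulting $\varphi_1$ is equivalent to $\phi_\nu \wedge w_0(b) = \hist(\sigma, w\otimes \sigma)$. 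Satisfiability is exactly the hypothesis. In the induction step, write $\sigma$ as $\sigma'\goto{a_{n+1}} b_{n+1}$ and $w$ as $w'w_{n+1}$. Since $\NN_\psi$ accepted $w$, there is a run in $\NN_\psi$ ending at some accepting state that goes through an intermediate state $q_{n+1}$ reached after reading $w'$, and an edge $q_{n+1} \goto{w_{n+1}} q_{n+2}$. By the induction hypothesis applied to $\sigma'$ and $w'$, we obtain a product path $\pi'$ ending in $(b_n, q_{n+1}, \varphi_{n+1})$ with $\varphi_{n+1} \equiv \hist(\sigma', w'\otimes \sigma')$. Extending $\pi'$ by a step requires $\update(\varphi_{n+1}, a_{n+1}) \wedge w_{n+1}(b_{n+1})$ to be satisfiable, and this formula is precisely $\hist(\sigma, w\otimes \sigma)$ up to equivalence, which is satisfiable by assumption.

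For part (2), again by induction on the length of $\pi$. The base case takes $\pi$ to consist of just $p_0$ (or the single dummy edge), where $\varphi_1 \equiv \phi_\nu \wedge w_0(b_0) \equiv \hist(\sigma(\pi), w(\pi)\otimes \sigma(\pi))$ by the same calculation as above, satisfiability follows from the existence of the edge in the product, and the prefix of $w(\pi)$ consumed so far lives on an accepting computation in $\NN_\psi$ by construction. The induction step decomposes $\pi$ as $\pi'\goto{a_{n+1}, w_{n+1}} (b_{n+1}, q_{n+2}, \varphi_{n+2})$, invokes the induction hypothesis on $\pi'$ to obtain the equivalence $\varphi_{n+1} \equiv \hist(\sigma(\pi'), w(\pi')\otimes \sigma(\pi'))$ and that $w(\pi')$ extends to an accepted word, and then applies the transition rule of \defref{product construction} to identify $\varphi_{n+2} \equiv \update(\varphi_{n+1}, a_{n+1}) \wedge w_{n+1}(b_{n+1})$ with $\hist(\sigma(\pi), w(\pi)\otimes \sigma(\pi))$. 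Satisfiability of $\varphi_{n+2}$ is guaranteed by the product construction rule, and acceptance of $w(\pi)$ follows because the product transition was derived from an $\NN_\psi$ transition $q_{n+1} \goto{w_{n+1}} q_{n+2}$.

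The main obstacle I expect is bookkeeping around the index shift: the product uses $\BB_b$ with the dummy prefix, so the $i$th control state visited by $\pi$ corresponds to the $(i{-}1)$th state of $\sigma$, whereas the $i$th letter of $w$ is read on the transition into the $i$th NFA state. Keeping these indexings coherent between $\sigma(\pi)$, $w(\pi)$, and $w\otimes \sigma$ is the subtle step; once aligned, the algebraic identity relating $\update(\hist(\sigma|_{n}, \vseq|_{n}), a_{n+1}) \wedge \theta_{n+1}$ to $\hist(\sigma, \vseq)$ (which is just \defref{history constraint}) closes the induction in both directions without further work.
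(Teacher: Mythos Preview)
Your proposal is correct and follows essentially the same approach as the paper: both proceed by induction on the length $n$ of the symbolic run, unroll \defref{product construction} and \defref{history constraint} in parallel, use that $\update(\phi_\nu,a_0)\equiv\phi_\nu$ in the base case, and in the step identify $\update(\varphi_{n+1},a_{n+1})\wedge w_{n+1}(b_{n+1})$ with $\hist(\sigma,w\otimes\sigma)$. Your remark about the index shift from the dummy prefix is exactly the bookkeeping the paper handles implicitly.
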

\begin{proof}
\begin{compactenum}[(1)]
\item 
By induction on the length $n$ of $\sigma$.
If $n\,{=}\,0$ then $\sigma$ is empty and
$w = \varsigma$ for some $\varsigma \in \Sigma$.
By assumption,
$\hist(\sigma,w \otimes \sigma) = \phi_\nu \wedge \varsigma_0(b)$ is satisfiable. 
Thus, 
$\update(\phi_\nu,a_0) \wedge \varsigma_0(b) = \phi_\nu \wedge \varsigma_0(b)$
is satisfiable (using $\guard(a_0) = \top$), so
by \defref{product construction} there is a step
$(\widehat b, q_0, \phi_\nu) \goto{a_0} (b, q_f, \varphi_1)$ such that
$\varphi_1 \equiv \update(\phi_\nu,a_0) \wedge \varsigma_0(b)$.

In the inductive step, $\sigma$ has the form $b_0 \goto{*} b_n \goto{a} b_{n+1}$, and 
$w = \varsigma_0 \cdots \varsigma_{n+1}$
is accepted by $\NN_\psi$, such that 
$\hist(\sigma,w \otimes \sigma)$ is satisfiable.
Let $\sigma' = \sigma|_n$ and $w' = w|_n$.
By the induction hypothesis, $\NN_\BB^\psi$ has a 
node $p_{n+1} = (b_n, q_{n+1},\varphi_{n+1})$
and a path 
$\pi \colon p_0 \to^* p_{n+1}$
such that $\varphi_{n+1} \equiv \hist(\sigma',w'\otimes\sigma')$.
Therefore,
\begin{align*}
\update(\varphi_{n+1}, a) \wedge \varsigma_{n+1}(b_{n+1}) &\equiv \update(\hist (\sigma',w'\otimes\sigma'), a) \wedge \varsigma_{n+1}(b_{n+1}) \\
&= \hist(\sigma,w\otimes\sigma)
\end{align*}
is satisfiable.
Therefore, $\NN_\BB^\psi$ must have a node $p' = (b_{n+1}, q_{n+2}, \varphi_{n+2})$ 
such that $\varphi_{n+2} \equiv \update(\varphi_{n+1}, a) \wedge \varsigma_{n+1}(b_{n+1})$
and an edge $p_n \goto{a, \varsigma_{n+1}} p'$ can be appended to $\pi$.
\item
By induction on $n$.
If $n\,{=}\,0$ then $\pi$ consists of the single step
$(\widehat b,q_0,C_{\alpha_0}) \goto{a_0} (b,q_1,\varphi_1)$
and $\sigma$ consists only of state $b$. 
By \defref{product construction},
this step exists because 
$\update(\phi_\nu, a_0) \wedge \varsigma_0(b) = \phi_\nu \wedge \varsigma_0(b)$ is satisfiable,
for some $q_0 \goto{\varsigma_0} q_1$, using the fact that $\guard(a_0) = \top$.
The formula $\varphi_1$ must satisfy
$\varphi_1 \equiv \phi_\nu \wedge \varsigma_0(b)$.
For $w(\pi) = \varsigma_0$ we indeed have 
$\hist(\sigma,w(\pi) \otimes \sigma(\pi)) = \phi_\nu \wedge \varsigma_0(b)$,
so the claim holds.

In the inductive step, consider a path
$\pi \colon p_0 \to^* p_{n+1} \goto{a} p_{n+2}$
for $p_0$ the initial node of $\smash{\NN_\BB^\psi}$
and $p_i = (b_{i-1}, q_{i},\varphi_{i})$ for all $i$, $1\,{\leq}\,i\,{\leq}\,n\,{+}\,2$.
Let $\sigma = \sigma(\pi)$ 
be the symbolic run $b_0 \goto{*} b_{n} \goto{a} b_{n+1}$, $\sigma' = \sigma|_n$,
and $w = w(\pi)$. 
By the induction hypothesis, there is a run
$q_0 \goto{\varsigma_0} q_1 \goto{\varsigma_2} \dots \goto{\varsigma_n} q_{n+1}$ 
in $\NN_\psi$ such that for
$w' = w(\pi|_n) = \varsigma_0 \dots \varsigma_n$, the history constraint
$\hist(\sigma',w' \otimes \sigma')$ is satisfiable and equivalent to
$\varphi_{n+1}$ ($\star$).
Since there is an edge 
$(b_n, q_{n+1},\varphi_{n+1}) \goto{a} (b_{n+1}, q_{n+2},\varphi_{n+2})$, 
by \defref{product construction} there must be a transition
$q_{n+1} \goto{\varsigma} q_{n+2}$ in $\NN_\psi$, 
such that $\varphi_{n+2} \equiv \update(\varphi_{n+1}, a) \wedge \varsigma(b_{n+1})$ is satisfiable.
Using ($\star$) and abbreviating $\theta = \varsigma(b_{n+1})$, 
\begin{align*}
\update(\varphi_{n+1}, a) \wedge \theta
& \equiv \update(\hist(\sigma',w' \otimes \sigma'), a) \wedge \theta
\eqn\hist(\sigma,w\,{\otimes}\,\sigma)
\end{align*}
holds and since
$\varphi_{n+2}$ is satisfiable the claim holds.
\qed
\end{compactenum}
\end{proof}

For instance, the path to the left final node in $\smash{\NN^\psi_{\BB,\m b_1}}$ in \exaref{pc} corresponds to the word $w=\langle \emptyset,\{K\}\rangle$
accepted by $\NN_\psi$ and $\sigma\colon \m b_1 \goto{\m a_1} \m b_2 \goto{\m a_3} \m b_3$, and the formula is equivalent to $\hist(\sigma, \vec\theta)$
for $\vec \theta = \langle \top, K(\m b_2), \top\rangle$.
Now, the product construction serves to check
whether there exists a run that satisfies an $\LTLconf$ formula:

\begin{numberedtheorem}{\ref{thm:model:checking}}
Let $\psi \inn \LTLconf$ and $b\inn B$.
There is a final run $\rho \colon (b,\alpha_0) \to^* (b_F, \alpha_F)$ of $\BB$ 
such that $\rho \modelsLTL \psi$, iff
$\smash{\NN_{\BB,b}^\psi}$ has a final state $(b_F, q_F, \phi)$ for some $q_F$ and $\phi$
such that $\phi$ is satisfied by assignment $\gamma$ with
$\gamma(\vec {V_0}) \eqn \alpha_0(\vec V)$ and $\gamma(\vec V) \eqn \alpha_F(\vec V)$.
\end{numberedtheorem}
\begin{proof}
($\Longrightarrow$)
Suppose $\rho \modelsLTL \psi$, and let $\sigma$ be the abstraction of $\rho$.
By \lemref{nfa}, $\NN_\psi$ accepts a word $w = w_0 \dots w_n$ that is consistent with $\rho$, i.e., $\alpha_i \models w_i(b_i)$ 
for all $i$, $0\,{\leq}\,i\,{\leq}\,n$. Thus assignment
$\gamma$ satisfies $\hist(\sigma,w\otimes \sigma)$ by \lemref{abstraction}.
By \lemref{pc}, there is a path $\pi$ in $\smash{\NN_{\BB,b}^\psi}$
ending in a state $(b_F, q_F,\phi)$ such that $\sigma = \sigma(\pi)$, $w = w(\pi)$,
and $\phi \equiv \hist(\sigma,w\otimes \sigma)$.
($\Longleftarrow$) 
Let $\pi$ be a path to a final state $(b_F, q_F,\phi)$ in $\smash{\NN_{\BB,b}^\psi}$.
By \lemref{pc}, $w(\pi)$
is accepted by $\NN_\psi$, and $\hist(\sigma(\pi),w(\pi) \otimes \sigma(\pi))$ is equivalent to $\phi$ 
and satisfied by some assignment $\gamma$. 
By \lemref{abstraction}, there is a run 
$\rho\colon (b, \alpha_0) \to^* (b_F, \alpha_n)$
with abstraction $\sigma$
such that $\gamma(\vec V_0) = \alpha_0(\vec V)$, and $\gamma(\vec V)=\alpha_n(\vec V)$ and 
$\alpha_i  \models w_i(b_i)$ for all $i$, $0\,{\leq}\,i\,{\leq}\,n$. So $w$ is consistent with 
$\rho$, and by \lemref{nfa} we have $\rho \modelsLTL \psi$.
\qed
\end{proof}

\begin{numberedlemma}{\ref{lem:toLTL}}
Let $\psi$ be a path formula with $\qd(\psi) \eqn k$.
Suppose that for all confi\-gurations $(b,\alpha)$ and path formulas $\psi'$ with $\qd(\psi')\,{<}\,k$, there is a $\rho' \in \FRuns(b,\alpha)$ with $\rho',0 \models \psi'$ iff $\alpha \models \chP(\psi')(b)$. Then
$\rho,0 \models \psi$ iff $\rho,0 \modelsLTL \toLTLconf(\psi)$.
\end{numberedlemma}
\begin{proof}
We suppose that $\rho_0 = (b_0, \alpha_0)$ and apply induction on $\psi$.
First, if $\psi=\top$ then $\rho,0 \modelsLTL \top$ and $\toLTLconf(\psi) = \top$, so the claim holds.
Second, if $\psi = b \in B$ then $\rho,0 \models \psi$ iff $b=b_0$,
and moreover $\rho,0 \modelsLTL \toLTLconf(\psi) = K_b$ iff $\alpha_0 \models K_b(b_0)$,
which holds iff $b=b_0$.
Third if $\psi = c$, then $\rho,0 \models \psi$ iff $\alpha_0 \models c$,
and $\rho,0 \modelsLTL \toLTLconf(\psi) = K_c$ iff $\alpha_0 \models K_c(b_0) = c$.
For the induction step, we perform again a case distinction on $\psi$.
If $\psi = \E \psi'$ then $\rho,0 \models \psi$ iff there is some $\rho' \inn \FRuns(b_0, \alpha_0)$ with $\rho',0 \models \psi'$. As $\qd(\psi') < \qd(\psi)$, this holds by assumption iff
$\alpha_0 \models \chP(\psi')(b_0)$.
Moreover, $\rho,0 \modelsLTL \toLTLconf(\psi) = \chP(\psi')$ iff
$\alpha_0 \models \chP(\psi')(b_0)$ by definition of $\modelsLTL$, which proves the claim.
All remaining cases follow from the induction hypothesis and the fact that the definitions of $\LTLconf$ semantics (\defref{ltl:semantics}) and \CTLsf path semantics (\defref{semantics}) coincide in their recursive structure for the boolean and temporal operators.
\qed
\end{proof}

\begin{numberedtheorem}{\ref{thm:main}}
 For every configuration $(b,\alpha)$ of $\BB$ and state property $\chi$,
$(b,\alpha) \models \chi$ iff $\alpha \models \chS(\chi)(b)$.
\end{numberedtheorem}
\begin{proof}
We first show property ($\star$): there is a run $\rho \in \FRuns(b,\alpha)$ with $\rho,0 \models \psi$ iff $\alpha \models \chP(\psi)(b)$.
The proof is by induction on $\qd(\psi)$.
If $\psi$ contains no path quantifiers, \lemref{toLTL} implies that
$\rho,0 \models \psi$ iff $\rho,0 \modelsLTL \toLTLconf(\psi)$ for all runs $\rho$, so ($\star$)
follows from \lemref{chP}.
In the induction step, we conclude from \lemref{toLTL} (a), using the induction hypothesis as assumption, that $\rho,0 \models \psi$ iff $\rho,0 \modelsLTL \toLTLconf(\psi)$ for all runs $\rho$. Then ($\star$) follows from \lemref{chP}.

The claim of the lemma can be shown by induction on $\chi$:
There are three base cases:
if $\chi\eqn\top$ or $\chi \eqn c$ then the claim is trivial as $\chS(\chi)(b)=\chi$; and
if $\chi = b'$ for some $b'\in B$, $(b,\alpha) \models \chi$ iff $b=b'$,
and the same condition applies to $\alpha \models \chS(\chi)(b)$ by definition of $K_{b'}$.
The inductive step also distinguishes three cases:
First, if $\chi = \neg \chi'$ then by the induction hypothesis
$(b,\alpha) \models \chi'$ iff $\alpha \models \chS(\chi')(b)$.
So by definition, $(b,\alpha) \models \chi$ iff $(b,\alpha) \not\models \chi'$ iff
$\alpha \not\models \chS(\chi')(b)$, which holds by definition of negation of configuration maps iff $\alpha \models \chS(\chi)(b)$.
Second, if $\chi = \chi_1 \wedge \chi_2$ then by the induction hypothesis
$(b,\alpha) \models \chi_i$ iff $\alpha \models \chS(\chi_i)(b)$ for both $i\in \{1,2\}$.
So by definition, $(b,\alpha) \models \chi$ iff both $(b,\alpha) \models \chi_i$, which holds iff
$\alpha \models \chS(\chi_i)(b)$ for both $i\in \{1,2\}$. By definition of conjunction on configuration maps this is equivalent to $\alpha \models \chS(\chi)(b)$.
Finally, if $\chi = \E \psi$ then $(b,\alpha) \models \chi$ iff there is a run $\rho \in \FRuns(b,\alpha)$
such that $\rho, 0 \models \psi$. We use ($\star$) to conclude that this is the case iff 
$\alpha \models \chP(\psi)(b) = \chS(\chi)(b)$.
\qed
\end{proof}

\begin{numberedlemma}{\ref{lem:formula:sets}}
Let $\psi$ have quantifier depth $k$, $\psi' = \toLTLconf(\psi)$, and 
$\smash{\NN_{\BB,b}^{\psi'}}$ be a constraint graph constructed in $\chP(\psi)$ for some $b\in B$.
Then, 
\begin{compactenum}[(1)]
\item for all nodes $(b', q, \phi)$ in $\smash{\NN_{\BB,b}^{\psi'}}$ there is some $\phi' \in \HH_k$ such that $\phi \equiv \phi'$, 
\item $\chP(\psi)(b)$ is equivalent to a formula in $\Phi_{k+1}$.
\end{compactenum}
\end{numberedlemma}
\begin{proof}
We prove the statements by induction on $k$. In the base case, $\psi$ contains
no path quantifiers, so by definition of $\toLTLconf$,
all atoms $K'$ occurring in $\psi'$ satisfy $K'(b') \in \CC\cup\{\top, \bot\} = \Phi_0$ for all $b' \in B$, so $\psi'$ is an $\LTLconf$ formula over basis $\Phi_0$. 
Let $\pi$ be a path to a node $(b', q, \phi)$ in $\smash{\NN_{\BB,b}^{\psi'}}$,
$\sigma:= \sigma(\pi)$ be the associated symbolic run $\sigma \colon b_0 \goto{a_1} \dots \goto{a_n} b_n$, and
$w:=w(\pi)$ the associated word in $\Sigma^*$.
By \lemref{pc}\,(1), $\varphi \equiv h(\sigma,w\otimes \sigma)$.
The word $w = \seqz{w}$ satisfies $w_i\in 2^{\Confs(\Phi_0^\pm)}$, so for
$w\otimes \sigma = \seqz{\theta}$ we have $\theta_i = \bigwedge T_i$ 
for some $T_i \subseteq \Phi_0^\pm$.
Therefore
$h(\sigma,w\otimes \sigma) \in \HH_0$, so that (1) holds.
Furthermore,
the symbolic configuration map $\Conf$ returned by $\chP$ satisfies
$\Conf(b) = \bigvee_{\phi \in \Phi_F} \exists \vec U.\ \phi(\vec V,\vec U)$,
where every $\phi$ is equivalent to some formula in 
$\HH_0$. Hence $\Conf(b)$ is equivalent to a formula in $\Phi_1$ by definition of $\Phi$, so that $(2)$ holds.

In the step case, we consider a formula $\psi$ of quantifier depth $k+1$, and
the induction hypothesis is that $(1)$ and $(2)$ hold for a formula of depth $k$.
The call to $\toLTLconf(\psi)$ replaces all occurrences of subformulas 
$\E \eta$ by
$\chP(\eta)$, where $\eta$ has quantifier depth at most $k$.
By part (2) of the induction hypothesis, $\chP(\eta)(b)$ is equivalent to a formula in $\bigcup_{i=0}^{k+1}\Phi_{k+1}$ for all $b\in B$.
If we abbreviate $\Theta := \bigcup_{i=0}^{k+1}\Phi_{k+1}^\pm$,
we can thus assume that 
$\psi' = \toLTLconf(\psi)$ is a formula over $\Theta$.
Let $\pi$ be a path to a node $(b', q, \phi)$ in $\smash{\NN_{\BB,b}^{\psi'}}$,
$\sigma:= \sigma(\pi)$ be the associated symbolic run $\sigma \colon b_0 \goto{a_1} \dots \goto{a_n} b_n$, and $w:=w(\pi)$ the associated word.
By \lemref{pc}\,(1), $\varphi \equiv h(\sigma,w\otimes \sigma)$.
The word $w = \seqz{w}$ satisfies $w_i\in 2^{\Confs(\Theta)}$, so for
$w\otimes \sigma = \seqz{\theta}$ we have $\theta_i  = \bigwedge T_i$ for some $T_i \subseteq \Theta$.
Therefore
$h(\sigma,w\otimes \sigma) \in \HH_{k+1}$, so that (1) holds.
Furthermore,
the symbolic configuration map $\Conf$ returned by $\chP$ satisfies
$\Conf(b) = \bigvee_{\phi \in \Phi_F} \exists \vec U \phi(\vec V,\vec U)$,
where every $\phi$ is equivalent to some formula in 
$\HH_{k+1}$.
Hence $\Conf(b)$ is equivalent to a formula in $\Phi_{k+2}$,
so that $(2)$ holds.
\qed
\end{proof}

In IPCs of the form $x \odot d$ for $\odot \in \{=,\neq, <, >\}$, $x \equiv_k y + d$, and $x \equiv_k d$, we call $d$ a \emph{constant} and $k$ a \emph{modulus}.

\begin{numberedtheorem}{\ref{thm:ipc}}
For any IPC-DDSA $\BB$ and IPC-property $\chi$ the verification problem is decidable.
\end{numberedtheorem}
\begin{proof}
\newcommand{\IPCC}{\Phi_{\mathit{IPC}}}
Let $\chi$ be an IPC-property, $\Const$ the finite set of constants $d$ in $\chi$, ${\alpha_0}$, and guards of $\BB$, and $K$ the least common multiple of all moduli $k_1, \dots, k_m$ that occur in $\chi$ and guards of $\BB$.
Let moreover $\IPCC$ be the set of quantifier-free boolean formulas whose atoms are IPCs over variables $V \cup V_0$, moduli $k_1, \dots, k_m, K$, and constants $\Const$,
so $\IPCC$ is finite up to equivalence.

We show the following property ($\star$): all history constraints $\hist(\sigma,\vec \theta)$ 
over $\IPCC$ are equivalent to a formula in $\IPCC$.
For a symbolic run 
$\sigma\colon b_0 \goto{a_1} b_1 \goto{a_2} \dots \goto{a_n} b_n$ and a sequence
$\vseq= \langle \theta_0,\dots, \theta_n\rangle$ over $\IPCC$, the proof is by induction on $n$.
In the base case $n=0$, $\hist(\sigma, \vseq) \eqn \phi_\nu \wedge \theta_0$ is in $\IPCC$ because
$\phi_\nu$ is a conjunction of equalities between variables, and $\theta_0 \in \IPCC$ by assumption.
In the induction step, $\hist(\sigma, \vseq) \eqn \update(\hist(\sigma|_{n-1}, \vseq|_{n-1}), a_{n}) \wedge \theta_n$. By induction hypothesis, $\hist(\sigma|_{n-1}, \vseq|_{n-1})$ is equivalent to a formula $\phi$ in $\IPCC$.
Thus $\hist(\sigma, \vseq) \equiv \phi $ for the formula 
$\phi = \exists \vec U. \phi(\vec U) \wedge \Delta_{a_n}(\vec U, \vec V) \wedge \theta_n$.
By assumption, $\Delta_a(\vec U, \vec V)$ is a
conjunction of IPCs over $V\cup U$, moduli $k_1, \dots, k_m$, and $\Const$, and $\theta_n \in \IPCC$ as well.
According to the quantifier elimination property proven in \cite[Thm. 2]{Demri06}, there exists a quantifier-free IPC-formula $\phi'$ over variables $V_0\cup V$, modulus $K$, and $\Const$ that is equivalent to $\exists \vec U. \phi(\vec U) \wedge \Delta_{a_n}(\vec U, \vec V) \wedge \theta_n$, so $\phi' \in \IPCC$.

We now show that $\HH_j(b) \subseteq \IPCC$ for all $j\geq 0$, by induction on $j$.
In the base case ($j\eqn 0$) the claim follows from ($\star$), since all constraints in $\chi$ are in $\IPCC$.
For $j>0$, consider first a formula $\widehat \phi \in\Phi_{j}$.
Then $\widehat \phi$ is of the form
$\widehat \phi = \bigvee_{\phi \in H} \exists \vec U.\ \phi(\vec V, \vec U)$ for some
$H\,{\subseteq}\, \HH_{j-1}$. By the induction hypothesis, $H\subseteq \IPCC$,
so by the above quantifier elimination property, $\widehat \phi$ is equivalent
to a formula $\phi' \in \IPCC$. As $\HH_j$ consists of all history constraints over $\Phi_{j}$, the
claim follows from $(\star)$.
\qed
\end{proof}

\section{Examples}
\label{app:examples}

\begin{example}
\label{exa:mc:continued}
We show here the product automata that were omitted in \exaref{mc} for lack
of space.
For the $\LTLconf$ formula $\psi_2' = \F (\Conf_{x<2})$ and state $\m b_1$, we get the following automaton:
\begin{center}
\begin{tikzpicture}[node distance=10mm,>=stealth']
\tikzstyle{node} = [draw,rectangle split, rectangle split parts=3,rectangle split horizontal, rectangle split draw splits=true, inner sep=3pt, scale=.7, rounded corners]
\tikzstyle{goto} = [->]
\tikzstyle{action}=[scale=.6, black]
\tikzstyle{constr}=[scale=.5, black]
\tikzstyle{accepting state} = [fill=red!80!black!15]
\node[node] (0)  {\pcnode{$\m b_0$}{$\psi_2'$}{$x\eqn x_0 \wedge y \eqn y_0$}};
\node[node, below of = 0, xshift=-26mm] (1a)
  {\pcnode{$\m b_1$}{$\psi_2'$}{$x\eqn x_0 \wedge y \eqn y_0 $}};
\node[node, below of = 0, xshift=26mm] (1b)
  {\pcnode{$\m b_1$}{$\top$}{$x\eqn x_0 \wedge y \eqn y_0 \wedge x\,{<}\,2$}};
\node[node, below of = 1a] (2a)
  {\pcnode{$\m b_2$}{$\psi_2'$}{$x\eqn x_0 \wedge y\,{>}\,0$}};
\node[node, below of = 1b] (2b)
  {\pcnode{$\m b_2$}{$\top$}{$x\eqn x_0 \wedge y\,{>}\,0 \wedge x\,{<}\,2$}};
\node[node, below of = 2b, accepting state] (3b)  
  {\pcnode{$\m b_3$}{$\top$}{$x\eqn x_0 \eqn y \wedge y\,{>}\,0 \wedge x\,{<}\,2$}};
\node[node, below of = 2a] (3a)  
  {\pcnode{$\m b_2$}{$\top$}{$x\,{\geq}\,y \wedge y\,{>}\,0 \wedge x\,{<}\,2$}};
\node[node, below of = 2a, xshift=-45mm] (3c)  
  {\pcnode{$\m b_2$}{$\psi_2'$}{$x\,{\geq}\,y \wedge y\,{>}\,0$}};
\node[node, below of = 3a] (4a)
  {\pcnode{$\m b_2$}{$\top$}{$x\,{\geq}\,y \wedge 0\,{<}\,y\,{<}\,2$}};
\node[node, below of = 3b] (4b)
  {\pcnode{$\m b_2$}{$\top$}{$x\,{\geq}\,y \wedge y\,{>}\,0 \wedge x_0\,{<}\,2$}};
\node[node, below of = 4a, accepting state] (5a)
  {\pcnode{$\m b_3$}{$\top$}{$x\eqn y \wedge 0\,{<}\,y\,{<}\,2$}};
\node[node, below of = 4b, accepting state] (5b)
  {\pcnode{$\m b_3$}{$\top$}{$x\eqn y \wedge y\,{>}\,0 \wedge x_0\,{<}\,2$}};
\draw[edge] ($(0) + (0,.4)$) -- (0);
\draw[edge] (0) -- (1a);
\draw[edge] (0) -- node[action, above, very near end, xshift=3mm, yshift=-1mm] {$\Conf_{x<2}$} (1b);
\draw[edge] (1a) -- (2a);
\draw[edge] (1b) -- (2b);
\draw[edge,dashed] (2a) -- ($(2a.west) + (-1,0)$);
\draw[edge,dashed] (3c) -- ($(3c) + (0,-1)$);
\draw[edge] (1a) -- node[action, above, xshift=-2mm] {$\Conf_{x<2}$} (2b);
\draw[edge] (2a) --node[action, above, xshift=1mm, near start] {$\Conf_{x<2}$} (3b);
\draw[edge] (2a) --node[action, left, xshift=-2mm] {$\Conf_{x<2}$} (3a);
\draw[edge] (2b) -- (3b);
\draw[edge, bend right=10] (2a.west) to (3c);
\draw[edge, bend left=38] (2b.east) to (4b.east);
\draw[->] (3c) to[loop left, min distance=5mm, out=177, in=183, looseness=8] (3c);
\draw[edge] (3c) --node[action, above, yshift=0mm] {$\Conf_{x<2}$} (3a);
\draw[edge] (3c.-5) --node[action, left, near end, yshift=0mm] {$\Conf_{x<2}$} (5a.175);
\draw[edge] (3a) -- (4a);
\draw[edge] (4a) -- (5a);
\draw[edge] (4b) -- (5b);
\draw[->] (4b) to[loop left, min distance=4mm, out=177, in=183, looseness=4] (4b);
\draw[->] (4a) to[loop left, min distance=4mm, out=3, in=-3, looseness=4] (4a);
\node[right of = 3b, scale=.7, red!80!black, xshift=15mm] {$\phi_1$};
\node[right of = 5a, scale=.7, red!80!black, xshift=9mm] {$\phi_2$};
\node[right of = 5b, scale=.7, red!80!black, xshift=12mm] {$\phi_3$};
\end{tikzpicture}
\end{center}
For $\vec U = \langle \hat x, \hat y\rangle$, and the formulas $\phi_1$, $\phi_2$, and $\phi_3$
in final nodes, we compute\\[.5ex]
\centerline{
$\begin{array}{rll}
\exists \vec U.\: \phi_1(\vec V, \vec U) &= \exists \hat x\,\hat y.\ 
 \hat x \eqn x\eqn \hat y \wedge \hat y > 0 \wedge \hat x < 2 &\equiv x > 0 \wedge x < 2 \\
\exists \vec U.\: \phi_2(\vec V, \vec U) &= \exists \hat x\,\hat y.\ 
 \hat x \eqn \hat y \wedge 0\,{<}\,\hat y\,{<}\, 2) &\equiv \top \\
\exists \vec U.\: \phi_3(\vec V, \vec U) &= \exists \hat x\,\hat y.\ 
 \hat x\eqn \hat y \wedge \hat y\,{>}\,0 \wedge x\,{<}\,2 &\equiv x < 2
\end{array}$}\\
so that $K_3=\chP(\psi_1)$ sets
$K_3(\m b_1) = \bigvee_{i=1}^3 \exists \vec U.\: \phi_i(\vec V, \vec U) \equiv \top$.
For state $\m b_3$, we get the following simple automaton:
\begin{center}
\begin{tikzpicture}[node distance=10mm,>=stealth']
\tikzstyle{node} = [draw,rectangle split, rectangle split parts=3,rectangle split horizontal, rectangle split draw splits=true, inner sep=3pt, scale=.7, rounded corners]
\tikzstyle{goto} = [->]
\tikzstyle{action}=[scale=.6, black]
\tikzstyle{constr}=[scale=.5, black]
\tikzstyle{accepting state} = [fill=red!80!black!15]
\node[node] (0)  {\pcnode{$\m b_0$}{$\psi_2'$}{$x\eqn x_0 \wedge y \eqn y_0$}};
\node[node, right of = 0, xshift=45mm, accepting state] (1a)
  {\pcnode{$\m b_3$}{$\top$}{$x\eqn x_0 \wedge y \eqn y_0 \wedge x\,{<}\,2$}};
\draw[edge] ($(0) + (0,.4)$) -- (0);
\draw[edge] (0) -- node[action, above, xshift=1mm, near start] {$\Conf_{x<2}$} (1a);
\draw[edge,dashed] (0) -- ($(0.west) + (-1,0)$);
\end{tikzpicture}
\end{center}
For the formula $\phi$ in the final state we have
$\exists \vec U.\: \phi_3(\vec V, \vec U) = \exists \hat x\,\hat y.\ 
 \hat x\eqn x \wedge \hat y \eqn y \wedge x\,{<}\,2 \equiv x < 2$ so that
$K_3(\m b_3) = x < 2$.
\end{example}

The next example illustrates our approach on some simple properties
to illustrate how the branching time requirements are reflected.

\begin{example}
Let $\BB$ be the following simple DDSA:
\begin{center}
\begin{tikzpicture}[node distance=20mm]
 \node[state] (0) {$\m b_1$};
\draw[edge] ($(0) + (-.4,0)$) -- (0);
 \node[state, right of=0] (1) {$\m b_2$};
 \node[state, right of=1, yshift=5mm, double] (2) {$\m b_3$};
 \node[state, right of=1, yshift=-5mm, double] (3) {$\m b_4$};
 \draw[edge] (0) -- node[action, above] {$x' \geq 0$} (1);
 \draw[edge] (1) -- node[action, above] {$x = 1$} (2);
 \draw[edge] (1) -- node[action, below] {$x \neq 1$} (3);
\end{tikzpicture}
\end{center}
\begin{itemize}
\item Consider $\psi_1 = \E\X ((x = 1) \wedge \E\X (x = 2))$.
We first evaluate $\E\X (x = 2)$ on all states.
The NFA for the formula $\psi_0 = \X K_{x = 2}$ is as follows, for $q_1 = \psi_1$ and
$q_2 = K_{x = 2}$:
\begin{center}
\begin{tikzpicture}[node distance=25mm]
\draw[edge] ($(0) + (-.4,0)$) -- (0);
 \node[state, minimum width=6mm] (0) {$q_1$};
 \node[state, right of=0, minimum width=6mm] (1) {$q_2$};
 \node[state, right of=1, double, minimum width=6mm] (2) {$\top$};
 \draw[edge] (0) -- (1);
 \draw[edge] (1) -- node[action, above] {$\{K_{x = 2}\}$} (2);
\draw[->] (2) to[loop above, looseness=6]  (2);
\end{tikzpicture}
\end{center}
This leads to the product automata shown next:
\begin{center}
\begin{tabular}{cccc}
$\m b_1$ & $\m b_2$ & $\m b_3$ & $\m b_4$ \\
\begin{tikzpicture}[node distance=10mm,>=stealth']
\tikzstyle{node} = [draw,rectangle split, rectangle split parts=3,rectangle split horizontal, rectangle split draw splits=true, inner sep=3pt, scale=.7, rounded corners]
\tikzstyle{goto} = [->]
\tikzstyle{action}=[scale=.6, black]
\tikzstyle{constr}=[scale=.5, black]
\tikzstyle{accepting state} = [fill=red!80!black!15]
\node[node] (0)  {\pcnode{$\bdummy[\m b]$}{$q_1$}{$x\,{=}\,x_0$}};
\node[node, below of = 0] (1)  {\pcnode{$\m b_1$}{$q_2$}{$x\,{=}\,x_0$}};
\node[node, below of = 1] (2)  {\pcnode{$\m b_2$}{$\top$}{$x\,{=}\,2$}};
\node[node, below of = 2, accepting state] (3)  {\pcnode{$\m b_3$}{$\top$}{$x\,{=}\,2$}};
\draw[edge] (0) -- (1);
\draw[edge] (1) -- (2);
\draw[edge] (2) -- (3);
\end{tikzpicture}
&
\begin{tikzpicture}[node distance=10mm,>=stealth']
\tikzstyle{node} = [draw,rectangle split, rectangle split parts=3,rectangle split horizontal, rectangle split draw splits=true, inner sep=3pt, scale=.7, rounded corners]
\tikzstyle{goto} = [->]
\tikzstyle{action}=[scale=.6, black]
\tikzstyle{constr}=[scale=.5, black]
\tikzstyle{accepting state} = [fill=red!80!black!15]
\node[node, below of = 0] (1)  {\pcnode{$\bdummy[\m b]$}{$q_2$}{$x\,{=}\,x_0$}};
\node[node, below of = 1] (2)  {\pcnode{$\m b_2$}{$q_1$}{$x\,{=}\,x_0$}};
\node[node, below of = 2, accepting state] (3)  {\pcnode{$\m b_3$}{$\top$}{$x\,{=}\,x_0\,{=}\,2$}};
\draw[edge] (1) -- (2);
\draw[edge] (2) -- (3);
\end{tikzpicture}
&
\begin{tikzpicture}[node distance=10mm,>=stealth']
\tikzstyle{node} = [draw,rectangle split, rectangle split parts=3,rectangle split horizontal, rectangle split draw splits=true, inner sep=3pt, scale=.7, rounded corners]
\tikzstyle{goto} = [->]
\tikzstyle{action}=[scale=.6, black]
\tikzstyle{constr}=[scale=.5, black]
\tikzstyle{accepting state} = [fill=red!80!black!15]
\node[node, below of = 0] (1)  {\pcnode{$\bdummy[\m b]$}{$q_2$}{$x\,{=}\,x_0$}};
\node[node, below of = 1] (2)  {\pcnode{$\m b_3$}{$\top$}{$x\,{=}\,x_0$}};
\draw[edge] (1) -- (2);
\end{tikzpicture}
&
\begin{tikzpicture}[node distance=10mm,>=stealth']
\tikzstyle{node} = [draw,rectangle split, rectangle split parts=3,rectangle split horizontal, rectangle split draw splits=true, inner sep=3pt, scale=.7, rounded corners]
\tikzstyle{goto} = [->]
\tikzstyle{action}=[scale=.6, black]
\tikzstyle{constr}=[scale=.5, black]
\tikzstyle{accepting state} = [fill=red!80!black!15]
\node[node, below of = 0] (1)  {\pcnode{$\bdummy[\m b]$}{$q_2$}{$x\,{=}\,x_0$}};
\node[node, below of = 1] (2)  {\pcnode{$\m b_4$}{$\top$}{$x\,{=}\,x_0$}};
\draw[edge] (1) -- (2);
\end{tikzpicture}
\end{tabular}
\end{center}
For $\m b_1$,  line 6 of $\chP$ thus yields $K'(\m b_1) = (\exists x. x=2)(\vec V) = \top$, and $K'(\m b_2) = (\exists x. x=2 \wedge x_0 = 2)(\vec V) = (x = 2)$.
Overall, the evaluation of $\E\X (x = 2)$ thus yields $K'$ such that 
$K' = \{\m b_1 \mapsto \top,\ \m b_2 \mapsto (x = 2), \m b_3 \mapsto \bot,\ \m b_4 \mapsto \bot\}$.
We hence construct the NFA for the formula $\psi_1' = \X (K_{x = 1} \wedge K')$, 
which looks as follows, for $q_1 =\psi_1'$ and $q_2 = K_{x = 1} \wedge K'$:
\begin{center}
\begin{tikzpicture}[node distance=25mm]
 \node[state, minimum width=6mm] (0) {$q_1$};
 \node[state, right of=0, minimum width=6mm] (1) {$q_2$};
 \node[state, right of=1, double, minimum width=6mm] (2) {$\top$};
 \draw[edge] (0) -- (1);
 \draw[edge] (1) -- node[action, above] {$\{K_{x = 1},  K'\}$} (2);
\draw[edge] ($(0) + (-.4,0)$) -- (0);
\end{tikzpicture}
\end{center}
We focus now on the evaluation of $\psi_1$ in state $\m b_1$.
The product construction for $\BB$, $\m b_1$, and $\psi_1'$ is started as follows:\\
\centerline{
\begin{tikzpicture}[node distance=30mm,>=stealth']
\tikzstyle{node} = [draw,rectangle split, rectangle split parts=3,rectangle split horizontal, rectangle split draw splits=true, inner sep=3pt, scale=.7, rounded corners]
\tikzstyle{goto} = [->]
\tikzstyle{action}=[scale=.6, black]
\tikzstyle{constr}=[scale=.5, black]
\node[node] (0)  {\pcnode{$\m b_0$}{$q_1$}{$x\,{=}\,x_0$}};
\node[node, right of = 0] (1)  {\pcnode{$\m b_1$}{$q_2$}{$x\,{=}\,x_0$}};
\draw[edge] (0) -- (1);
\end{tikzpicture}
}\\
However, at this point the next product transition would combine 
$\m b_1 \goto{x' \geq 0} \m b_2$ with
$\m q_2 \goto{\{K_{x = 1},  K'\}} \m \top$:
The labels $K_{x = 1}$ and $K_1$ evaluated at the destination state $\m b_2$,
yield the constraints $x = 1$ and $x=2$, but since their conjunction is unsatisfiable,
the product construction stops at this point. Hence there are no final states, so that 
the resulting configuration map $K:=\chP(\psi_1)$ sets $K(\m b_1) = \bot$, as expected.
\item
Consider $\psi_2 = (\E\X (x = 1)) \wedge (\E\X (x = 2))$.
To recursively process the formula, we first evaluate $\E\X (x = 1)$ on all states as above,
which yields
$K_1 = \{\m b_1 \mapsto \top,\ \m b_2 \mapsto (x = 1), \m b_3 \mapsto \bot,\ \m b_4 \mapsto \bot\}$, and similarly for $\E\X (x = 1)$ we obtain 
$K_2 = \{\m b_1 \mapsto \top,\ \m b_2 \mapsto (x = 2), \m b_3 \mapsto \bot,\ \m b_4 \mapsto \bot\}$.
When evaluating $\psi_2$, we hence return 
$K_1 \wedge K_2 = \{\m b_1 \mapsto \top,\ \m b_2 \mapsto \bot, \m b_3 \mapsto \bot,\ \m b_4 \mapsto \bot\}$.
\item 
Consider $\psi_3 = \E \X(\E\X (x = 1) \wedge \E\X (x = 2))$.
As above, we first evaluate $K_1$ and $K_2$ as above.
We then construct the NFA for the formula $\psi_3' = \X (K_1 \wedge K_2)$, 
which looks as follows, for $q_1 =\psi_3'$ and $q_2 = K_1 \wedge K_2$:\\
\centerline{
\begin{tikzpicture}[node distance=25mm]
 \node[state, minimum width=6mm] (0) {$q_1$};
 \node[state, right of=0, minimum width=6mm] (1) {$q_2$};
 \node[state, right of=1, double, minimum width=6mm] (2) {$\top$};
 \draw[edge] (0) -- (1);
 \draw[edge] (1) -- node[action, above] {$\{K_1,  K_2\}$} (2);
\draw[edge] ($(0) + (-.4,0)$) -- (0);
\end{tikzpicture}
} \\
To evaluate $\psi_3$ in state $\m b_1$
the product construction for $\BB$, $\m b_1$, and $\psi_3'$ is again started as follows:\\
\centerline{
\begin{tikzpicture}[node distance=30mm,>=stealth']
\tikzstyle{node} = [draw,rectangle split, rectangle split parts=3,rectangle split horizontal, rectangle split draw splits=true, inner sep=3pt, scale=.7, rounded corners]
\tikzstyle{goto} = [->]
\tikzstyle{action}=[scale=.6, black]
\tikzstyle{constr}=[scale=.5, black]
\node[node] (0)  {\pcnode{$\m b_0$}{$q_1$}{$x\,{=}\,x_0$}};
\node[node, right of = 0] (1)  {\pcnode{$\m b_1$}{$q_2$}{$x\,{=}\,x_0$}};
\draw[edge] (0) -- (1);
\end{tikzpicture}
} \\
However, at this point the next product transition would combine 
$\m b_1 \goto{x' \geq 0} \m b_2$ with
$\m q_2 \goto{\{K_1,  K_2\}} \m \top$:
The labels $K_1$ and $K_2$ evaluated at the desination state $\m b_2$
yield the constraints $x = 1$ and $x=2$, and since their conjunction is unsatisfiable,
the product construction stops at this point, without producing a final state. Hence 
the resulting configuration map $K$ sets $K(\m b_1) = \bot$.
\end{itemize}
\end{example}

\section{NFA Construction}
\label{app:nfa}

For the following construction, we assume that $\psi \in \LTLconf$ is in negation normal form.
To this end we need to extend the grammar for $\LTLconf$ formulas to allow disjunction $\psi_1 \vee \psi_2$ and a weak next operator $\Xw \psi$. The semantics \defref{ltl:semantics} is extended as
$\rho,i \models \psi_1 \vee \psi_2$ iff 
$\rho,i \models \psi_1$ or $\rho,i \models \psi_2$, and
$\rho,i \models \Xw \psi$ iff $i = n$ or $\rho,i+1 \models \psi$.
Then a formula $\neg \X \psi$ can be written as $\Xw \neg \psi$, so that we can assume
$\psi$ to be in negation normal form.
We can assume that the only base case is $K\in \Confs(\Phi)$
because for every $K$ also $\neg K$ is in $\Confs(\Phi)$.

We build an NFA $\NFApsi = (Q, \Sigma, \varrho, q_0, Q_F)$, where
\begin{inparaenum}[\it (i)]
\item the set $Q$ of states is a set of quoted $\LTLconf$ formulas together with $\{\inquotes{\top}, \inquotes{\bot}\}$; 
\item $\Sigma\,{=}\, 2^{\Confs(\Phi)}$ is the alphabet;
\item $\varrho \subseteq Q \times \Sigma \times Q$ is the transition relation; 
\item $q_0 \in Q$ is the initial state;
\item $Q_F\subseteq Q$ is the set of final states. 
\end{inparaenum}

Following~\cite{GMM14}, we define $\varrho$ using an auxiliary function $\delta$
and a new proposition $\last$ that marks the last element of the trace.
The input of $\delta$ is a 
(quoted) formula $\psi \in \LTLconf$, and its output a set of tuples
$(\inquotes{\psi'},\varsigma)$ where $\psi'$ has the same type as $\psi$ and 
$\varsigma \in 2^{S\cup \{\last, \neg \last\}}$.
For two sets of such tuples $R_1$, $R_2$, and $\odot$ either $\wedge$ or $\vee$, let
$R_1 \odot R_2 = \{ (\inquotes{\psi_1 \odot \psi_2}, \varsigma_1 \cup \varsigma_2) \mid (\inquotes{\psi_1}, \varsigma_1) \inn R_1, (\inquotes{\psi_2}, \varsigma_2) \inn R_2 \}$, where we simplify $\psi_1 \odot \psi_2$ if possible. 
The function $\delta$ is as follows:

\noindent
\begin{tabular}{@{~}l@{~}l}
$\delta(\inquotes{\top})$ &= 
  $\{(\inquotes{\top},\emptyset)\}
  \text{ and }\delta(\inquotes{\bot}) = \{(\inquotes{\bot},\emptyset)\}$\\
$\delta(\inquotes{K})$ &= 
  $\{(\inquotes{\top},\{K\}),(\inquotes{\bot},\emptyset)\} 
  \text{ if $K \in \Confs(\Phi)$}$\\
$\delta(\inquotes{\psi_1 \vee \psi_2})$ &=
  $\delta(\inquotes{\psi_1}) \vee \delta(\inquotes{\psi_2})$ \\
$\delta(\inquotes{\psi_1 \wedge \psi_2})$ &= 
  $\delta(\inquotes{\psi_1}) \wedge \delta(\inquotes{\psi_2})$ \\
$\delta(\inquotes{\X \psi})$ &= 
  $\{(\inquotes{\psi},\{\neg \last\}), (\inquotes{\bot}, \{\last\})\}$ \\
$\delta(\inquotes{\Xw \psi})$ &= 
  $\{(\inquotes{\psi},\{\neg \last\}), (\inquotes{\top}, \{\last\})\}$ \\
$\delta(\inquotes{\G \psi})$ &=
  $\delta(\inquotes{\psi}) \wedge (\delta(\inquotes{\X\G \psi}) \vee \delta_\lambda)$ \\
$\delta(\inquotes{\psi_1 \U \psi_2})$ &=
  $\delta(\inquotes{\psi_2}) \vee (\delta(\inquotes{\psi_1})
  \wedge \delta(\inquotes{\X(\psi_1 \U \psi_2)}))$
\end{tabular}

\noindent
where $\delta_\lambda$ abbreviates 
$\smash{\{(\inquotes{\top},\{ \lambda\}),(\inquotes{\bot},\{\neg\lambda\})\}}$.
While the symbol $\last$ is needed for the construction, we can omit it
from the NFA, and define $\NFApsi$ as follows:
\begin{definition}
\label{def:NFA}
For a formula $\psi \inn \LTLconf$, let the NFA 
$\NFApsi\,{=}\,(Q, \Sigma, \varrho, q_0, \{q_f, q_e\})$
be given by
$q_0\,{=}\,\inquotes{\psi}$,
$q_f\,{=}\,\inquotes{\top}$ and $q_e$ is an additional final state,
and
$Q$, $\varrho$ are the smallest sets such that $q_0, q_f, q_e \in Q$ and whenever 
$q\in Q\setminus\{q_e\}$ and $(q', \varsigma)\in \delta(q)$
such that $\{\last,\neg \last\} \not\subseteq \varsigma$
then $q'\in Q$ and 
\begin{compactenum}[(i)]
\item if $\last \not\in \varsigma$ then
$(q, \varsigma \setminus\{\last, \neg \last\}, q') \in \varrho$, and
\item
if $\last \in \varsigma$ and $q' = \inquotes{\top}$ then
$(q, \varsigma \setminus\{\last, \neg \last\}, q_e) \in \varrho$.
\end{compactenum}
\end{definition}

In order to express correctness, more precise consistency notions are required.
Let $\Sigma' =  2^{\Confs(\Phi) \cup\{\last, \neg \last\}}$.
Then,
$\varsigma \inn \Sigma$ is 
\emph{consistent with step $i$} of a run
\begin{equation}
\label{eq:therun}
\rho\colon
(b_0, \alpha_0) \goto{a_1}
(b_1, \alpha_1) \goto{a_2} \dots 
\goto{} (b_n, \alpha_n)
\end{equation}
if $\alpha_i \models \varsigma(b_i)$. Moreover,
$\varsigma \inn \Sigma'$ is \emph{$\lambda$-consistent with step $i$} of $\rho$
if $\varsigma$ is consistent with step $i$ of $\rho$, 
if $i<n$ then $\last \not\in \varsigma$, and 
if $i=n$ then $\neg \last \not\in \varsigma$.
By definition, a word $\varsigma_0 \varsigma_1 \cdots \varsigma_n\in \Sigma^*$ is consistent
with a run $\rho$ if $\varsigma_i$ is consistent with step $i$ of $\rho$ for all 
$i$, $0\,{\leq}\,i\,{\leq}\,n$.

We first note that the function $\delta$ is total in the sense that 
for every assignment $\alpha$ and run $\rho$,
the returned set has an entry that is $\lambda$-consistent with $\alpha$ and $\rho$.

\begin{lemma}
\label{lem:delta total}
For every run $\rho$ of the form \eqref{therun}, every $i$, $0\leq i \leq n$,
and $\psi \in \LTLconf$,
there is some $(\inquotes{\psi'}, \varsigma) \in \delta(\inquotes{\psi})$ 
such that 
$\varsigma$ is $\lambda$-consistent with step $i$ of $\rho$.
\end{lemma}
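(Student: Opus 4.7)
The plan is to prove the statement by structural induction on $\psi$, with a case split on whether $i < n$ (an interior position) or $i = n$ (the final position of $\rho$) in the cases where $\delta$ produces tuples mentioning $\last$ or $\neg\last$. The induction is well-founded because the only clauses of $\delta$ that recurse on a non-strictly-smaller formula, namely $\G\psi'$ and $\psi_1 \U \psi_2$, invoke $\delta(\inquotes{\X\,\G \psi'})$ and $\delta(\inquotes{\X(\psi_1 \U \psi_2)})$, which are resolved by a single non-recursive application of the $\X$ clause; so I can reason with an induction measure that first unfolds those $\X$ wrappers and then inducts on subformula structure.

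In the base cases $\inquotes{\top}$ and $\inquotes{\bot}$ the unique output tuple has $\varsigma = \emptyset$, which is trivially $\lambda$-consistent with every step. For $\inquotes{K}$, the tuple $(\inquotes{\bot},\emptyset)$ is always $\lambda$-consistent, regardless of $\alpha_i$. For the boolean cases $\psi_1 \odot \psi_2$, I would take $\lambda$-consistent witnesses $(\inquotes{\psi_j'}, \varsigma_j) \in \delta(\inquotes{\psi_j})$ for $j\,{\in}\,\{1,2\}$ from the induction hypothesis and combine them to obtain $(\inquotes{\psi_1' \odot \psi_2'}, \varsigma_1 \cup \varsigma_2) \in \delta(\inquotes{\psi})$; $\lambda$-consistency is preserved under union because if neither $\varsigma_j$ contains $\last$ (resp.\ $\neg\last$) then neither does the union, and satisfaction of constraints on $\alpha_i$ is closed under conjunction.

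For the temporal cases I split on the position. For $\psi = \X \psi'$, the tuple $(\inquotes{\psi'},\{\neg\last\})$ is $\lambda$-consistent when $i < n$, and $(\inquotes{\bot},\{\last\})$ is $\lambda$-consistent when $i = n$; the analogous split handles $\Xw$. For $\psi = \G \psi'$, the induction hypothesis supplies a $\lambda$-consistent $(\inquotes{\psi''},\varsigma') \in \delta(\inquotes{\psi'})$; it remains to pick a $\lambda$-consistent tuple from $\delta(\inquotes{\X\G\psi'}) \vee \delta_\last$. If $i<n$, one expands $\delta(\inquotes{\X\G\psi'})$ to get $(\inquotes{\G\psi'},\{\neg\last\})$, disjoined with $(\inquotes{\bot},\{\neg\last\})$ from $\delta_\last$, yielding $(\inquotes{\G\psi'},\{\neg\last\})$ after simplification of $\G\psi'\vee\bot$. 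If $i=n$, one instead uses $(\inquotes{\top},\{\last\})$ from $\delta_\last$, obtaining a tuple with $\varsigma \subseteq \{\last\}$. In both cases, conjoining with the IH witness yields a $\lambda$-consistent element of $\delta(\inquotes{\G\psi'})$. The case $\psi = \psi_1 \U \psi_2$ is analogous, taking the IH witness for $\psi_2$ and combining it with a $\lambda$-consistent tuple drawn from the unfolded disjunction.

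The only potentially delicate point is the $\G$ and $\U$ cases, since they appear to recurse on themselves; the obstacle dissolves once one observes that the recursive reference occurs strictly under a next operator whose $\delta$ clause does not recurse further, so that a single syntactic unfolding suffices and the formal induction can be carried out on the multiset measure $(|\psi|, \text{number of unguarded }\G/\U)$ or by directly expanding both temporal clauses before inducting on subformulas.
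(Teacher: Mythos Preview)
Your proposal is correct and follows essentially the same approach as the paper, namely structural induction on $\psi$ using the definition of $\delta$; the paper's proof is in fact a one-liner that leaves all the case analysis you spell out to the reader. Your explicit handling of the well-foundedness issue for the $\G$ and $\U$ cases (observing that the apparent self-reference is guarded by a non-recursive $\X$ unfolding) is a point the paper passes over silently, so your treatment is if anything more careful.
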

\begin{proof}
By induction on the structure of $\psi$ using the definition of $\delta$.
The claim is easy to check for every base
case of the definition of $\delta$, and in all other cases it follows from the induction hypothesis.
\qed
\end{proof}

\smallskip
We next show a crucial feature of the $\delta$ function,
namely that it preserves and reflects the property of a run satisfying a formula.
Both directions are proven by tedious but straightforward induction proofs on the formula
structure.

\begin{lemma}
\label{lem:delta}
Let $\psi \in \LTLconf \cup \{\top,\bot\}$,
$\rho$ a run of the form \eqref{therun}, and $0\,{\leq}\,i\,{\leq}\,n$.
Then
$\rho,i \modelsLTL \psi$ holds if and only if
there is some $(\inquotes{\psi'}, \varsigma)\in \delta(\inquotes{\psi})$ such that\\
\noindent
\begin{tabular}{@{\ }r@{\ }p{8cm}}
$(a)$ & $\varsigma$ is $\lambda$-consistent with step $i$ of $\rho$, \\
$(b)$ &either $i<n$ and $\rho,i{+}1 \modelsLTL \psi'$, or $i\,{=}\,n$ and $\psi' = \top$.
\end{tabular}
\end{lemma}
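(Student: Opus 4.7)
The plan is to prove both directions simultaneously by structural induction on $\psi$ in negation normal form, following the clauses of $\delta$. The overall strategy is that each clause of $\delta$ is designed to mirror exactly one step of the $\LTLconf$ semantics at position $i$, exposing the tail $\psi'$ to be checked at position $i{+}1$ (or terminating at $\top/\bot$ when $i{=}n$).

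For the base cases ($\top$, $\bot$, and $K \inn \Confs(\Phi)$), the claim follows by direct inspection: e.g., $\rho,i \modelsLTL K$ iff $\alpha_i \models K(b_i)$ iff $\varsigma\,{=}\,\{K\}$ is consistent with step $i$, and the tuple $(\inquotes{\top},\{K\}) \inn \delta(\inquotes{K})$ has $\psi'\,{=}\,\top$ which satisfies (b) for both $i\,{<}\,n$ and $i\,{=}\,n$; the other tuple $(\inquotes{\bot},\emptyset)$ provides the witness when $\alpha_i \not\models K(b_i)$. Absence of $\last$ and $\neg\last$ markers from these $\varsigma$'s makes $\lambda$-consistency automatic.

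For the Boolean cases $\psi \eqn \psi_1 \odot \psi_2$ with $\odot \in \{\wedge,\vee\}$, one uses the key algebraic property of the combinator on tuple sets: $(\inquotes{\psi'},\varsigma) \inn \delta(\inquotes{\psi_1}) \odot \delta(\inquotes{\psi_2})$ iff $\psi' = \psi_1' \odot \psi_2'$ and $\varsigma = \varsigma_1 \cup \varsigma_2$ for some $(\inquotes{\psi_j'},\varsigma_j) \inn \delta(\inquotes{\psi_j})$. Combining this with the induction hypothesis applied to $\psi_1,\psi_2$ and the semantic clauses for $\wedge$ and $\vee$ gives the claim; for the $\vee$ case, when only one disjunct's witness is needed on the semantic side, \lemref{delta total} supplies a $\lambda$-consistent companion tuple for the other side. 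For the next operators, $\X \psi$ uses $(\inquotes{\psi},\{\neg\last\})$, whose $\lambda$-consistency forces $i\,{<}\,n$ and then condition (b) reduces to $\rho,i{+}1\modelsLTL \psi$; if $i\,{=}\,n$, only $(\inquotes{\bot},\{\last\})$ is $\lambda$-consistent, which correctly fails (b). For $\Xw\psi$, the extra tuple $(\inquotes{\top},\{\last\})$ handles the end-of-trace case in accordance with the weak-next semantics. The fixpoint cases $\G\psi$ and $\psi_1 \U \psi_2$ reduce to the previously handled ones via their semantic unfoldings and the constructions $\delta(\inquotes{\psi}) \wedge (\delta(\inquotes{\X\G\psi})\vee \delta_\lambda)$ and $\delta(\inquotes{\psi_2}) \vee (\delta(\inquotes{\psi_1})\wedge \delta(\inquotes{\X(\psi_1\U\psi_2)}))$, where the $\delta_\lambda$ branch precisely encodes the stopping condition $i\,{=}\,n$.

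The main obstacle I anticipate is careful bookkeeping of $\lambda$-consistency under the $\wedge/\vee$ combinator: one must show that $\varsigma_1\cup\varsigma_2$ is $\lambda$-consistent with step $i$ iff both $\varsigma_j$ are (requiring that neither contribution sneaks in a forbidden $\last$ or $\neg\last$), and, in the converse direction, split a given combined witness into component witnesses with matching $\lambda$-consistency. A related subtlety is the $\G$ case, where the $\delta_\lambda$ disjunct is the \emph{only} route by which the inductive call on $\inquotes{\X\G\psi}$ can terminate at $i\,{=}\,n$; one must check that this disjunct is used exactly when $i\,{=}\,n$, so that the recursive ``unfolding'' does not produce a spurious requirement of a next step beyond the trace. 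Aside from these care-heavy points, the remaining steps are routine applications of the induction hypothesis and the $\LTLconf$ semantics from \defref{ltl:semantics}.
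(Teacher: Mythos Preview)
Your proposal is correct and follows essentially the same approach as the paper: structural induction on $\psi$, handling the base cases directly, using \lemref{delta total} to supply the companion tuple in the disjunction case, and unfolding $\G$ and $\U$ via their $\delta$-definitions with $\delta_\lambda$ encoding the end-of-trace stopping condition. The only cosmetic difference is that the paper treats the two directions in separate induction proofs rather than simultaneously, but the case analyses and the key technical points you flag (bookkeeping of $\lambda$-consistency under $\wedge/\vee$, and the role of $\delta_\lambda$ in the $\G$ case) are exactly those that arise in the paper's argument.
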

\begin{proof}
($\Longrightarrow$)
We first note that if $\psi' = \top$ then $(b)$ holds for both $i<n$ and $i=n$ ($\star$).
The proof is by induction on $\psi$.
\begin{compactitem}
\item
If $\psi = \top$, we can choose $(\inquotes{\psi'},\varsigma) = (\top, \emptyset)$.
Then, $\emptyset$ is $\lambda$-consistent with any step, and (b) follows from $(\star)$.
\item
If $\rho,i \modelsLTL K$ for some $K \in \Confs(\Phi)$, we may take
$(\inquotes{\top}, \{K\})\in \delta(\inquotes{K})$.
As $\rho,i \modelsLTL K$, $\alpha_i$ satisfies 
$K(b_i)$, so consistency holds and we use ($\star$) for $(b)$.
\item 
If $\rho,i \modelsLTL \X \psi$ then
$i\,{<}\,n$ and $\rho,i{+}1 \modelsLTL \psi$.
For
$(\inquotes{\psi}, \{\neg \last\}) \in \delta(\inquotes{\X \psi})$,
part $(a)$ holds since 
$\neg \last \in \varsigma$ and $i\,{<}\,n$, and
$(b)$ because of $\rho,i{+}1 \modelsLTL \psi$.
\item 
Suppose $\rho,i \modelsLTL \Xw \psi$. If $i=n$ then (a) is by definition, and (b) by $(\star)$.
If $i\,{<}\,n$ then $\rho,i{+}1 \modelsLTL \psi$.
For
$(\inquotes{\psi}, \{\neg \last\}) \in \delta(\inquotes{\Xw \psi})$,
part $(a)$ holds since 
$\neg \last \in \varsigma$ and $i\,{<}\,n$, and
$(b)$ because of $\rho,i{+}1 \modelsLTL \psi$.
\item 
Suppose $\psi = \psi_1\wedge \psi_2$.
By assumption $\rho,i \modelsLTL \psi_1\wedge \psi_2$, and hence 
$\rho,i \modelsLTL \psi_1$ and $\rho,i \modelsLTL \psi_2$.
By the induction hypothesis, there are
$(\inquotes{\psi_1'}, \varsigma_1) \in \delta (\inquotes{\psi_1})$ and
$(\inquotes{\psi_2'}, \varsigma_2) \in \delta (\inquotes{\psi_2})$ such that 
for both $k\in \{1,2\}$,
$(a')$ $\varsigma_k$ is consistent with step $i$ of $\rho$ and
$(b')$ either $\rho,i{+}1 \modelsLTL \psi_k'$, or $i\,{=}\,n$ and $\psi_k' = \top$.
By definition of $\delta$,  we can choose 
$(\inquotes{\psi_1' \wedge \psi_2'}, \varsigma_1 \cup \varsigma_2) \in 
\delta(\inquotes{\psi_1 \wedge \psi_2})$.
Then
$(a)$ follows from $(a')$ and $\varsigma = \varsigma_1 \cup \varsigma_2$, and
$(b)$ if $i=n$ then $(b')$ implies $\psi' = \top$, and otherwise
$\rho,i{+}1 \modelsLTL \psi_1' \wedge \psi_2'$.
\item 
Suppose $\psi = \psi_1\vee \psi_2$.
By assumption $\rho,i \modelsLTL \psi_1\vee \psi_2$, and hence 
$\rho,i \modelsLTL \psi_1$ or $\rho,i \modelsLTL \psi_2$.
We assume the former.
By the induction hypothesis, there is some
$(\inquotes{\psi_1'}, \varsigma_1) \in \delta (\inquotes{\psi_1})$ such that 
$(a')$ $\varsigma_1$ is consistent with step $i$ of $\rho$, and
$(b')$ $\rho,i{+}1 \modelsLTL \psi_1'$, or $i\,{=}\,n$ and $\psi_1' = \top$.
As $\delta$ is total (\lemref{delta total}), there must be some 
$(\inquotes{\psi_2'}, \varsigma_2) \in \delta (\inquotes{\psi_2})$ such that
$\varsigma_2$ is $\lambda$-consistent with step $i$ of $\rho$.
By definition of $\delta$,  we can choose 
$(\inquotes{\psi'},\varsigma)$ as 
$(\inquotes{\psi_1' \vee \psi_2'}, \varsigma_1 \cup \varsigma_2) \in 
\delta(\inquotes{\psi_1 \vee \psi_2})$.
Then
$(a)$ follows from $(a')$ and $\varsigma_2$ being $\lambda$-consistent with step $i$ of $\rho$, and
$(b)$ if $i\eqn n$ then $(b')$ implies $\psi_1'\eqn \top$, hence $\psi' \eqn \top$; otherwise $\rho,i{+}1 \modelsLTL \psi_1' \vee \psi_2'$.
\item
Suppose $\rho,i \modelsLTL \G \psi$, so $\rho,i \modelsLTL \psi$ and either 
(1) $i=n$, or 
(2) $\rho,i{+}1 \modelsLTL \G \psi$.
We have 
$\delta(\inquotes{\G \psi}) = \delta(\inquotes{\psi}) \wedge (\delta(\inquotes{\langle\cdot\rangle \G \psi}) \vee \delta_\last) = 
(\delta(\inquotes{\psi}) \wedge \delta(\inquotes{\langle\cdot\rangle \G \psi}))
\vee (\delta(\inquotes{\psi}) \wedge \delta_\last)$.
In either case, by the induction hypothesis there is some
$(\inquotes{\psi'}, \varsigma') \in \delta (\inquotes{\psi})$ such that
$(a')$ $\varsigma'$ is $\last$-consistent with step $i$ of $\rho$, and
$(b')$ $\rho,i{+}1 \modelsLTL \psi'$, or $i\,{=}\,n$ and $\psi' = \top$.

(1) Let
$(\inquotes{\psi_1},\varsigma_1)$ be
$(\inquotes{\psi' \wedge \top}, \varsigma' \cup \{ \last \}) \in 
\delta(\inquotes{\psi}) \wedge \delta_\last$. We have
$(a_1)$ $\varsigma_1$ is $\last$-consistent with step $i$ of $\rho$ because of $(a')$ and $i\eqn n$, and
$(b_1)$ $\rho,i{+}1 \modelsLTL \psi_1 = \top$ by $(b')$.

(2) Let 
$(\inquotes{\psi_2},\varsigma_2)$ be
$(\inquotes{\psi' \wedge \G \psi}, \varsigma' \cup \{ \neg \last \}) \in 
\delta(\inquotes{\psi}) \wedge \delta(\inquotes{\langle\cdot\rangle \G \psi})$.
Then
$(a_2)$ $\varsigma_2$ is $\last$-consistent with step $i$ of $\rho$ by $(a')$ and $i<n$, and
$(b_2)$ $\rho,i{+}1 \modelsLTL \psi_2 = \psi' \wedge \G \psi$, using $(b')$ and $\rho,i{+}1 \modelsLTL \G \psi$.
Thus the two cases can be combined as in the case for disjunction,
using $(a_1)$, $(b_1)$ and $(a_2)$, $(b_2)$.
\item The case for the $\U$ operator is similar.
\end{compactitem}
\noindent
($\Longleftarrow$)
Note that the assumptions exclude $\psi' = \bot$.
We apply induction on $\psi$, and use the definition of $\delta$ for each case.
\begin{compactitem}
\item 
If $\psi = \top$ then $(\inquotes{\psi'}, \varsigma)\in \delta(\inquotes{\psi})$ implies $\psi' = \top$, and 
$\rho, i \modelsLTL \top$ holds.
\item
If $\psi\,{=}\,K\in \Confs(\Phi)$, we must have
$\psi' = \top$ and $\varsigma =\{K\}$.
As $\alpha_{i}\models K(b_i)$ by $\last$-consistency, $\rho,i \modelsLTL K$.
\item 
Let $\psi = \X \chi$.
As $\psi'=\bot$ or $\rho, i{+}1 \modelsLTL \psi'$, 
by definition of $\delta$ the only possibility is
$\psi' = \inquotes{\chi}$ and $\varsigma = \{\neg \last\}$.
As $\varsigma$ is consistent with step $i$ of $\rho$ and $\neg \last \in \varsigma$,
we must have $i\,{<}\,n$, so $\rho, i{+}1 \modelsLTL \psi'$
and hence $\rho, i \modelsLTL \psi$ by \defref{ltl:semantics}.
\item 
Suppose $\psi = \Xw \chi$.
If $\psi'=\top$ then $i=n$ and $\rho, i \modelsLTL \psi$ holds by definition.
Otherwise, we can reason as in the case above.
\item
If $\psi = \psi_1 \wedge \psi_2$ then 
by $(\inquotes{\psi'}, \varsigma) \in \delta(\inquotes{\psi})$
and the definition of $\delta$ there are $\psi_1'$ and $\psi_2'$ such that
$(\inquotes{\psi_1'}, \varsigma_1')\in \delta(\inquotes{\psi_1})$ and $(\inquotes{\psi_2'}, \varsigma_2')\in \delta(\inquotes{\psi_2})$, and
$\psi' = \psi_1' \wedge \psi_2'$ and $\varsigma = \varsigma_1' \cup \varsigma_2'$.
Therefore, either $i=n$ and $\psi' = \psi_1' = \psi_2' = \top$, or
$i<n$ and $\rho, i{+}1 \modelsLTL \psi'$, which implies
$\rho, i{+}1 \modelsLTL \psi_1'$ and $\rho, i{+}1 \modelsLTL \psi_2'$.
In either case, 
$\rho, i \modelsLTL \psi_1$ and $\rho, i \modelsLTL \psi_2$ 
hold by the induction hypothesis,
so $\rho, i \modelsLTL \psi_1 \wedge \psi_2$.
\item
Similarly,
if $\psi = \psi_1 \vee \psi_2$ then there are $\psi_1'$ and $\psi_2'$ such that
$(\inquotes{\psi_1'}, \varsigma_1')\in \delta(\inquotes{\psi_1})$ and $(\inquotes{\psi_2'}, \varsigma_2')\in \delta(\inquotes{\psi_2})$,
$\psi' = \psi_1' \vee \psi_2'$ and $\varsigma = \varsigma_1' \cup \varsigma_2'$.
If $i=n$ and $\psi' = \top$, then $\psi_1' = \top $ or $\psi_2' = \top$.
If otherwise $i<n$ then $\rho, i{+}1 \modelsLTL \psi'$ implies
$\rho, i{+}1 \modelsLTL \psi_1'$ or $\rho, i{+}1 \modelsLTL \psi_2'$.
From the induction hypothesis we obtain in either case
$\rho, i \modelsLTL \psi_1$ or $\rho, i \modelsLTL \psi_2$,
so $\rho, i \modelsLTL \psi_1 \vee \psi_2$.
\item
If $\psi = \G \chi$ then we can distinguish two cases:

(1) There are $\psi_1$ and $\psi_2$ such that
$(\inquotes{\psi_1'}, \varsigma_1)\in \delta(\inquotes{\chi})$, $(\inquotes{\psi_2'}, \varsigma_2)\in \delta_{\last}$, 
$\psi' = \psi_1' \wedge \psi_2'$ and $\varsigma = \varsigma_1 \cup \varsigma_2$.
As $(\inquotes{\psi_2'}, \varsigma_2)\in \delta_{\last}$, we must have $\psi_2' = \top$ and
$\varsigma_2 = \{\last\}$ (otherwise, we would have $\psi'=\bot$).
By consistency, $\last \in \varsigma$ implies $i = n$, so $\psi' = \top$
by assumption and therefore we must have $\psi_1' = \top$.
From the induction hypothesis and
$(\inquotes{\psi_1'}, \varsigma_1)\in \delta(\inquotes{\chi})$ 
we conclude $\rho, i \modelsLTL \chi$, so by \defref{ltl:semantics}
$\rho, i \modelsLTL \G \chi$.

(2)
There are $\psi_1'$ and $\psi_2'$ such that
$(\inquotes{\psi_1'}, \varsigma_1)\in \delta(\inquotes{\chi})$, 
$(\inquotes{\psi_2'}, \varsigma_2)\in \delta(\inquotes{\langle\cdot\rangle\G \chi})$,
$\psi' = \psi_1' \wedge \psi_2'$ and $\varsigma = \varsigma_1 \cup \varsigma_2$.
We have $\neg\last \in \varsigma_2$, so by consistency $i<n$.
As $\rho, i{+}1 \modelsLTL \psi' = \psi_1' \wedge \psi_2'$, by \defref{ltl:semantics}  
$\rho, i{+}1 \modelsLTL \psi_1'$ and $\rho, i{+}1 \modelsLTL \psi_2'$.
By the induction hypothesis, 
$(\inquotes{\psi_1'}, \varsigma_1)\in \delta(\inquotes{\chi})$ and
$\rho, i{+}1 \modelsLTL \psi_1'$
imply $\rho, i \modelsLTL \chi$.
Moreover,
$(\inquotes{\psi_2'}, \varsigma_2)\in \delta(\inquotes{\langle\cdot\rangle\G \chi})$
and 
$\rho, i{+}1 \modelsLTL \psi_2'$ imply 
$\psi_2' = \G \chi$ by \defref{ltl:semantics}, so we have
$\rho, i \modelsLTL \langle\cdot\rangle\G \chi$.
Thus $\rho, i \modelsLTL \G \chi$.
\item
The case for $\U$ is similar. 
\qed
\end{compactitem}
\end{proof}

\noindent
Let a word $\varsigma_0 \varsigma_1 \cdots \varsigma_{n}\in \Sigma'^*$ be \emph{well-formed}
if $\last \not\in \varsigma_i$ for all $0\,{\leq}\,i\,{\leq}\,n$, and $\neg \last\,{\not\in}\,\varsigma_n$.

\begin{lemma}
\label{lem:deltastar}
A well-formed word $w\in \Sigma'^*$ that is consistent with a run $\rho$
satisfies $\inquotes{\top} \in \delta^*(\inquotes{\psi},w)$
iff
$\rho,0 \modelsLTL \psi$.
\end{lemma}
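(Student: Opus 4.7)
The plan is to prove the lemma by induction on the length $n$ of the run $\rho$, which by consistency coincides with $|w|-1$. The key ingredient is Lemma~\ref{lem:delta}, which is exactly the one-step analogue of the present claim; everything else amounts to iterating it along $w$ and $\rho$ in lockstep, while being careful about the markers $\last$ and $\neg\last$.

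In the base case $n=0$, the run reduces to the single configuration $(b_0,\alpha_0)$ and the word to a single letter $\varsigma_0$. Well-formedness gives $\neg\last\notin\varsigma_0$ and consistency gives $\alpha_0\models\varsigma_0(b_0)$, so $\varsigma_0$ is $\last$-consistent with step $0$ of $\rho$. Moreover $\inquotes{\top}\in\delta^*(\inquotes{\psi},\varsigma_0)$ reduces to $(\inquotes{\top},\varsigma_0)\in\delta(\inquotes{\psi})$, and Lemma~\ref{lem:delta} instantiated at $i=n=0$ with $\psi'=\top$ yields the equivalence with $\rho,0\modelsLTL\psi$. For the inductive step, I would factor $w=\varsigma_0\cdot w'$, let $\rho'$ denote the suffix of $\rho$ starting at step $1$, and note that well-formedness of $w$ implies $\last\notin\varsigma_0$, well-formedness of $w'$, and (together with consistency) $\last$-consistency of $\varsigma_0$ with step $0$ of $\rho$ as well as consistency of $w'$ with $\rho'$. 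Unfolding $\delta^*$ gives $\inquotes{\top}\in\delta^*(\inquotes{\psi},w)$ iff there is some $(\inquotes{\psi'},\varsigma_0)\in\delta(\inquotes{\psi})$ with $\inquotes{\top}\in\delta^*(\inquotes{\psi'},w')$; combining this with the induction hypothesis on $\psi'$, $\rho'$, $w'$ (which gives $\rho',0\modelsLTL\psi'$, i.e.\ $\rho,1\modelsLTL\psi'$) lets me apply Lemma~\ref{lem:delta} in either direction to bridge $\rho,0\modelsLTL\psi$ and the existence of the required tuple in $\delta(\inquotes{\psi})$.

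The main obstacle I anticipate is the ($\Longleftarrow$) direction of the inductive step: Lemma~\ref{lem:delta} yields \emph{some} tuple $(\inquotes{\psi'},\varsigma)\in\delta(\inquotes{\psi})$ that witnesses $\rho,0\modelsLTL\psi$, but the label $\varsigma$ it produces need not literally match the prescribed letter $\varsigma_0$ of $w$. To resolve this I would use that $w$ is consistent with $\rho$ and look carefully at the shape of the tuples generated by the clauses of $\delta$: the only configuration maps that appear as labels come syntactically from $\psi$, and any such map $K$ that is $\last$-consistent at step $0$ is by definition satisfied by $\alpha_0$, so adjoining it to (or removing it from) $\varsigma_0$ preserves consistency. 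In other words, one shows that whenever a $\last$-consistent witness exists, one can also be chosen whose non-marker component is exactly the one prescribed by $\varsigma_0$, and whose marker agrees with $\neg\last$ (since $n\geq 1$). The ($\Longrightarrow$) direction is then routine, as $\varsigma_0$ itself is the $\last$-consistent witness demanded by Lemma~\ref{lem:delta}.
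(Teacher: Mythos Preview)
Your $(\Longrightarrow)$ direction is fine and matches the paper's argument; the difference between induction on $n$ and the paper's induction on $n-i$ is cosmetic.

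The genuine problem is in your $(\Longleftarrow)$ direction. You read the lemma as a universal statement over the given word $w$, and then correctly notice the obstacle: the tuple $(\inquotes{\psi'},\varsigma)$ that Lemma~\ref{lem:delta} produces need not have $\varsigma=\varsigma_0$. Your proposed fix---massaging the witness so that its non-marker part coincides with the prescribed $\varsigma_0$---does not work. Take $\psi=K$ for a configuration map $K$, a one-step run $\rho=(b_0,\alpha_0)$ with $\alpha_0\models K(b_0)$, and the well-formed, consistent word $w=\langle\emptyset\rangle$. Then $\rho,0\modelsLTL K$, but $\delta(\inquotes{K})=\{(\inquotes{\top},\{K\}),(\inquotes{\bot},\emptyset)\}$, so $(\inquotes{\top},\emptyset)\notin\delta(\inquotes{K})$ and there is no way to realign the label. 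The universal reading of $(\Longleftarrow)$ is simply false.

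What the paper actually proves (and what suffices for Lemma~\ref{lem:nfa}) is the existential reading: if $\rho,0\modelsLTL\psi$ then \emph{some} well-formed word $w$ consistent with $\rho$ satisfies $\inquotes{\top}\in\delta^*(\inquotes{\psi},w)$. The paper constructs such a $w$ letter by letter, by induction on $n-i$: at each step it takes the $\varsigma$ handed back by Lemma~\ref{lem:delta} as the next letter, rather than trying to match a pre-given one. Your induction scheme would work too, provided you drop the fixed $w$ in the $(\Longleftarrow)$ direction and instead build it from the $\varsigma$'s that Lemma~\ref{lem:delta} supplies.
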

\begin{proof}
($\Longrightarrow$)
Let $w = \varsigma_0 \varsigma_1 \cdots \varsigma_{n}$ and
$\chi_0,\chi_1,\dots, \chi_{n+1}$
be the sequence of formulas witnessing
$\inquotes{\top} \in \delta^*(\inquotes{\psi},w)$,
so that $\chi_0 = \psi$, $\chi_{n+1} = \top$,
and $(\inquotes{\chi_{i{+}1}}, \varsigma_{i}) \in \delta(\inquotes{\chi_i})$
for all $i$, $0\leq i \leq n$.
As $w$ is well-formed and consistent with $\rho$, 
by definition $\varsigma_i$ is $\lambda$-consistent with 
$\rho$ at $i$ for all $i$, $0\leq i \leq n$.
In order to show that $\rho,0 \modelsLTL \psi$ holds, we verify that
$\rho, i \modelsLTL \chi_i$
for all $i$, $0\,{\leq}\,i\,{\leq}\,n$, by induction on $n-i$.
In the base case $i\,{=}\,n$.
We have $\chi_{n+1} = \top$ and 
$(\inquotes{\chi_{n{+}1}}, \varsigma_{n}) \in \delta(\inquotes{\chi_n})$,
and from \lemref{delta}\:($\Longrightarrow$) it follows that $\rho, n \modelsLTL \chi_n$.
If $i < n$, we assume by the induction hypothesis that $\rho, i{+}1 \modelsLTL \chi_{i{+}1}$.
We have
$(\inquotes{\chi_{i{+}1}}, \varsigma_{i}) \in \delta(\inquotes{\chi_i})$, so
$\rho, i \modelsLTL \chi_i$ follows again from \lemref{delta}\:($\Longrightarrow$), which concludes the induction step.
Finally, the claim follows for the case $i\,{=}\,0$ because 
$\chi_0 = \psi$.

($\Longleftarrow$)
Let $\rho$ be of the form \eqref{therun}.
We show that for all $i$, $0 \leq i \leq n$,
and every formula $\chi$,
if $\rho, i \modelsLTL \chi$ then 
there is a word $w_i = \varsigma_i \varsigma_{i+1} \cdots \varsigma_{n}$
of length $n-i+1$
such that
$\inquotes{\top} \in \delta^*(\inquotes{\chi},w_i)$,
and 
$\varsigma_j$ is $\lambda$-consistent with step $j$ of $\rho$ for all $j$, $i\leq j \leq n$.
The proof of is by induction on $n-i$.

In the base case where $i=n$, we assume that $\rho,n \modelsLTL \chi$. 
By \lemref{delta}\:($\Longleftarrow$) there is some $\varsigma_n$ such that
$(\inquotes{\top}, \varsigma_n)\in \delta(\inquotes{\chi})$,
and $\varsigma_n$ is $\lambda$-consistent with step $n$ of $\rho$.
For the induction step, assume $i\,{<}\,n$ and $\rho, i \modelsLTL \chi$.
By \lemref{delta}\:($\Longleftarrow$) there is some
$(\inquotes{\chi'}, \varsigma_i) \in \delta(\inquotes{\chi})$ such that
$\rho, i{+}1 \modelsLTL \chi'$, and moreover
$\varsigma_i$ is $\lambda$-consistent with $\rho$ at step $i$.
By the induction hypothesis, 
there is a word $w_{i+1} = \varsigma_{i+1} \cdots \varsigma_{n}$
such that
$\inquotes{\top} \in \delta^*(\inquotes{\chi'},w_{i+1})$,
and 
$\varsigma_j$ is $\lambda$-consistent with $\rho$ at instant $j$ for all $j$, $i\,{<}\,j\,{\leq}\, n$.
We can define $w_{i} = \varsigma_i\varsigma_{i+1} \cdots \varsigma_{n}$,
which satisfies
$\inquotes{\top} \in \delta^*(\inquotes{\chi},w_{i})$ and 
$\varsigma_j$ is $\lambda$-consistent with $\rho$ at $j$ for all $j$, $i\,{\leq}\,j\,{\leq}\, n$,
so the induction step works.

By assumption, $\rho,0 \modelsLTL \psi$  holds.
From the case $i = 0$ of the above statement, we obtain a word $w=w_0$
such that
$\inquotes{\top} \in \delta^*(\inquotes{\psi},w)$ and $w$ is
$\lambda$-consistent with all steps of $\rho$, i.e., $w$ is well-formed and consistent with $\rho$.
\qed
\end{proof}

We next show some simple properties that will be useful to show correctness of the 
automaton without $\last$.

\begin{lemma}
\label{lem:delta last}
Let $\psi \in \LTLconf$ and
$(\chi,\varsigma) \in \delta(\inquotes{\psi})$.
\begin{inparaenum}
\item[(1)] If $\last \inn \varsigma$ and $\neg \last \not\in\varsigma$
then $\chi \eqn \top$ or $\chi = \bot$.
\item[(2)] Suppose $\neg \last \inn \varsigma$, $\last \not\in\varsigma$,
and $\chi \eqn \top$, and $\varsigma$ is consistent with step $i$ of run $\rho$.
Then there is some $(\top,\varsigma') \in \delta(\inquotes{\psi})$
such that $\neg \last\not\in \varsigma'$ and $\varsigma'$ is consistent with step $i$ of $\rho$ as well.
\item[(3)] If $\chi$ is not $\top$ or $\bot$ then
$\varsigma$ has $\last$ or $\neg \last$.
\end{inparaenum}
\end{lemma}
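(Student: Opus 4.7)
I plan to prove all three parts simultaneously by structural induction on $\psi$, exploiting the recursive definition of $\delta$ together with the boolean simplification rules $\top \wedge \chi \equiv \chi$, $\top \vee \chi \equiv \top$, $\bot \wedge \chi \equiv \bot$, and so on. In each inductive step, an element $(\chi, \varsigma) \in \delta(\inquotes{\psi})$ decomposes as a boolean combination of contributions $(\chi_k, \varsigma_k)$ from the immediate subformulas, so the appropriate induction hypothesis applies to each piece. A key cross-link is that part (3), used in contrapositive form ``$\{\last, \neg\last\} \cap \varsigma = \emptyset$ forces $\chi \in \{\top, \bot\}$'', feeds into the proof of part (1) to handle the subcase where one summand of a conjunction or disjunction carries no temporal marker at all.

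For part (1), the base cases $\top, \bot, K$ are vacuous (no $\last$ appears in any $\varsigma$), while $\X \psi$ and $\Xw \psi$ have a single option with $\last \in \varsigma$, whose $\chi$ is $\bot$ or $\top$ respectively. In the $\wedge$ and $\vee$ cases, decomposing $\varsigma = \varsigma_1 \cup \varsigma_2$, one factor $\varsigma_k$ bears $\last$ (apply IH(1)) while the other carries neither marker (apply the contrapositive of IH(3)), so both $\chi_k$ are trivial and the boolean simplification preserves triviality. The $\G$ and $\U$ cases follow similarly, exploiting that once $\neg\last$ is excluded, the sets $\delta(\inquotes{\X \G \psi}) \vee \delta_\last$ and $\delta(\inquotes{\X (\psi_1 \U \psi_2)})$ offer only strongly constrained options --- $(\top, \{\last\})$ and $(\bot, \{\last\})$ respectively --- so $\chi$ collapses to a single factor whose triviality again follows from IH(1) or IH(3). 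Part (3) itself is derived from the observation that every option of $\delta$ arising from $\X$, $\Xw$, or from the $\X$-based factor of $\G$ or $\U$, necessarily carries a temporal marker in its second component; hence when $\varsigma$ is marker-free, these operators cannot contribute, and $\chi$ must be a boolean combination of trivial subformulas that itself simplifies to $\top$ or $\bot$.

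For part (2), the base cases together with $\X$, $\Xw$, and $\G$ are vacuous: since $\top$ does not occur as a formula atom in the grammar, no element $(\top, \varsigma) \in \delta(\inquotes{\psi})$ for these $\psi$ can simultaneously satisfy $\chi = \top$, $\neg\last \in \varsigma$, and $\last \not\in \varsigma$ --- indeed for $\G$ the only $\top$-yielding options of $\delta(\inquotes{\X \G \psi}) \vee \delta_\last$ all contain $\last$. In the $\wedge$ case, the simplification forces $\chi_1 = \chi_2 = \top$, and IH(2) replaces each $\varsigma_k$ carrying $\neg\last$ by a consistent, $\neg\last$-free alternative $\varsigma_k'$; their union gives $\varsigma'$. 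The delicate cases are $\vee$ and $\U$, where $\chi = \top$ necessarily arises from only one summand --- say $\chi_a = \top$ --- while the other contribution $(\chi_b, \varsigma_b)$ may be non-trivial and must be replaced by a $\neg\last$-free alternative that stays consistent with step $i$ of $\rho$.

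The main obstacle is ensuring that such a replacement always exists. I resolve it via an auxiliary observation $(\star)$, provable by a straightforward side induction on $\psi'$: for every $\psi' \in \LTLconf$, the set $\delta(\inquotes{\psi'})$ contains some $(\chi, \varsigma)$ with $\neg\last \not\in \varsigma$ and $\varsigma$ trivially consistent in every configuration. Concrete witnesses are $(\bot, \emptyset)$ for the base cases, $(\bot, \{\last\})$ for $\X \psi'$, $(\top, \{\last\})$ for $\Xw \psi'$, and the corresponding conjunctions or disjunctions of these propagate through $\wedge$, $\vee$, $\G$, and $\U$. Applying $(\star)$ to the off-summand in the $\vee$ case, and to $\delta(\inquotes{\psi_1})$ paired with $(\bot, \{\last\}) \in \delta(\inquotes{\X(\psi_1 \U \psi_2)})$ in the $\U$ case, combined with IH(2) on the $\top$-summand and the simplification $\top \vee \chi \equiv \top$, delivers the sought $(\top, \varsigma')$ with $\neg\last \not\in \varsigma'$ and $\varsigma'$ consistent with step $i$ of $\rho$.
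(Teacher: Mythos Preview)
Your proof is correct and follows essentially the same strategy as the paper: a simultaneous structural induction on $\psi$, using the contrapositive of (3) inside the proof of (1) for the $\wedge$/$\vee$ cases, and treating $\G$ and $\U$ by unfolding their definition into the $\wedge$/$\vee$ machinery. The one notable tactical difference is that where the paper appeals to its Lemma~\ref{lem:delta total} in the $\vee$ case of (2) to supply a $\neg\last$-free, step-consistent replacement for the off-summand, you instead prove your auxiliary $(\star)$ directly; your $(\star)$ is in fact a cleaner fit for what is needed here, since $\lambda$-consistency as stated in Lemma~\ref{lem:delta total} only guarantees $\neg\last\notin\varsigma$ at the last step $i=n$, whereas $(\star)$ gives it uniformly.
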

\begin{proof}
All three statements are shown simultaneously by induction on $\psi$.
\begin{compactitem}
\item
If $\psi$ is $\top$, $\bot$, or an atom $K$ then $\chi$ is $\top$ or $\bot$, so (1) and (3) hold, and $\neg \last \not \in \varsigma$, so also (2) is satisfied.
\item
If $\psi = \X \psi'$ then 
$\delta(\inquotes{\psi}) = \{(\inquotes{\psi'},\{\neg \last\}), (\inquotes{\bot}, \{\last\})\}$.
(1) is satisfied by $(\inquotes{\bot}, \{\last\})$,
(2) holds because $\psi'$ cannot be $\top$ since $\top$ does not occur in $\LTLconf$, and
(3) is satisfied anyway.
\item
If $\psi = \Xw \psi'$ then 
$\delta(\inquotes{\psi}) = \{(\inquotes{\psi'},\{\neg \last\}), (\inquotes{\top}, \{\last\})\}$.
(1) is satisfied by $(\inquotes{\top}, \{\last\})$,
(2) holds because $\psi'$ cannot be $\top$ since $\top$ does not occur in $\LTLconf$, and
(3) is satisfied anyway.
\item
If $\psi = \psi_1 \vee \psi_2$ then we must have $\chi = \chi_1\vee \chi_2$
such that $(\chi_i,\varsigma_i) \in \delta(\inquotes{\psi_i})$ for both 
$i\in \{1,2\}$, and $\varsigma = \varsigma_1 \cup \varsigma_2$.

(1)
Suppose $\last \inn \varsigma$ and $\neg \last \not\in \varsigma$.
First, assume $\last \in \varsigma_1$, 
$\neg \last \not\in \varsigma_1$,
and $\neg \last \not\in \varsigma_2$.
By the induction hypothesis (1), $\chi_1$ is either $\top$ or $\bot$.
In the former case, $\chi = \top$, so the claim holds.
Otherwise, $\chi = \chi_2$.
Then, if $\last \in \varsigma_2$ we can again use the induction hypothesis
to conclude that $\chi = \chi_2$ is $\top$ or $\bot$.
Otherwise, we have $\last \not\in \varsigma_2$ and
$\neg \last \not\in \varsigma_2$, so
$\chi_2$ must be $\top$ or $\bot$
by the induction hypothesis (3).

(2)
Suppose $\neg \last \inn \varsigma$, $\last \not\in \varsigma$, and $\chi=\top$,
and $\varsigma$ is consistent with step $i$ of $\rho$.
W.l.o.g., we can assume $\neg \last \in \varsigma_1$, 
$\last \not\in \varsigma_1$, $\last \not\in \varsigma_2$,
and $\chi_1 = \top$.
By the induction hypothesis (2) applied to $\chi_1$, there is some
$(\top,\varsigma_1') \in \delta(\inquotes{\psi_1})$
such that $\neg \last \not\in\varsigma_1'$ and $\varsigma_1'$ is consistent with step $i$ of $\rho$.
By \lemref{delta total}, there is some 
$(\chi_2',\varsigma_2') \in \delta(\inquotes{\psi_2})$ such that
$\varsigma_2'$ is consistent with step $i$ of $\rho$, and such that $\neg \last \not\in \varsigma_2'$.
Thus $(\top,\varsigma') \in \delta(\inquotes{\psi})$
with $\varsigma' = \varsigma_1'\cup \varsigma_2'$ satisfies the claim.

(3)
If $\chi$ is not $\top$ or $\bot$ then at least one of $\chi_1$ or
$\chi_2$ is not $\top$ or $\bot$, so by the induction hypothesis (3),
$\varsigma_1$ or $\varsigma_2$ contains $\last$ or $\neg \last$, hence so
does $\varsigma$.
\item
If $\psi = \psi_1 \wedge \psi_2$ then we must have $\chi = \chi_1\wedge \chi_2$
such that $(\chi_i,\varsigma_i) \in \delta(\inquotes{\psi_i})$ for both 
$i\in \{1,2\}$.

(1)
Suppose $\last \in \varsigma$ and $\neg \last \not\in \varsigma$.
W.l.o.g., we can assume $\last \in \varsigma_1$, 
$\neg \last \not\in \varsigma_1$,
and $\neg \last \not\in \varsigma_2$.
By the induction hypothesis (1), $\chi_1$ is either $\top$ or $\bot$.
In the latter case, $\chi = \bot$, so the claim holds.
Otherwise, $\chi = \chi_2$, and as by assumption
$\neg \last \not\in \varsigma_2$, by the induction hypothesis (3), 
$\chi_2$ must be $\top$ or $\bot$.

(2)
Suppose $\neg \last \in \varsigma$, $\last \not\in \varsigma$, and $\chi=\top$,
and $\varsigma$ is consistent with step $i$ of $\rho$.
We can assume  $\last \not\in \varsigma_1$, $\last \not\in \varsigma_2$ and
$\chi_1 = \chi_2=\top$.
We must have $\last \in \varsigma_1$, $\last \in \varsigma_2$, or both.
However, for each $i\in \{1,2\}$ such that $\last \in \varsigma_i$, by
the induction hypothesis (2) there is some
$(\top,\varsigma_i') \in \delta(\inquotes{\psi_i})$
such that $\last \not\in\varsigma_i'$ and $\varsigma_i'$ is consistent with step $i$ of $\rho$.
If $\last \not\in \varsigma_i$, set $\varsigma_i' = \varsigma_i$.
Hence $(\top,\varsigma_1'\cup \varsigma_2') \in \delta(\inquotes{\psi})$ such that
$\varsigma_1'\cup \varsigma_2'$ is consistent with step $i$ of $\rho$ and 
$\neg \last\not\in \varsigma_1'\cup \varsigma_2'$.

(3)
If $\chi$ is not $\top$ or $\bot$ then at least one of $\chi_1$ or
$\chi_2$ is not $\top$ or $\bot$, so by the induction hypothesis (3),
$\varsigma_1$ or $\varsigma_2$ contains $\last$ or $\neg \last$, hence so
does $\varsigma$.
\item For $\psi = \G \psi'$, note that
$\delta_\lambda = \{(\inquotes{\top},\{ \lambda\}),(\inquotes{\bot},\{\neg\lambda\})\}$ satisfies the properties.
The result then follows from the cases for $\vee$ and $\wedge$.
\item All other cases follow from the cases for $\vee$ and $\wedge$.
\qed
\end{compactitem}
\end{proof}


\begin{numberedlemma}{\ref{lem:nfa}}
$\NFApsi$ accepts a word that is consistent with a run $\rho$ iff 
$\rho,0 \modelsLTL \psi$.
\end{numberedlemma}
\begin{proof}
($\Longrightarrow$)
Let $w = \varsigma_0 \varsigma_1 \cdots \varsigma_{n}$ be accepted, and
$q_0 \goto{\varsigma_0} q_1 \goto{\varsigma_1} \dots \goto{\varsigma_{n}} q_{n+1}$
be the respective accepting run of $\NFApsi$.
By \defref{NFA}, there are $\varsigma_i'$, 
such that $\varsigma_i = \varsigma_i' \setminus \{\last, \neg \last\}$
and $\{\last, \neg \last\} \not\subseteq \varsigma_i'$
for all $i$, $0\leq i \leq n$.
Let $w'$ be the word $w' = \varsigma_0' \varsigma_1' \cdots \varsigma_{n}'$.
Then $w'$ is consistent with $\rho$ because so is $w$.
Moreover,
by \lemref{delta last} (2) we can choose $\varsigma_{n}'$ such that
$\neg \last \not\in \varsigma_{n}'$, and $\varsigma_{n}'$ is consistent with $\rho$ at $n$.
Then $w'$ is well-formed: indeed,
since edges to $\inquotes{\top}$ labeled $\last$ are redirected to $q_e$
and $\inquotes{\bot}$ cannot occur in the accepting sequence, by \lemref{delta last} (1)
we have $\last\not\in \varsigma_i'$ for $i<n$.
Thus by \defref{NFA} we have
$\inquotes{\top} \in \delta^*(\inquotes{\psi},w')$.
According to \lemref{deltastar}, $\rho \modelsLTL \psi$.

($\Longleftarrow$)
If $\rho \modelsLTL \psi$ then by \lemref{deltastar} there is a 
well-formed word 
$w = \varsigma_0 \varsigma_1 \cdots \varsigma_{n}$ 
that is consistent with $\rho$ such that $\inquotes{\top} \in \delta^*(\inquotes{\psi},w)$.
As $w$ is well-formed, no $\varsigma_i$ has both $\last$
and $\neg \last$. Hence all $\delta$-steps are reflected by transitions in 
$\NFApsi$. As $\inquotes{\psi}$ is the initial state,
by \defref{NFA}, there is an accepting run
in $\NFApsi$ to $\inquotes{\top}$ or $q_e$.
\qed
\end{proof}

\end{document}